\newcommand{\tikzxshift}[1]{\tikzset{xshift=#1}}
\newcommand{\tikzyshift}[1]{\tikzset{yshift=#1}}
\newcommand{\startdate}[1]{}
\newcommand{\head}[1]{\textbf{#1.}}
\newcommand{\iitv}[1]{[\![#1]\!]}
\newcommand{\es}{\varnothing}
\newcommand{\subs}{\subseteq}
\theoremstyle{plain}
\newtheorem{thm}{\textsf{Theorem}}
\crefname{thm}{theorem}{theorems}
\newtheorem{prop}[thm]{\textsf{Proposition}}
\crefname{prop}{proposition}{propositions}
\crefname{prop}{property}{properties}
\newtheorem{cor}[thm]{\textsf{Corollary}}
\crefname{cor}{corollary}{corollaries}
\newtheorem{lem}[thm]{\textsf{Lemma}}
\crefname{lem}{lemma}{lemmas}
\Crefname{lem}{Lemma}{Lemmas}
\crefname{pblm}{problem}{problems}
\theoremstyle{remark}
\newtheorem{expl}[thm]{\textsf{Example}}
\crefname{req}{remark}{remarks}
\theoremstyle{definition}
\newtheorem{defn}[thm]{\textsf{Definition}}
\newcommand{\supp}{\operatorname{supp}}
\DeclareMathOperator*{\argmin}{arg\,min}
\newcommand{\Mm}{\mathcal{M}}
\newcommand{\N}{\mathbb{N}}
\newcommand{\DD}{\mathbb{D}}
\newcommand{\FF}{\mathbb{F}}
\newcommand{\Dd}{\mathcal{D}}
\newcommand{\Aa}{\mathcal{A}}
\newcommand{\Tt}{\mathcal{T}}
\newcommand{\Ff}{\mathcal{F}}
\newcommand{\Cc}{\mathcal{C}}
\newcommand{\Gg}{\mathcal{G}}
\newcommand{\troot}{\operatorname{root}}
\newcommand{\npreds}{\operatorname{npreds}}
\newcommand{\depth}{\operatorname{depth}}
\newcommand{\bott}[1]{\textbf{#1}}
\newcommand{\tildeop}[1]{\widetilde{#1}}
\newcommand{\meet}{\wedge}
\newcommand{\bigmeet}{\bigwedge}
\newcommand{\bigjoin}{\bigvee}
\providecommand{\keywords}[1]
{
  \textbf{\textit{Keywords---}} #1
}
\title{Factorisation in the semiring of finite dynamical
  systems}
\date{\today}
\author{\'Emile Naquin\thanks{\'Ecole Normale Sup\'erieure de Lyon, France. Email: \texttt{emile.touileb@ens-lyon.fr}}, 
Maximilien Gadouleau\thanks{Department of Computer Science, Durham University, UK. Email: \texttt{m.r.gadouleau@durham.ac.uk}}
}
\begin{document}
\maketitle
%\maketodo

\begin{abstract}
Finite dynamical systems (FDSs) are commonly used to model systems with a finite number of states that evolve deterministically and at discrete time steps. Considered up to isomorphism, those correspond to functional graphs. As such, FDSs have a sum and product operation, which correspond to the direct sum and direct product of their respective graphs; the collection of FDSs endowed with these operations then forms a semiring. The algebraic structure of the product of FDSs is particularly interesting. For instance, an FDS can be factorised if and only if it is composed of two sub-systems running in parallel. In this work, we further the understanding of the factorisation, division, and root finding problems for FDSs. Firstly, an FDS $A$ is cancellative if one can divide by it unambiguously, i.e. $AX = AY$ implies $X = Y$. We prove that an FDS $A$ is cancellative if and only if it has a fixpoint. Secondly, we prove that if an FDS $A$ has a $k$-th root (i.e. $B$ such that $B^k = A$), then it is unique. Thirdly, unlike integers, the monoid of FDS product does not have unique factorisation into irreducibles. We instead exhibit a large class of monoids of FDSs with unique factorisation. To obtain our main results, we introduce the unrolling of an FDS, which can be viewed as a space-time expansion of the system. This allows us to work with (possibly infinite) trees, where the product is easier to handle than its counterpart for FDSs.
\end{abstract}

\keywords{finite dynamical systems; factorisation; cancellative elements; trees; graph direct product}

\section{Introduction}

Finite dynamical systems are commonly used to model systems with a finite number of states that
evolve deterministically and at discrete time steps. Multiple models have been proposed for various settings, such as Boolean networks \cite{BoolNets, Goles90}, reaction systems \cite{ReactionSystems}, or sandpile models \cite{BTW87}, with applications to biology \cite{Tho73, TD90, BCRCGD13}, chemistry \cite{ReactionSystems}, or information theory \cite{GR11,GRR15}.

The dynamics of an FDS are easily described via its graph, which consists of a collection of cycles containing the periodic states, to which are attached tree-like structures containing the transient states. As such, two families of FDSs are of particular interest: permutations only have disjoint cycles in their graphs, while the so-called dendrons, where all states eventually converge towards the same fixed point, only have a tree in their graphs. Therefore, any FDS can be viewed as a collection of dendrons attached to a given permutation.

Given two FDSs $A$ and $B$, we can either \textit{add} them (that is, create a system
that behave like $A$ when it starts in a state of $A$, and like $B$
when it starts in a state of $B$) or \textit{multiply} them (that is,
create a system that corresponds to $A$ and $B$ evolving in parallel). Thus, the set $\DD$ of FDSs, endowed with the sum and product above, forms a semiring.

Since the introduction of the semiring of finite
dynamical systems (FDSs) in \cite{PolEqFDS} as an abstract way of
studying FDSs, some research has been devoted to understand more
thoroughly the multiplicative structure of this semiring \cite{Dor17, ComposingBehaviors, Couturier, GE20}. 
We can highlight three important problems related to the multiplicative structure of $\DD$.
\begin{enumerate}
    \item Perhaps the most obvious problem is factorisation: given an FDS $C$,
can we find two non-trivial FDSs $A$ and $B$ (with fewer states than $C$) such that $C=A\times B$? This corresponds to whether the system modelled by $C$ is actually composed of two independent parts working in parallel.

In \cite{Dor17, ComposingBehaviors}, it is shown that the answer is usually negative: the proportion of reducible FDSs of size $n$ vanishes when $n\rightarrow\infty$. Moreover, unlike for integers, the semiring $\DD$ does not have unique factorisation into irreducible elements. Worse yet, this is true when we restrict ourselves to permutations or to dendrons. This adds another layer of difficulty for problems related to factorisation in the semiring of FDSs.

    \item Another important problem is division: given $C$ and $A$ such that $C = AB$ for some $B$, can we find $B$? Or in other words, if $C$ is indeed composed of two parts, and we know one part, what is the other? 
    This problem is particularly interesting, as the FDS $B$ may not be unique: there exist many examples of FDSs $A,B,D$ such that $AB = AD$. 

    \item The third problem is $k$-th root: given an FDS $A$ and an integer $k$, is there $B$ such that $B^k = A$, and how many such roots exist? Until now, very little is known about this problem; for instance there was no result asserting that the solution $B$ should be unique.
\end{enumerate}

In this paper, we establish important connections between FDSs and infinite, periodic trees. In particular, we introduce the unrolling of an FDS, which can be viewed as a space-time expansion of the system. The unrolling preserves all the information about the transient dynamics of an FDS, and preserves the product operation. However, the product on trees (and in particular, on unrollings of FDSs) is much better behaved than its counterpart for FDSs and hence allows us to prove our main results.

This paper makes four main contributions towards the understanding of the three problems listed above.
\begin{enumerate}
    \item \label{itemConnected}
    An FDS is connected if its graph is connected; in other words, it has only one periodic cycle. We first prove a fundamental property of connected FDSs. For any FDS $A$, if $X$ and $Y$ are connected and $AX = AY$, then $X = Y$. Intuitively, this means that division is unambiguous when we know the quotient is connected.

    \item \label{itemCancellative}
    Intuitively, a cancellative FDS is one those that can be unambiguously divided by. Formally, $A$ is cancellative if $AB = AC$ implies $B = C$. Our first and major result is the characterisation of cancellative FDSs: they are exactly those with a fixpoint. This result is close that Theorem 8 of \cite{LovaszCancellation1971}, but not equivalent, since Lovász's paper studies cancellativity in the semiring of digraphs (and thus, there could be an FDS that is cancellative on $\DD$ but not as a general digraph).
    
    \item \label{itemAlgorithm}
    Our proof methods involve working with (possibly infinite) trees, and going back and forth between FDSs and trees. As a bi-product, we obtain an algorithm for division of dendrons. That is, given two dendrons $A$ and $B$, the algorithm determines the dendron $C$ such that $A = BC$ or returns a failure if no such dendron exist. It is easily shown that this algorithm runs in time polynomial in the size of the input.
    
    \item \label{itemRoots}
    Our result on cancellative FDSs has an important consequence: many polynomials in $\DD[X]$ are injective. We then prove that the polynomial $P(X) = X^k$ is injective, i.e. if an FDS has a $k$-th root then it is unique.
    
    \item \label{itemLDK}
    Throughout the paper, we investigate the structure of division and factorisation in dendrons. We further this investigation by exhibiting large monoids of dendrons with unique factorisation.
\end{enumerate}

While writing up this paper, we have discovered the related paper \cite{DFPR22}. This work and \cite{DFPR22} have two main similarities: both have independently introduced the unrolling construction, and both have proved the cancellative nature of the product on infinite trees (\Cref{InfTreeDiv} in this work, Theorem 3.3 in \cite{DFPR22}). However, we would like to stress the significant differences between these two papers. First, the respective proofs of the result mentioned above are completely different: ours is based on a lexicographic order on trees, while theirs is based on counting tree homomorphisms. Second, and more importantly, both papers consider completely different problems about FDSs. As such, the last four main contributions of this work (items \ref{itemCancellative} to \ref{itemLDK} in the list above) are novel and do not appear in the literature so far. Third, \cite{DFPR22} proposes the following conjecture (Conjecture 3.1): let $A$ and $B$ be two connected FDSs, then for all FDSs $X$ and $Y$, if $AX = B$ and $AY = B$, then $X = Y$. Our first main contribution (item \ref{itemConnected} in the list above) was added once we were aware of \cite{DFPR22}; it is actually a more general result than their conjecture.

The rest of the paper is organised as follows. \Cref{sec:Defns} introduces all the necessary
definitions to work on the semiring of FDSs. Then,
\Cref{sec:Cancellative} shows that the cancellative elements of the
semiring of finite dynamical systems are exactly those with a
fixpoint. From this, we give in \Cref{sec:Algos} a
polynomial-time algorithm for division in dendrons. \Cref{kroot} then proves the unicity of $k$-th roots. Then
\Cref{sec:LD_K} constructs a class of monoids with unique
factorisation on each of them. Finally, some avenues for further work are proposed in
\Cref{sec:Directions}.

\section{General definitions}\label{sec:Defns}

A finite dynamical system (FDS) is a function from a finite set into itself. We denote $\DD$ the set of all FDSs.
Given an FDS $A$, we denote $S_A$ the finite set on which it acts.

Given two FDSs $A$ and $B$, we can assume that $S_A\cap S_B=\es$ (if
that is not the case, we can simply rename the elements of one of
those sets). Then, we define their \textit{sum} as follows:

$$
\begin{array}{llll}
  A+B: & S_A\sqcup S_B & \rightarrow & S_A\sqcup S_B \\
  & x             & \mapsto     &
  \begin{cases}
    A(x) & \text{ if $x\in S_A$} \\
    B(x) & \text{ otherwise.}
  \end{cases}
\end{array}
$$

Given two FDSs $A$ and $B$, we define their \textit{product} as
follows:

  $$
  \begin{array}{cccc}
    AB: & S_A\times S_B & \rightarrow & S_A\times S_B \\
         & (a,b)         & \mapsto     & (A(a), B(b)).
  \end{array}
  $$
  
Defining the size of an FDS $A$ as $|A|=|S_A|$, we see that:
$|A+B|=|A|+|B|$ and $|AB|=|A||B|$.
  
When multiplying two FDSs $A$ and $B$ (for example, in \Cref{FDSmul}),
we get $AB$ along with a labelling of the states of $AB$ by pairs of
states of $A$ and of $B$. That is, we get an isomorphism $S_{AB}\simeq
S_A\times S_B$ that respects the product structure. However, we shall
consider FDSs up to isomorphism, hence we do not get this labelling
along with the FDS in general. The problem of factorising an FDS $C$,
for example, just means labelling the states $S_C$ with $S_A\times
S_B$, where $A,B\in\DD$, such that this labelling respects the product
$AB=C$. A formalization of this idea of labelling the states of a
product with the Cartesian product of the state sets of its factors is
provided below:

\begin{defn}
  Given a sequence $(A_i)_{i\in I}\in \DD^I$ for some finite set $I$ and a product $B=\prod_{i\in I} A_i$ we say that the function
  $\phi:S_B\mapsto \prod_{i\in I}S_{A_i}$ is a \textit{product
    isomorphism for the product $B=\prod_{i\in I}A_i$} if:
  \begin{enumerate}
  \item it is a bijection, and
  \item for any sequence of states $(s_i)_{i\in I}\in \prod_{i\in
    I}S_{A_i}$, we have: $B(\phi^{-1}((s_i)_{i\in I})) =
    \phi^{-1}((A_i(s_i))_{i\in I})$.
  \end{enumerate}
\end{defn}

We remark that computing the product gives such a product isomorphism.

\begin{prop}[\cite{PolEqFDS}]
  The set of FDSs, with the above sum and product, forms a semiring \cite{HW98},
  with additive identity the empty function and multiplicative
  identity the function $1:\{1\}\rightarrow\{1\}, 1\mapsto 1$.
\end{prop}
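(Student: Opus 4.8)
Recall that verifying a semiring structure amounts to checking that $(\DD,+)$ is a commutative monoid with identity the empty function, that $(\DD,\times)$ is a monoid with identity $1$, that multiplication distributes over addition on both sides, and that the empty function annihilates under multiplication (here multiplication will moreover turn out to be commutative). Since FDSs are taken up to isomorphism, every required identity really asserts that two FDSs have isomorphic functional graphs, i.e. that there is a bijection between their state sets that intertwines the two transition functions. The plan is therefore to treat each axiom uniformly: exhibit the canonical set-theoretic bijection between the relevant state sets --- supplied by the associativity, commutativity and unitality of $\sqcup$ and of $\times$ in the category of finite sets, together with the distributivity of $\times$ over $\sqcup$ --- and then check that this bijection is actually an isomorphism of FDSs, i.e. commutes with the dynamics.

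For the additive monoid I would argue as follows. Associativity follows from the canonical bijection $(S_A \sqcup S_B) \sqcup S_C \cong S_A \sqcup (S_B \sqcup S_C)$: both $(A+B)+C$ and $A+(B+C)$ act on a state by applying whichever of $A,B,C$ owns the block containing it, and the bijection preserves blocks, so it intertwines the two dynamics. Commutativity comes from the block-swapping bijection $S_A \sqcup S_B \cong S_B \sqcup S_A$, and the identity axiom is immediate, since the empty function has state set $\es$ and hence $A + (\text{empty function})$ acts on $S_A \sqcup \es = S_A$ exactly as $A$ does.

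The multiplicative monoid is handled the same way. Associativity uses the reassociation $(S_A \times S_B)\times S_C \cong S_A \times (S_B \times S_C)$, and since the product acts coordinatewise this bijection trivially commutes with the dynamics; commutativity uses the coordinate-swap $S_A \times S_B \cong S_B \times S_A$. For the identity, $A\cdot 1$ acts on $S_A \times \{1\}$, and because the only self-map of a singleton is the identity, the second coordinate is fixed, so $(a,1)\mapsto a$ exhibits $A\cdot 1 \cong A$.

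The only axiom genuinely mixing the two operations is distributivity, and it is where I would be most careful, although the difficulty is purely bookkeeping rather than mathematics. The relevant bijection is $S_A \times (S_B \sqcup S_C) \cong (S_A\times S_B) \sqcup (S_A \times S_C)$. To check that it intertwines $A(B+C)$ with $AB+AC$ I would track where each state goes: on a pair $(a,b)$ with $b\in S_B$, the map $A(B+C)$ returns $(A(a),B(b))$, which the bijection places in the $AB$-block, matching the action of $AB+AC$; the case $b\in S_C$ is symmetric. The reverse distributive law then follows from the commutativity of $\times$ established above (or by the mirror-image argument). Finally, annihilation is immediate, as $A\cdot(\text{empty function})$ and $(\text{empty function})\cdot A$ both act on a Cartesian product with an empty factor, hence on $\es$, and so equal the empty function. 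Assembling these verifications yields the claimed semiring structure with the empty function and $1$ as the two identities.
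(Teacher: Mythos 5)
Your proof is correct. Note that the paper itself gives no proof of this proposition---it is imported by citation from \cite{PolEqFDS}---so there is no internal argument to compare against; your axiom-by-axiom verification (exhibiting the canonical bijections coming from associativity, commutativity, unitality and distributivity of $\sqcup$ and $\times$ on finite sets, and checking each one intertwines the dynamics) is the standard and essentially unique way to establish the result, and you cover all the required axioms including annihilation by the empty function. The only point you leave implicit is that the two operations are well defined on isomorphism classes, i.e.\ that $A \cong A'$ and $B \cong B'$ imply $A+B \cong A'+B'$ and $AB \cong A'B'$; this is immediate (the disjoint union and Cartesian product of isomorphisms are again isomorphisms), but since the paper insists that $\DD$ is $\DD$ quotiented by graph isomorphism, a one-line remark to that effect would make the verification airtight.
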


For FDSs, we can adopt a graph-theoretical point of view, by
associating to an FDS $A\in \DD$ an oriented graph
$\mathcal{G}_A=(V,E)$ where $V=S_A$ and $E=\{(x,y)\in S_A^2:y=A(x)\}$.
Then, for $A,B\in\DD$, $\mathcal{G}_{A+B}$ is the disjoint union of
$\Gg_A$ and $\Gg_B$, and $\mathcal{G}_{AB}$ is the direct product
$\Gg_A\times G_B$ (see the corresponding section in
\cite{HandbookProds}). In the following, we will often identify an FDS
and its graph, and thus, implicitly quotient $\DD$ by graph
isomorphism, that is, we consider that $A=B$ if and only if $\Gg_A$
and $\Gg_B$ are isomorphic, as $A$ and $B$ have the same dynamics in
this case.

\begin{defn}
    Given $A,B\in\DD$, we say that $B$ is a sub-FDS of $A$ if $\Gg_B$
    is a subgraph of $\Gg_A$.
\end{defn}

Since FDSs take their values in a finite set, the structure of their
graphs is simple: they consist of some cycles, on the states of which,
trees (with arrows going upwards, towards the root) are connected, as
in the example of \Cref{FDSmul}. This leads us to several definitions
that are useful to study FDSs.

\begin{figure}
  \resizebox{\textwidth}{!}{
    \begin{tikzpicture}
      \graph[nodes={circle,draw}] {
        subgraph C_n [n=3, clockwise, ->]
      };
      \node[draw,circle] (4) [above right=of 1] {$4$};
      \node[draw,circle] (5) [above left=of 1] {$5$};
      \node[draw,circle] (6) [below left=of 3] {$6$};
      \draw[->] (4) -- (1);
      \draw[->] (5) -- (1);
      \draw[->] (6) -- (3);

      \tikzxshift{2.5cm}
      
      \node[draw,circle] (7) at (0,-0.5) {$7$};
      \node[draw,circle] (8) at (0,0.5) {$8$};
      \node[draw,circle] (9) [above=of 8] {$9$};
      \draw[->] (7) edge[bend left] (8);
      \draw[->] (8) edge[bend left] (7);
      \draw[->] (9) -- (8);
      
      \tikzxshift{1.5cm}
      \node {$\times$};
      \tikzxshift{2.5cm}

      \node[draw,circle] (A) at (-1.5,0) {$A$};
      \node[draw,circle] (B) at (-0.5*1.5,1.5*0.86) {$B$};
      \node[draw,circle] (C) at (0.5*1.5,1.5*0.86) {$C$};
      \node[draw,circle] (D) at (1.5,0) {$D$};
      \node[draw,circle] (E) at (0.5*1.5,-1.5*0.86) {$E$};
      \node[draw,circle] (F) at (-0.5*1.5,-1.5*0.86) {$F$};
      \node[draw,circle] (G) [above=of B] {$G$};
      \draw[->] (G) -- (B);
      \draw[->] (B) -- (C);
      \draw[->] (C) -- (D);
      \draw[->] (D) -- (E);
      \draw[->] (E) -- (F);
      \draw[->] (F) -- (A);
      \draw[->] (A) -- (B);

      \tikzxshift{3cm}
      \node {$=$};
      \tikzxshift{3cm}
      \tikzyshift{2cm}

      \node[draw,circle] (A1) at (-1.5,0) {$A1$};
      \node[draw,circle] (B2) at (-0.5*1.5,1.5*0.86) {$B2$};
      \node[draw,circle] (C3) at (0.5*1.5,1.5*0.86) {$C3$};
      \node[draw,circle] (D1) at (1.5,0) {$D1$};
      \node[draw,circle] (E2) at (0.5*1.5,-1.5*0.86) {$E2$};
      \node[draw,circle] (F3) at (-0.5*1.5,-1.5*0.86) {$F3$};
      \draw[->] (B2) -- (C3);
      \draw[->] (C3) -- (D1);
      \draw[->] (D1) -- (E2);
      \draw[->] (E2) -- (F3);
      \draw[->] (F3) -- (A1);
      \draw[->] (A1) -- (B2);

      \tikzxshift{6cm}

      \node[draw,circle] (A2) at (-1.5,0) {$A2$};
      \node[draw,circle] (B3) at (-0.5*1.5,1.5*0.86) {$B3$};
      \node[draw,circle] (C1) at (0.5*1.5,1.5*0.86) {$C1$};
      \node[draw,circle] (D2) at (1.5,0) {$D2$};
      \node[draw,circle] (E3) at (0.5*1.5,-1.5*0.86) {$E3$};
      \node[draw,circle] (F1) at (-0.5*1.5,-1.5*0.86) {$F1$};
      \draw[->] (A2) -- (B3);
      \draw[->] (B3) -- (C1);
      \draw[->] (C1) -- (D2);
      \draw[->] (D2) -- (E3);
      \draw[->] (E3) -- (F1);
      \draw[->] (F1) -- (A2);

      \tikzxshift{6cm}
      
      \node[draw,circle] (A3) at (-1.5,0) {$A3$};
      \node[draw,circle] (B1) at (-0.5*1.5,1.5*0.86) {$B1$};
      \node[draw,circle] (C2) at (0.5*1.5,1.5*0.86) {$C2$};
      \node[draw,circle] (D3) at (1.5,0) {$D3$};
      \node[draw,circle] (E1) at (0.5*1.5,-1.5*0.86) {$E1$};
      \node[draw,circle] (F2) at (-0.5*1.5,-1.5*0.86) {$F2$};
      \draw[->] (C2) -- (D3);
      \draw[->] (D3) -- (E1);
      \draw[->] (E1) -- (F2);
      \draw[->] (F2) -- (A3);
      \draw[->] (A3) -- (B1);
      \draw[->] (B1) -- (C2);

      \tikzyshift{-4cm}
      \tikzxshift{-8cm}

      \node[draw,circle] (A7) at (-1.5,0) {$A7$};
      \node[draw,circle] (B8) at (-0.5*1.5,1.5*0.86) {$B8$};
      \node[draw,circle] (C7) at (0.5*1.5,1.5*0.86) {$C7$};
      \node[draw,circle] (D8) at (1.5,0) {$D8$};
      \node[draw,circle] (E7) at (0.5*1.5,-1.5*0.86) {$E7$};
      \node[draw,circle] (F8) at (-0.5*1.5,-1.5*0.86) {$F8$};
      \draw[->] (C7) -- (D8);
      \draw[->] (D8) -- (E7);
      \draw[->] (E7) -- (F8);
      \draw[->] (F8) -- (A7);
      \draw[->] (A7) -- (B8);
      \draw[->] (B8) -- (C7);

      \tikzxshift{6cm}
      
      \node[draw,circle] (B7) at (-1.5,0) {$B7$};
      \node[draw,circle] (C8) at (-0.5*1.5,1.5*0.86) {$C8$};
      \node[draw,circle] (D7) at (0.5*1.5,1.5*0.86) {$D7$};
      \node[draw,circle] (E8) at (1.5,0) {$E8$};
      \node[draw,circle] (F7) at (0.5*1.5,-1.5*0.86) {$F7$};
      \node[draw,circle] (A8) at (-0.5*1.5,-1.5*0.86) {$A8$};
      \draw[->] (B7) -- (C8);
      \draw[->] (C8) -- (D7);
      \draw[->] (D7) -- (E8);
      \draw[->] (E8) -- (F7);
      \draw[->] (F7) -- (A8);
      \draw[->] (A8) -- (B7);

      \node[draw,circle] (G1) [above=of B2] {$G1$};
      \draw[->] (G1) -- (B2);
      \node[draw,circle] (G2) [above=of B3] {$G2$};
      \draw[->] (G2) -- (B3);
      \node[draw,circle] (G3) [above=of B1] {$G3$};
      \draw[->] (G3) -- (B1);
      \node[draw,circle] (G4) [above right=of B1] {$G4$};
      \draw[->] (G4) -- (B1);
      \node[draw,circle] (G5) [above left=of B1] {$G5$};
      \draw[->] (G5) -- (B1);
      \node[draw,circle] (G6) [above left=of B3] {$G6$};
      \draw[->] (G6) -- (B3);
      \node[draw,circle] (G7) [left=of B8] {$G7$};
      \draw[->] (G7) -- (B8);
      \node[draw,circle] (G8) [below left=1cm and 0.2cm of B7] {$G8$};
      \draw[->] (G8) -- (B7);
      \node[draw,circle] (G9) [below left=0.5cm and 2cm of B8] {$G9$};
      \draw[->] (G9) -- (B8);

      \node[draw,circle] (A4) [above right=3cm and 0.5cm of B1] {$A4$};
      \draw[->] (A4) -- (B1);
      \node[draw,circle] (B4) [above right=of C1] {$B4$};
      \draw[->] (B4) -- (C1);
      \node[draw,circle] (C4) [below right=0.6cm and 0.2cm of D1] {$C4$};
      \draw[->] (C4) -- (D1);
      \node[draw,circle] (D4) [right=of E1] {$D4$};
      \draw[->] (D4) -- (E1);
      \node[draw,circle] (E4) [left=of F1] {$E4$};
      \draw[->] (E4) -- (F1);
      \node[draw,circle] (F4) [left=of A1] {$F4$};
      \draw[->] (F4) -- (A1);
      \node[draw,circle] (G4) [above left=3cm and 0.5cm of B1] {$G4$};
      \draw[->] (G4) -- (B1);

      \node[draw,circle] (A5) [above right=3cm and 1.7cm of B1] {$A5$};
      \draw[->] (A5) -- (B1);
      \node[draw,circle] (B5) [above=of C1] {$B5$};
      \draw[->] (B5) -- (C1);
      \node[draw,circle] (C5) [above right=0.6cm and 0.2cm of D1] {$C5$};
      \draw[->] (C5) -- (D1);
      \node[draw,circle] (D5) [below right=of E1] {$D5$};
      \draw[->] (D5) -- (E1);
      \node[draw,circle] (E5) [below right=of F1] {$E5$};
      \draw[->] (E5) -- (F1);
      \node[draw,circle] (F5) [above left=of A1] {$F5$};
      \draw[->] (F5) -- (A1);
      \node[draw,circle] (G5) [above left=3cm and 1.7cm of B1] {$G5$};
      \draw[->] (G5) -- (B1);

      \node[draw,circle] (A6) [above right=3cm and 0.5cm of B3] {$A6$};
      \draw[->] (A6) -- (B3);
      \node[draw,circle] (B6) [above=of C3] {$B6$};
      \draw[->] (B6) -- (C3);
      \node[draw,circle] (C6) [above right=0.6cm and 0.2cm of D3] {$C6$};
      \draw[->] (C6) -- (D3);
      \node[draw,circle] (D6) [right=of E3] {$D6$};
      \draw[->] (D6) -- (E3);
      \node[draw,circle] (E6) [below left=of F3] {$E6$};
      \draw[->] (E6) -- (F3);
      \node[draw,circle] (F6) [above left=of A3] {$F6$};
      \draw[->] (F6) -- (A3);
      \node[draw,circle] (G6) [above left=3cm and 1.5cm of B3] {$G6$};
      \draw[->] (G6) -- (B3);
    \end{tikzpicture}
  }
  \caption{Product of two FDSs.}
  \label{FDSmul}
\end{figure}

\begin{defn}
  Let $A\in\DD$. A state $s\in S_A$ is said to be a \textit{cycle
    state} if it is on a cycle of $\Gg_A$, or, equivalently, if there
  exists $n>0$ such that $A^n(s)=s$. We denote $S^C_A$ the set of
  cycle states of $A$. Otherwise, $s$ is said to be a \textit{tree
    state}.

  We define a function $\depth_A: S_A\rightarrow\N$ that gives the
  \textit{depth} of any state of $A$, and is defined recursively as
  follows:
  \begin{eqnarray*}
    \forall s\in S^C_A, &&\depth_A(s)=0 \\
    \forall s\in S_A\setminus S^C_A, &&\depth_A(s) = \depth_A(A(s))+1.
  \end{eqnarray*}

  Furthermore, for any $k\in\N$, we define the \textit{truncature of
    $A$ at depth $k$}, denoted $[A]_k$, as the sub-FDS of $A$ which
  contains all the states of $A$ at depth at most $k$.
\end{defn}

A very useful and simple result is the following:

\begin{lem}
  For any $A,B\in\DD$ and $k\in\N$, $[AB]_k=[A]_k[B]_k$.
\end{lem}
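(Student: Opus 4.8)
The plan is to reduce the statement to a single identity about depths, namely that for every $(a,b)\in S_A\times S_B$,
\[
  \depth_{AB}(a,b) = \max\bigl(\depth_A(a), \depth_B(b)\bigr).
\]
Granting this, the truncature $[AB]_k$ consists exactly of those pairs $(a,b)$ with $\max(\depth_A(a),\depth_B(b))\le k$, that is, with $\depth_A(a)\le k$ \emph{and} $\depth_B(b)\le k$; but this is precisely the state set $S_{[A]_k}\times S_{[B]_k}$ of the product $[A]_k[B]_k$. Since both $[AB]_k$ and $[A]_k[B]_k$ send a pair $(a,b)$ to $(A(a),B(b))$ --- the former as a restriction of $AB$, the latter by the definition of the product of the two restrictions --- they coincide as FDSs. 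So the entire argument rests on the depth formula.

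To prove the depth formula, I would first record the characterisation $\depth_A(s) = \min\{n\in\N : A^n(s)\in S^C_A\}$, which follows by a straightforward induction from the recursive definition (a state has depth $0$ iff it is a cycle state, and $\depth_A(s)=\depth_A(A(s))+1$ for tree states). The second ingredient is the description of cycle states in a product: since $(AB)^n(a,b)=(A^n(a),B^n(b))$, we have $(a,b)\in S^C_{AB}$ iff there is $n>0$ with $A^n(a)=a$ and $B^n(b)=b$, which happens iff $a\in S^C_A$ and $b\in S^C_B$.

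The key monotonicity observation is that cycle states are closed under the dynamics: if $A^m(a)\in S^C_A$ then $A^n(a)\in S^C_A$ for every $n\ge m$. Consequently $\{n : A^n(a)\in S^C_A\}$ is the up-set $\{n : n\ge \depth_A(a)\}$, and similarly for $B$. Intersecting these, and using the product characterisation of cycle states,
\[
  \{n : (AB)^n(a,b)\in S^C_{AB}\} = \{n : n\ge \depth_A(a)\}\cap \{n : n\ge \depth_B(b)\} = \{n : n\ge \max(\depth_A(a),\depth_B(b))\}.
\]
Taking the minimum of both sides yields $\depth_{AB}(a,b)=\max(\depth_A(a),\depth_B(b))$, which is exactly the formula needed.

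I expect the only genuine subtlety to lie in the bookkeeping around cycle states: one must not only match the depths but also confirm that $[A]_k$ (and hence the product $[A]_k[B]_k$) is a bona fide FDS, i.e. that $\{a : \depth_A(a)\le k\}$ is closed under $A$. This holds because $\depth_A(A(a))\le \depth_A(a)$, so the restriction is total. Everything else is routine once the depth formula is established.
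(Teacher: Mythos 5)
Your proof is correct. Note that the paper itself states this lemma without proof (it is introduced only as ``a very useful and simple result''), so there is no official argument to compare against; your write-up supplies exactly the justification the paper leaves implicit. The core of your argument --- the identity $\depth_{AB}(a,b)=\max\bigl(\depth_A(a),\depth_B(b)\bigr)$, obtained from the characterisation $\depth_A(s)=\min\{n:A^n(s)\in S^C_A\}$ together with the fact that $(a,b)$ is a cycle state of $AB$ iff $a$ and $b$ are cycle states of $A$ and $B$ respectively --- is sound, and your closing remark that $\{a:\depth_A(a)\le k\}$ is closed under $A$ (so that the truncature is a genuine sub-FDS) is precisely the bookkeeping point that needs checking. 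It is also worth observing that your $\max$ formula for FDS products contrasts correctly with the forest setting, where the paper's \Cref{DepthBotTreeState} and \Cref{DepthTrees} show that only states of \emph{equal} depth are paired and the product tree's depth is the \emph{minimum} of the factors' depths; the two situations differ because the FDS product pairs all states, not just those of matching depth.
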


Of particular interest are the FDSs we call \textit{dendrons}, that
is, connected FDSs (\textit{i.e.} with a connected graph) with a
fixpoint. Those FDS can be seen as rooted trees with arrows pointing
towards the root, with a loop on the root. We denote $\DD_D$ the set
of dendrons (remark that it is not a semiring, since the sum of two
dendrons is not a dendron).

Let's now focus on those two types of parts of FDSs: trees (which,
when summed, form forests) and cycles (which, when summed, form
permutations).

\subsection{Forests}

We introduce forests as a way to have a product between FDSs that has
an inductive definition that works level by level. In FDSs, the state
set of a product is the Cartesian product of the state sets of the
factors. This makes the identification of states of an unlabelled FDS
difficult. For forests, the pairs of states which end up in the
product are those of even depth. Finally, \Cref{ProdBot} is the reason
forests are useful: their product is compatible with that of FDSs.

\begin{defn}
  By \textit{tree}, we shall mean an in-tree \cite[p.21]{BG09a}, i.e. an oriented connected acyclic graph
  with a special vertex called its \textit{root}, such that every edge is oriented towards the root. The trees we consider may be infinite, but the degree of each vertex shall always be finite.
  
  % A tree can be infinite, but each vertex must have   a finite degree. 
  We denote the root of a tree $\bott T$ as
  $\troot(\bott T)$.

  A \textit{forest} is a disjoint union of trees. The set of forests
  is denoted $\FF$, and that of trees is denoted $\FF_T$. In the
  following, we denote forests in bold face to distinguish them from
  FDSs.

  If $\bott T\in\FF_T$, and if there is a single infinite path
  starting from the root of $\bott T$, we can extract the sequence
  $tseq(\bott T)$ of trees anchored on that path. If this sequence is
  periodic, we say that $\bott T$ is \textit{periodic}, and that
  $\bott T$ is of \textit{tree period} the period of the sequence. We
  denote the set of periodic trees as $\FF_P$.
\end{defn}

We consider trees as dendrons which have had their fixpoint
transformed into a sink, and extend the notations from dendrons
whenever they make sense. In particular, we denote $S_{\bott A}$ the
set of vertices of the forest $\bott{A}$. Moreover, the parent of a
vertex $x\in S_{\bott A}$ is denoted $\bott{A}(x)$. For an FDS $A \in
\DD$ and a state $s\in S_A$, we say that $\bott T$ is the tree
\textit{anchored on} $s$ if the tree of the tree state predecessors of
$s$ in the graph is $\bott T$; we naturally extend this notation to any forest $\bott A$. By convention, the depth of an infinite dendron is $\infty$, while the depth of an empty dendron is $-1$.

Given a tree $\bott T$, we define $\Dd(\bott T)$ to be the multiset
containing the subtrees anchored on the children of the root of
$\bott T$.

Now, we define a sum and a product operation on forests in order to
endow the set of forests with a semiring structure.

The \textit{sum} of two forests $\bott A,\bott B$ (for which we can
assume $S_{\bott A}\cap S_{\bott B}=\es$) is the forest $\bott C$
defined as the disjoint union of the graphs $\bott A$ and $\bott B$.

Let $\bott A, \bott B\in\FF_T$. Then the \textit{product} of $\bott A$
and $\bott B$ is $\bott{A}\bott B=(V,A)$ with
\begin{align*}
    V = S_{\bott A\bott B} &= \{(a,b)\in S_{\bott A}\times S_{\bott B}: \depth_{\bott A}(a)=\depth_{\bott B}(b)\},\\
    A &= \{((a,b), (\bott{A}(a),\bott{B}(b))): (a,b)\in S_{\bott A\bott B}\}. 
\end{align*}
This
product is almost the same as that on FDSs but here only states of
same depth get multiplied together.

It will often prove useful to use multisets with the following
product. Given two multisets $\Aa$ and $\mathcal{B}$, their
\textit{product} $\Aa\mathcal{B}$ is
$\{\{ab:a\in\Aa,b\in\mathcal{B}\}\}$.

The following lemma explains why trees are interesting: the product is
done level by level. Moreover, the root does not behave differently
than the other states (as it does on dendrons), which means that this
product is much easier to work with.

\begin{lem}\label{LevelByLevelProduct}
  If $\bott A,\bott B,\in\FF_T$ are finite, then: $$\Dd(\bott A\bott
  B)=\Dd(\bott A)\Dd(\bott B)=\{\{\bott T\times\bott{T'}:\bott T\in
  \Dd(\bott A), \bott{T'}\in\Dd(\bott B)\}\}.$$
\end{lem}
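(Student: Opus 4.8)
The plan is to analyse the structure of the product tree $\bott A\bott B$ near its root and to compare it, level by level, with the products of the subtrees anchored on the children of the roots of $\bott A$ and $\bott B$. The second equality in the statement is nothing more than the definition of the multiset product applied to $\Dd(\bott A)$ and $\Dd(\bott B)$, so the entire content lies in establishing $\Dd(\bott A\bott B)=\Dd(\bott A)\Dd(\bott B)$.

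First I would pin down the root of $\bott A\bott B$. Since traversing an edge of the product decreases the depth of both coordinates by one simultaneously, one checks that $\depth_{\bott A\bott B}((a,b))=\depth_{\bott A}(a)=\depth_{\bott B}(b)$ for every vertex $(a,b)$, so that the unique vertex of depth $0$ is $(\troot(\bott A),\troot(\bott B))$; this is the root of the product. Consequently, the children of $\troot(\bott A\bott B)$ are exactly the pairs $(a,b)$ with $a$ a child of $\troot(\bott A)$ and $b$ a child of $\troot(\bott B)$: such a pair has both coordinates at depth $1$, hence is a vertex of the product, and is sent by the product map onto the root, while conversely every depth-$1$ vertex of the product has this form. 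This already yields a bijection between the children of $\troot(\bott A\bott B)$ and the Cartesian product of the children of $\troot(\bott A)$ with those of $\troot(\bott B)$.

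The key step is then to identify, for each such child $(a,b)$, the subtree of $\bott A\bott B$ anchored on $(a,b)$ with the product $\bott T_a\times\bott T_b$, where $\bott T_a$ and $\bott T_b$ are the subtrees of $\bott A$ and $\bott B$ anchored on $a$ and $b$. The heart of the matter is a depth-shift bookkeeping: anchoring on $a$ (which sits at depth $1$ in $\bott A$) lowers depths by one, so $\depth_{\bott T_a}(a')=\depth_{\bott A}(a')-1$ for every $a'$ in $\bott T_a$, and likewise for $\bott T_b$. Hence the equal-depth condition defining the vertices of $\bott T_a\times\bott T_b$, namely $\depth_{\bott T_a}(a')=\depth_{\bott T_b}(b')$, is equivalent to $\depth_{\bott A}(a')=\depth_{\bott B}(b')$, which is precisely the condition selecting the vertices of $\bott A\bott B$ that lie in the subtree anchored on $(a,b)$. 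As the edges are given by the same rule on both sides, the two trees coincide.

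Assembling these observations, $\Dd(\bott A\bott B)$ — the multiset of subtrees anchored on the children of $\troot(\bott A\bott B)$ — is exactly the multiset $\{\{\bott T_a\times\bott T_b\}\}$ indexed by children $a$ of $\troot(\bott A)$ and $b$ of $\troot(\bott B)$, which is $\Dd(\bott A)\Dd(\bott B)$. I expect the main obstacle to be notational rather than conceptual: keeping the three depth functions (in $\bott A$, in $\bott B$, and in the anchored subtrees) straight across the $\pm1$ shifts, and verifying that the root of the product genuinely behaves as a root/sink, so that the anchored-subtree decomposition is exhaustive and $\bott A\bott B$ is indeed a tree. The finiteness hypothesis is what guarantees that every vertex reaches the root in finitely many steps, making the depth functions and the subtree decomposition well defined.
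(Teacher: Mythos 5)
Your proof is correct, and it rests on the same structural decomposition as the paper's proof: both reduce the lemma to identifying, for each depth-$1$ vertex $(a,b)$ of $\bott A\bott B$, the subtree anchored on $(a,b)$ with the product $\bott T_a\bott T_b$ of the subtrees anchored on $a$ in $\bott A$ and on $b$ in $\bott B$. The difference is how that identification is justified. The paper argues by induction on the depths of $\bott A$ and $\bott B$, invoking the induction hypothesis to get $\Dd(\bott T_{(a,b)})=\Dd(\bott T_a)\Dd(\bott T_b)$ and hence $\bott T_{(a,b)}=\bott T_a\bott T_b$; the recursion silently does the work that you make explicit. You instead verify the identification directly: anchoring at a depth-$1$ vertex shifts every depth down by one, so the equal-depth condition defining the vertex set of $\bott T_a\times\bott T_b$ matches the condition carving the anchored subtree out of $S_{\bott A\bott B}$, and the parent maps agree on both sides. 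This buys two things: your argument is self-contained exactly where the paper is terse (the depth bookkeeping is the content the paper's induction glosses over), and it never uses finiteness of the whole trees --- only that each vertex has finite depth --- so it applies verbatim to the infinite trees of $\FF_P$, where an induction on the depth of the tree cannot even be set up. The paper's version, in exchange, is more compact. One point you should tighten when writing this up: a vertex $(a',b')$ with $a'\in S_{\bott T_a}$, $b'\in S_{\bott T_b}$ lies in the subtree of $\bott A\bott B$ anchored on $(a,b)$ only if $a'$ and $b'$ reach $a$ and $b$ in the \emph{same} number of steps; this does follow from your depth-shift observation (in a tree, the number of steps from a vertex to the root of its anchored subtree equals its depth in that subtree), but it deserves an explicit sentence rather than being folded into ``precisely the condition selecting the vertices''.
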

\begin{proof}
  The proof is by induction on the depths of $\bott A$ and $\bott B$.
  The case for trees with depth $\leq 1$ is trivial. The depth $1$
  vertices of $\bott A\bott B$ form the set $\{(a,b)\in S_{\bott
    A}\times S_{\bott B}: \depth_{\bott A}(a)=\depth_{\bott
    B}(b)=1\}$. We simply show that the tree $\bott T_{(a,b)}$
  anchored on $(a,b)$ in $\bott A\bott B$ is the product of the tree $\bott T_{a}$
  anchored on $a$ in $\bott A$ with the tree $\bott T_{b}$ anchored
  on $b$ in $\bott B$. By induction, we know that $\Dd(\bott
  T_{(a,b)})=\Dd(\bott T_{a})\Dd(\bott T_{b})$, so $\bott
  T_{(a,b)}=\bott T_{a}\bott T_{b}$. This concludes.
\end{proof}

It is easy to verify that the set of forests becomes a semiring with these operations:

\begin{lem}
  The set $\FF$ of forests becomes a semiring when endowed with the
  sum and product defined above. Its additive identity is $\bott 0$, the empty
  tree with $(V=\es, A=\es)$, while its multiplicative identity is the
  rooted infinite directed path $\bott P_\infty$ with $(V=\N,
  A=\{(n+1, n)|n\in\N\})$.
\end{lem}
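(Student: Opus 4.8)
The plan is to verify the semiring axioms one by one, pushing everything down to the level of individual trees. The additive part is immediate: since $+$ is the disjoint union of graphs, it is associative and commutative up to isomorphism and the empty forest $\bott 0$ is a two-sided identity, so $(\FF,+)$ is a commutative monoid. I would first observe that the product formula stated for two trees makes sense verbatim for two forests $\bott A=\bigsqcup_i \bott T_i$ and $\bott B=\bigsqcup_j \bott S_j$, and that a parent edge of $\bott A\bott B$ never leaves a block $S_{\bott T_i}\times S_{\bott S_j}$ (the parent of $(a,b)$ is $(\bott A(a),\bott B(b))$, and each component stays in its own tree); hence $\bott A\bott B=\bigsqcup_{i,j}\bott T_i\bott S_j$. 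This single remark simultaneously shows that the product of forests is again a forest and that it distributes over $+$ on both sides, so once the product is understood on trees, distributivity is free. I would also note that multiplying by $\bott 0$ yields an empty vertex set, giving the absorbing law $\bott 0\,\bott A=\bott A\,\bott 0=\bott 0$.

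It then remains to show that $(\FF_T,\times)$ is a (commutative) monoid with identity $\bott P_\infty$. The crucial preliminary fact is that the product preserves depth: iterating the parent map from $(a,b)$ produces $(\bott A^k(a),\bott B^k(b))$, which reaches the unique depth-$0$ vertex $(\troot\bott A,\troot\bott B)$ after exactly $k=\depth_{\bott A}(a)=\depth_{\bott B}(b)$ steps, so $\depth_{\bott A\bott B}(a,b)=\depth_{\bott A}(a)=\depth_{\bott B}(b)$. The same computation shows that every vertex reaches the root after finitely many parent steps (which strictly decrease depth), so $\bott A\bott B$ is connected and acyclic, hence a genuine tree. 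Commutativity is then given by the swap $(a,b)\mapsto(b,a)$. For the identity, since $\bott P_\infty$ has exactly one vertex at each depth $n$, the constraint $\depth_{\bott A}(a)=n$ leaves a single partner for each $a$, so $(a,n)\mapsto a$ is an isomorphism $\bott A\,\bott P_\infty\cong\bott A$; the forest case follows from the block decomposition above.

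The main work, and the step I expect to be the real obstacle, is associativity. Using the depth-preservation fact, I would identify the vertex set of both $(\bott A\bott B)\bott C$ and $\bott A(\bott B\bott C)$ with the set of triples $(a,b,c)\in S_{\bott A}\times S_{\bott B}\times S_{\bott C}$ having a common depth, the parent of such a triple being $(\bott A(a),\bott B(b),\bott C(c))$ in both bracketings; the reassociation bijection $((a,b),c)\mapsto(a,(b,c))$ then preserves roots and parent edges and is the desired isomorphism. The delicate points are pure bookkeeping: checking that the pairwise depth equalities propagate so that the two bracketings admit literally the same triples, and verifying that every step above remains valid for infinite and infinite-depth trees (in particular for $\bott P_\infty$ itself), where the arguments are unchanged but cannot appeal to a bound on the depth, only to the finiteness of each individual vertex's depth. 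Assembling the commutative additive monoid, the commutative multiplicative monoid on trees extended by distributivity, the absorbing law, and the identities $\bott 0$ and $\bott P_\infty$ then yields the claimed semiring structure.
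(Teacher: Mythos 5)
Your proof is correct. The paper itself offers no proof of this lemma---it is stated as an ``easy to verify'' fact---so your write-up is exactly the routine verification the paper leaves to the reader: the block decomposition $\bott A\bott B=\bigsqcup_{i,j}\bott T_i\bott S_j$ giving closure and distributivity at once, depth preservation giving that tree products are trees and that both bracketings of a triple product have the same vertices, and the identity checks for $\bott 0$ and $\bott P_\infty$ (the only point you leave implicit is that degrees stay finite, which is immediate since the children of $(a,b)$ are exactly the pairs of a child of $a$ with a child of $b$).
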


A straightforward inductive proof gives the following lemma:

\begin{lem}\label{DepthBotTreeState}
  If $\bott A, \bott B\in\FF$, then for $a\in S_{\bott A}, b\in
  S_{\bott B}$ such that $(a,b)\in S_{\bott A\bott B}$, we have
  $\depth_{\bott A\bott B}((a,b))=\depth_{\bott A}(a)=\depth_{\bott
    B}(b)$.
\end{lem}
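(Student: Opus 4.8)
The plan is to observe that the claimed chain of equalities splits into two parts of very different natures. The equality $\depth_{\bott A}(a) = \depth_{\bott B}(b)$ is not something to prove at all: it is built into the construction, since the vertex set $S_{\bott A\bott B}$ was \emph{defined} to consist precisely of those pairs $(a,b)$ with $\depth_{\bott A}(a)=\depth_{\bott B}(b)$. Hence the entire content of the lemma lies in the remaining equality $\depth_{\bott A\bott B}((a,b))=\depth_{\bott A}(a)$, and I would reduce the statement to proving only this, by induction on the common value $k:=\depth_{\bott A}(a)=\depth_{\bott B}(b)$. The induction is driven entirely by the parent function of the product, which sends $(a,b)$ to $(\bott A(a),\bott B(b))$.

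For the base case $k=0$, both $a$ and $b$ are roots of their respective trees; since a root is a sink (its parent is undefined), the pair $(a,b)$ has no image under the product's parent map, so it is itself a root of $\bott A\bott B$ and $\depth_{\bott A\bott B}((a,b))=0=\depth_{\bott A}(a)$. For the inductive step at $k+1$, the states $a,b$ are non-roots, so $\bott A(a)$ and $\bott B(b)$ are defined with $\depth_{\bott A}(\bott A(a))=\depth_{\bott B}(\bott B(b))=k$; thus $(\bott A(a),\bott B(b))$ is a genuine vertex of $S_{\bott A\bott B}$ (equal depths) and is exactly the parent of $(a,b)$. The recursive definition of depth then yields $\depth_{\bott A\bott B}((a,b))=\depth_{\bott A\bott B}((\bott A(a),\bott B(b)))+1$, and applying the induction hypothesis to the right-hand side evaluates it to $k+1=\depth_{\bott A}(a)$, closing the induction.

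I expect no serious obstacle here; this is the "straightforward inductive proof" the text advertises. The only points that deserve care are: checking that the parent $(\bott A(a),\bott B(b))$ really lies in $S_{\bott A\bott B}$ so that the induction hypothesis is applicable, which holds because a common depth of $k+1$ forces a common depth of $k$ after applying the parent maps once; and confirming in the base case that a pair of roots has no parent in the product, which rests on reading the product's edge set as generated solely by the two parent functions. For a forest rather than a single tree the argument is unchanged, since it is entirely local: the rule for parents and the depth recursion behave identically on each connected component. Implicitly one also relies on every vertex of a (possibly infinite) forest having finite depth, which is what makes this induction well-founded.
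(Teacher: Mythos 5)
Your proof is correct and matches the paper's intent exactly: the paper offers no written argument beyond calling it ``a straightforward inductive proof,'' and your induction on the common depth $k$ (after noting that $\depth_{\bott A}(a)=\depth_{\bott B}(b)$ holds by definition of $S_{\bott A\bott B}$) is precisely that straightforward induction, with the right attention to the base case and to the parent pair lying in $S_{\bott A\bott B}$. Nothing further is needed.
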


\begin{lem}\label{DepthTrees}
  If $\bott A, \bott B\in\FF$, then $\depth(\bott A\bott
  B)=\min(\depth(\bott A), \depth(\bott B))$.
\end{lem}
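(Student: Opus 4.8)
The plan is to establish the two inequalities $\depth(\bott A\bott B)\le\min(\depth(\bott A),\depth(\bott B))$ and $\depth(\bott A\bott B)\ge\min(\depth(\bott A),\depth(\bott B))$ separately, working from the fact that the depth of a forest is the supremum of the depths of its vertices, together with \Cref{DepthBotTreeState}, which tells us that the depth of any vertex in the product equals the common depth of its two coordinates.

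For the upper bound I would argue directly at the level of vertices. Any vertex of $\bott A\bott B$ is a pair $(a,b)$ with $\depth_{\bott A}(a)=\depth_{\bott B}(b)$, and by \Cref{DepthBotTreeState} its depth equals this common value. Since $\depth_{\bott A}(a)\le\depth(\bott A)$ and $\depth_{\bott B}(b)\le\depth(\bott B)$, that common value is at most $\min(\depth(\bott A),\depth(\bott B))$; taking the supremum over all vertices yields the first inequality. The degenerate case where $\bott A\bott B$ has no vertices at all (one factor being the empty forest $\bott 0$) is consistent with the conventions, as both sides then equal $-1$.

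For the lower bound the key observation is that in any forest every depth value up to its total depth is actually attained: following the chain of parents from a deepest vertex to a root passes through each intermediate depth, and in the infinite case König's lemma (applicable since all degrees are finite) supplies an infinite path realising every finite depth. Writing $d=\min(\depth(\bott A),\depth(\bott B))$, for each finite $d'\le d$ I can therefore choose $a\in S_{\bott A}$ and $b\in S_{\bott B}$ with $\depth_{\bott A}(a)=\depth_{\bott B}(b)=d'$, so that $(a,b)$ is a vertex of $\bott A\bott B$ of depth $d'$. If $d$ is finite this produces a vertex of depth exactly $d$; if $d=\infty$ it produces vertices of arbitrarily large depth. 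Either way $\depth(\bott A\bott B)\ge d$.

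The difficulty here is one of bookkeeping rather than a genuine obstacle: the work lies in handling the infinite and empty forests uniformly under the conventions $\depth=\infty$ and $\depth=-1$, and in justifying that all intermediate depths are attained. Passing from trees to forests requires no extra argument, since the product on forests is the disjoint union of the pairwise products of their tree components; concretely, the pair $(a,b)$ above lives in the product of the tree of $\bott A$ containing $a$ with the tree of $\bott B$ containing $b$, and hence is a genuine vertex of $\bott A\bott B$.
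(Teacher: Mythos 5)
Your proof is correct and takes essentially the same route as the paper's: the upper bound comes from \Cref{DepthBotTreeState} applied to each vertex $(a,b)$ of the product, and the lower bound from exhibiting a vertex of $\bott A\bott B$ at depth $\min(\depth(\bott A),\depth(\bott B))$. The only difference is that you explicitly justify what the paper's terser proof takes for granted, namely that every intermediate depth is attained in a forest (via parent chains and K\H{o}nig's lemma in the infinite case) and that the empty and infinite conventions are respected.
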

\begin{proof}
  We have:
  $$S_{\bott A\bott B}=\{(a,b)\in S_{\bott A}\times S_{\bott B}:
  \depth_{\bott A}(a)=\depth_{\bott B}(b)\}.$$

  From \Cref{DepthBotTreeState}, a state $(a,b)\in S_{\bott A}\times
  S_{\bott B}$ has depth at most $\min(\depth_{\bott
    A}(a),\depth_{\bott B})$. Moreover, if $k=\min(\depth(\bott A),
  \depth(\bott B))$ and we let $a\in S_{\bott A},b\in S_{\bott B}$ two
  states that have both depth $k$ in their respective trees, then
  $(a,b)$ has depth $k$ too.
\end{proof}

\subsection{Permutations}

For every $k\geq 1$, we denote $C_k$ the \textit{cycle of length $k$}
defined as the FDS whose graph is the directed cycle of length $k$. We
say that $A\in\DD$ is a \textit{permutation} if the function $A$ is
bijective. In that case, all the states of $A$ are cycle states. We
denote the semiring of permutations $\DD_P$ (it has multiplicative
identity $C_1$ and additive identity the empty function). In
particular, for any $A\in\DD$, $[A]_0\in \DD_P$. 

 We introduce two shortened notations: $a\vee
b=\text{lcm}(a,b)$ and $a\wedge b=\text{gcd}(a,b)$. In \cite{PolEqFDS}, the following very useful and simple result is
 proven:
 
\begin{lem}\label{LemCycleProduct}
  $C_a\times C_b=(a\wedge b)C_{a\vee b}$.
\end{lem}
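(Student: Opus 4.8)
The plan is to work in explicit coordinates. I identify $S_{C_a}$ with $\mathbb{Z}/a\mathbb{Z}$ so that $C_a$ acts as $x\mapsto x+1 \pmod a$, and likewise $S_{C_b}$ with $\mathbb{Z}/b\mathbb{Z}$. By the definition of the product of FDSs, $C_a\times C_b$ then acts on $\mathbb{Z}/a\mathbb{Z}\times\mathbb{Z}/b\mathbb{Z}$ by $(x,y)\mapsto(x+1,y+1)$, and its $t$-th iterate sends $(x,y)$ to $(x+t,y+t)$. The whole argument reduces to reading off the cycle structure of this very concrete map.

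First I would observe that $C_a\times C_b$ is itself a permutation: each coordinate map is a bijection, hence so is the product map. Consequently every one of its $ab$ states is a cycle state and $\Gg_{C_a\times C_b}$ is a disjoint union of directed cycles; it only remains to determine their common length and their number. To get the length, I compute the period of the orbit of an arbitrary state $(x,y)$: it is the least $t>0$ with $(x+t,y+t)=(x,y)$, i.e. the least $t>0$ satisfying simultaneously $t\equiv 0\pmod a$ and $t\equiv 0\pmod b$. That smallest $t$ is precisely $\mathrm{lcm}(a,b)=a\vee b$, and crucially this value does not depend on the starting point $(x,y)$. Hence every cycle of $C_a\times C_b$ has length exactly $a\vee b$.

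Finally I would count the cycles. Since the $ab$ states partition into cycles that all have the same length $a\vee b$, the number of cycles is $\tfrac{ab}{a\vee b}$, and by the elementary identity $(a\wedge b)(a\vee b)=ab$ this equals $a\wedge b$. Therefore $C_a\times C_b$ is a disjoint union of $a\wedge b$ cycles of length $a\vee b$, which is exactly $(a\wedge b)C_{a\vee b}$. There is no genuine obstacle in this proof; the only points requiring care are verifying that the orbit period is independent of the chosen starting state (so that the orbits really do partition into equal-length cycles, justifying the simple division to count them) and invoking the standard number-theoretic identity $\gcd\cdot\mathrm{lcm}=ab$ at the last step.
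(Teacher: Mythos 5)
Your proof is correct. Note that the paper does not actually prove this lemma itself: it is quoted as a known result from the reference \cite{PolEqFDS}, so there is no internal proof to compare against. Your argument is the standard, self-contained one: realising $C_a\times C_b$ as the shift $(x,y)\mapsto(x+1,y+1)$ on $\mathbb{Z}/a\mathbb{Z}\times\mathbb{Z}/b\mathbb{Z}$, observing that this map is a bijection so all $ab$ states are cycle states, computing that every orbit has the same length $a\vee b$ (the least $t>0$ divisible by both $a$ and $b$), and then counting $ab/(a\vee b)=a\wedge b$ cycles via the identity $(a\wedge b)(a\vee b)=ab$. Each of these steps is correctly justified, including the point that the orbit length is independent of the starting state, which is what legitimises the final division.
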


We now extend it to arbitrary products of cycles. For any multiset $J$ of positive integers, we denote $\bigmeet J = \bigmeet_{j \in J} j$ and $\bigjoin J = \bigjoin_{j \in J} j$; if $J$ is empty, then those terms are equal to $1$.

\begin{lem} \label{LemDelta}
Let $J$ be a multiset of positive integers. Then $\prod_{j \in J} C_j = \delta_J C_{\bigjoin J}$, where $\delta_J$ is recursively defined as $\delta_\emptyset = 1$ and for any $a \in \N$
\[
    \delta_{ J \cup \{a\} } = (a \meet \bigjoin J) \delta_J.
\]
\end{lem}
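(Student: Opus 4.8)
The plan is to induct on the cardinality of the multiset $J$, using \Cref{LemCycleProduct} as the single-step product rule together with the distributivity of the product over the sum in the semiring $\DD_P$.

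For the base case $J = \es$, the empty product is the multiplicative identity $C_1$, while $\bigjoin \es = 1$ and $\delta_\es = 1$ yield $\delta_\es C_{\bigjoin \es} = C_1$, so the two sides agree. For the inductive step I would write $J = J' \cup \{a\}$ and apply the induction hypothesis to obtain $\prod_{j \in J'} C_j = \delta_{J'} C_{\bigjoin J'}$. Since the coefficient $\delta_{J'}$ records a number of disjoint summands, distributivity lets me pull it out:
\[
\prod_{j \in J} C_j = C_a \times \bigl(\delta_{J'} C_{\bigjoin J'}\bigr) = \delta_{J'}\bigl(C_a \times C_{\bigjoin J'}\bigr).
\]
Then \Cref{LemCycleProduct} gives $C_a \times C_{\bigjoin J'} = (a \meet \bigjoin J') C_{a \join \bigjoin J'}$. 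Observing that $a \join \bigjoin J' = \bigjoin J$ and collecting the integer coefficients, I obtain $\prod_{j \in J} C_j = (a \meet \bigjoin J')\,\delta_{J'}\, C_{\bigjoin J}$, which is exactly $\delta_J C_{\bigjoin J}$ by the recursive definition of $\delta$.

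The one genuine subtlety — and the step I expect to require care — is the \emph{well-definedness} of $\delta_J$. The recurrence peels off a single element $a$ from $J$, but a multiset offers no canonical choice of which element to remove, so the outcome must be shown independent of that choice (and hence the choice of $J'$ above is legitimate). I would resolve this by exhibiting the closed form $\delta_J = \bigl(\prod_{j \in J} j\bigr)/\bigjoin J$, which is manifestly symmetric in the elements of $J$, and then verify that it satisfies the recurrence via the identity $(a \meet m)(a \join m) = a m$ with $m = \bigjoin J'$: this gives $(a \meet m)/m = a/(a \join m)$, so multiplying $\delta_{J'}$ by $a \meet m$ indeed reproduces $\delta_J$.

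This closed form also matches the graph-theoretic reading of the statement, which provides an independent sanity check: $\prod_{j \in J} C_j$ has $\prod_{j \in J} j$ states, all of them cycle states lying on cycles of the common length $\bigjoin J$, and therefore consists of exactly $\delta_J = \bigl(\prod_{j \in J} j\bigr)/\bigjoin J$ disjoint cycles of that length. Once well-definedness is settled, the induction above is entirely routine, the only non-formal ingredient being the arithmetic identity relating $\gcd$ and $\operatorname{lcm}$.
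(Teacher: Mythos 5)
Your proof is correct and takes essentially the same route as the paper's: induction on the cardinality of $J$, peeling one element off, pulling the scalar $\delta_{J'}$ out by distributivity, and applying \Cref{LemCycleProduct} together with $a \join \bigjoin J' = \bigjoin J$. Your extra care about the well-definedness of the recursively defined $\delta_J$, settled via the closed form $\delta_J = \bigl(\prod_{j \in J} j\bigr)/\bigjoin J$, addresses a point the paper's proof leaves implicit, but it does not change the underlying argument.
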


\begin{proof}
 The proof is by induction on the cardinality of $J$. The result is clear when $J$ is empty. Assume it is true for $J$, and let $a \in \N$. Then
 \[
    \prod_{ j \in J \cup \{a\} } C_j = C_a \prod_{j \in J} C_j = \delta_J  C_a C_{\bigjoin J} = \delta_J ( a \meet \bigjoin J ) C_{\bigjoin J \cup \{a\}}.
 \]
\end{proof}

Given a permutation
$A\in \DD_P$, such that the length of each of its cycles is a multiple
of some $k\in\N$, and $k$ trees $\bott T_0, \dots, \bott
T_{k-1}\in\FF_T$, we denote $A(\bott T_0, \dots, \bott T_{k-1})$ the FDS
obtained by taking each cycle of $A$, traversing it by following the
arrows, and anchoring on the $i$-th state encountered the dendron
$T_{i\mod k}$. This is pictured in \Cref{CycleTreezation}.

We use the following notation: for $A\in \DD$, and for all $i\in\N$,
we denote $\lambda_i^A$ the number of cycles of length $i$ in $A$.

Finally, given an FDS $A\in\DD$, and a set $L\subs\N$, we
define the \textit{$L$-support} of $A$, denoted $\supp_L(A)$, as the FDS
made of the connected components of $A$ with cycle size in $L$.

The following two results will prove useful to understand the product
of a permutation with a dendron.

\begin{lem}
    For any $\ell, k \geq 1$ and trees $\bott T_1, \dots, \bott T_k\in\FF_T$,
    $C_k(\bott T_1, \dots, \bott T_k)\times C_\ell=(C_kC_\ell)(\bott T_1, \dots, \bott T_k)$.
\end{lem}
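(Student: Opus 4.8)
The plan is to unpack the left-hand side as an FDS directly: locate its cycle states, identify the tree anchored on each of them, and then check that the resulting decorated permutation is exactly the one described on the right-hand side.

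First I would fix notation. Write $c_1,\dots,c_k$ for the states of the cycle underlying $A:=C_k(\bott T_1,\dots,\bott T_k)$, with $A(c_i)=c_{i+1}$ and $\bott T_i$ anchored on $c_i$ (indices mod $k$), and write $d_1,\dots,d_\ell$ for the states of $C_\ell$ with $C_\ell(d_j)=d_{j+1}$ (indices mod $\ell$). Since every state of $C_\ell$ is a cycle state, a product state $(s,d_j)$ is a cycle state of $A\times C_\ell$ exactly when $s$ is a cycle state of $A$, i.e. $s=c_i$ for some $i$. Hence the cycle states of $A\times C_\ell$ are the pairs $(c_i,d_j)$, and the product map sends $(c_i,d_j)\mapsto(c_{i+1},d_{j+1})$, which is precisely the action of $C_k\times C_\ell$. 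By \Cref{LemCycleProduct} this cycle part equals $C_kC_\ell=(k\wedge\ell)C_{k\vee\ell}$, matching the permutation underlying the right-hand side.

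The crux is to determine the tree anchored on a cycle state $(c_i,d_j)$. Because $C_\ell$ is a permutation, every tree state of the product has the form $(s,t)$ with $s$ a tree state of $A$, and iterating the product map $p:=\depth_A(s)$ times sends $(s,t)$ to $(A^p(s),C_\ell^p(t))$, whose first coordinate is the root state carrying $\bott T_i$ and whose second coordinate is then forced. I would then verify that $s\mapsto(s,d_{j-\depth_A(s)})$ is an isomorphism from $\bott T_i$ onto the tree anchored on $(c_i,d_j)$ (namely $(c_i,d_j)$ together with its tree-state predecessors): it is a bijection onto that set, and it intertwines the parent maps, since the parent of $(s,d_{j-p})$ is $(A(s),d_{j-p+1})$ while $\depth_A(A(s))=p-1$. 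Thus the tree anchored on $(c_i,d_j)$ is $\bott T_i$, depending only on $i$ mod $k$.

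Finally I would reconcile the two decorations. Traversing a cycle of $C_k\times C_\ell$ from $(c_i,d_j)$, after $t$ steps one reaches $(c_{i+t},d_{j+t})$, on which the tree $\bott T_{i+t}$ (indices mod $k$) is anchored; as $k\vee\ell$ is a multiple of $k$, this assignment is periodic of period $k$ and runs through $\bott T_1,\dots,\bott T_k$ in cyclic order, which is exactly the anchoring defining $(C_kC_\ell)(\bott T_1,\dots,\bott T_k)$. The only freedom is the starting state chosen on each cycle when applying that definition, but by the same periodicity the result is independent of this choice up to isomorphism, so the two FDSs coincide. I expect the main difficulty to be purely in the bookkeeping: confirming that the second coordinate is uniquely determined along each branch of the anchored tree (which relies on $C_\ell$ being invertible) and that the periodic decoration does not depend on the starting state, so that the explicit bijection is genuinely a graph isomorphism onto the right-hand side.
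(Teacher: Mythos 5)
Your proof is correct and takes essentially the same route as the paper's: identify the cycle states of the product via \Cref{LemCycleProduct}, then use the fact that $C_\ell$ is a permutation (no tree states, unique predecessors) to conclude that the tree anchored on a cycle state $(c_i,d_j)$ is $\bott T_i$, determined by the first coordinate alone. If anything, your write-up is more careful than the paper's terse argument, since you make explicit both the tree isomorphism $s\mapsto\bigl(s,d_{j-\depth_A(s)}\bigr)$ and the closing periodicity check that the trees appear along each product cycle in the cyclic order required by the definition of $(C_kC_\ell)(\bott T_1,\dots,\bott T_k)$.
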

\begin{proof}
  Take a product isomorphism for the product $C_k\times
  C_\ell=(k\wedge\ell)C_{k\vee \ell}$, and write $S_{(k\wedge
    \ell)}C_{k\vee\ell}\simeq S_{C_k}\times S_{C_\ell}$ accordingly.
  Then, take $(i,j)\in S_{(k\wedge \ell)C_{k\vee\ell}}$. Let's show
  that the tree that is anchored on this state in $S_{C_k(\bott T_1, \dots,
    \bott T_k)\times C_\ell}$ is the tree that is anchored on $i$ in
  $C_k(\bott T_1, \dots, \bott T_k)$, say $\bott T_i$. Indeed, since $C_\ell$ has no tree
  states, the tree states over $(i,j)$ have a first component with is
  a tree state, and a second one which is a cycle state. But since
  each state of $C_\ell$ has exactly one predecessor, and the tree
  anchored on $(i,j)$ is indeed $\bott T_i$. This proves the result.
\end{proof}

\begin{cor}\label{CorMultPermTrees}
  For any $A\in\DD_P$ and trees $\bott T_1, \dots, \bott T_k\in\FF_T$, $C_k(\bott T_1,
  \dots, \bott T_k)\times A = (C_kA)(\bott T_1, \dots, \bott T_k)$.
\end{cor}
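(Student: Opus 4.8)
The plan is to reduce the arbitrary permutation $A$ to the single-cycle case already settled by the preceding lemma, using distributivity of the product over the sum. First I would decompose $A$ into its cycles, writing $A = \sum_{i \geq 1} \lambda_i^A C_i$ (a finite sum, since $A$ is finite). Because $\DD$ is a semiring, the product distributes over the sum, so
$$C_k(\bott T_1, \dots, \bott T_k) \times A = \sum_{i \geq 1} \lambda_i^A \left( C_k(\bott T_1, \dots, \bott T_k) \times C_i \right).$$

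Next I would apply the preceding lemma to each summand, giving $C_k(\bott T_1, \dots, \bott T_k) \times C_i = (C_k C_i)(\bott T_1, \dots, \bott T_k)$. Before doing so I should check that the right-hand notation is legitimate: by \Cref{LemCycleProduct} we have $C_k C_i = (k \wedge i) C_{k \vee i}$, and each cycle length $k \vee i$ is a multiple of $k$, so anchoring the $k$ trees on it is well-defined.

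The remaining step is to recombine the summands into a single anchoring. This rests on the additivity of the operation $P \mapsto P(\bott T_1, \dots, \bott T_k)$ over sums of permutations: since every connected component of a permutation is a single cycle, and the trees anchored on one cycle depend only on that cycle, we have $(P + Q)(\bott T_1, \dots, \bott T_k) = P(\bott T_1, \dots, \bott T_k) + Q(\bott T_1, \dots, \bott T_k)$ whenever both cycle-length conditions hold. Combining this with the distributivity $C_k A = \sum_{i} \lambda_i^A C_k C_i$ yields
$$\sum_{i \geq 1} \lambda_i^A (C_k C_i)(\bott T_1, \dots, \bott T_k) = \left( \sum_{i \geq 1} \lambda_i^A C_k C_i \right)(\bott T_1, \dots, \bott T_k) = (C_k A)(\bott T_1, \dots, \bott T_k),$$
which is the claimed identity.

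I expect the only genuine subtlety to be the additivity of the anchoring operation, and in particular verifying that $P(\bott T_1, \dots, \bott T_k)$ is well-defined up to isomorphism independently of which vertex is chosen to start each cyclic traversal. This holds because changing the starting point merely rotates the periodic pattern of attached trees around a cycle whose length is a multiple of $k$, producing an isomorphic FDS; once this is granted, additivity follows immediately from the component-wise description of a permutation, and the rest of the argument is routine bookkeeping.
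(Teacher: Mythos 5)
Your proof is correct and takes essentially the same approach as the paper's: decompose $A = \sum_i \lambda_i^A C_i$ into its cycles, distribute the product over the sum, apply the preceding single-cycle lemma to each summand, and recombine using the (componentwise) additivity of the anchoring operation $P \mapsto P(\bott T_1, \dots, \bott T_k)$. The extra checks you flag --- that $k \vee i$ is a multiple of $k$ so the anchoring is well-defined, and that anchoring is additive over connected components --- are left implicit in the paper's one-line proof but are exactly the right points to verify.
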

\begin{proof}
  Let's write $A=\sum_{i\in\N}\lambda^A_iC_i$. Then, 
  \[
  C_k(\bott T_1, \dots,
  \bott T_k)\times A=\sum_{i\in\N}\lambda^A_i (C_kC_i)(\bott T_1, \dots, \bott T_k) =
  (C_kA)(\bott T_1, \dots, \bott T_k)
  \]
  from the previous lemma.
\end{proof}

\begin{figure}
  \begin{subfigure}{0.3\textwidth}
    \begin{center}
      \begin{tikzpicture}
        \node[draw,circle] (1) {\phantom{$1$}};
        \node[draw,circle] (2) [above right=of 1] {\phantom{$2$}};
        \node[draw,circle] (3) [above left=of 1] {\phantom{$3$}};
        \draw[->] (2) -- (1);
        \draw[->] (3) -- (1);
      \end{tikzpicture}
    \end{center}
    \caption{$\bott T_1$.}
  \end{subfigure}
  \begin{subfigure}{0.3\textwidth}
    \begin{center}
      \begin{tikzpicture}
        \node[draw,circle] (1) {\phantom{$1$}};
        \node[draw,circle] (2) [above=of 1] {\phantom{$2$}};
        \node[draw,circle] (3) [above=of 2] {\phantom{$3$}};
        \draw[->] (2) -- (1);
        \draw[->] (3) -- (2);
      \end{tikzpicture}
    \end{center}
    \caption{$\bott T_2$.}
  \end{subfigure}
  \begin{subfigure}{0.3\textwidth}
    \resizebox{1\textwidth}{!}{
    \begin{tikzpicture}
      \node[draw,circle] (1) {\phantom{$1$}};
      \node[draw,circle] (2) [below=of 1] {\phantom{$2$}};
      \draw[->] (1) edge[bend left] (2);
      \draw[->] (2) edge[bend left] (1);

      \node[draw,circle] (1A1) [above left=of 1] {\phantom{$1$}};
      \node[draw,circle] (1A2) [above right=of 1] {\phantom{$1$}};
      \draw[->] (1A1) -- (1);
      \draw[->] (1A2) -- (1);

      \node[draw,circle] (1B1) [below=of 2] {\phantom{$1$}};
      \node[draw,circle] (1B2) [below=of 1B1] {\phantom{$1$}};
      \draw[->] (1B1) -- (2);
      \draw[->] (1B2) -- (1B1);
      
      \tikzxshift{4cm}
      \node[draw,circle] (3) {\phantom{$3$}};
      \node[draw,circle] (4) [right=of 3] {\phantom{$4$}};
      \node[draw,circle] (5) [below=of 4] {\phantom{$5$}};
      \node[draw,circle] (6) [left=of 5] {\phantom{$6$}};
      \draw[->] (3) -- (4);
      \draw[->] (4) -- (5);
      \draw[->] (5) -- (6);
      \draw[->] (6) -- (3);

      \node[draw,circle] (5A1) [below left=of 5] {\phantom{$1$}};
      \node[draw,circle] (5A2) [below right=of 5] {\phantom{$1$}};
      \draw[->] (5A1) -- (5);
      \draw[->] (5A2) -- (5);

      \node[draw,circle] (3A1) [above left=of 3] {\phantom{$1$}};
      \node[draw,circle] (3A2) [above right=of 3] {\phantom{$1$}};
      \draw[->] (3A1) -- (3);
      \draw[->] (3A2) -- (3);

      \node[draw,circle] (2B1) [below left=of 6] {\phantom{$1$}};
      \node[draw,circle] (2B2) [below=of 2B1] {\phantom{$1$}};
      \draw[->] (2B1) -- (6);
      \draw[->] (2B2) -- (2B1);
      
      \node[draw,circle] (4B1) [above right=of 4] {\phantom{$1$}};
      \node[draw,circle] (4B2) [above=of 4B1] {\phantom{$1$}};
      \draw[->] (4B1) -- (4);
      \draw[->] (4B2) -- (4B1);
    \end{tikzpicture}
    }
    \caption{$A(\bott T_1, \bott T_2)$.}
  \end{subfigure}

  \caption{The FDS $A(\bott T_1, \bott T_2)$ for $A=C_2+C_4$, and two
    trees $\bott T_1, \bott T_2$.}
  \label{CycleTreezation}
\end{figure}

\section{Cancellative finite dynamical systems} \label{sec:Cancellative}

It is known that the division operation can sometimes not yield a
unique result; a well-known example is: $C_2^2=2C_2$. We can also show
that if we have $AB=AC$ for $A,B,C\in\DD$, even setting $[B]_0=[C]_0$
does not guarantee that $B=C$, since, given two different trees $\bott
T_1$ and $\bott T_2$, we have the identity
$C_2(2\bott T_1+C_2(\bott T_2))=C_2(C_2(\bott T_1)+2\bott T_2)$. We therefore consider the elements for which division is unambiguous, defined as follows.

\begin{defn}[Cancellative element]
  An FDS $A\in\DD$ is said \textit{cancellative} if for all
  $B,C\in\DD$, $AB=AC\implies B=C$.
\end{defn}

In this section, we prove that an FDS is cancellative if and only if
it has a fixpoint. We approach this theorem in steps. First, we
introduce an order on trees, based on a code for trees. Then, we move
on to show that we can transform FDSs into forests in a way that works
well with both the product on forests and on FDSs. Finally, we show
that result.

\subsection{Order on trees}

It will prove very useful to
have a total order on finite trees that is compatible with the
product. That is, if $\bott T_1,\bott T_2,\bott T_3,\bott
T_4\in\FF_T$, and $\bott T_1 < \bott T_2$ and $\bott T_3\leq\bott T_4$,
we want to have $\bott T_1\bott T_3<\bott T_2\bott T_4$. This will be
guaranteed by \Cref{CompleteCodeOrder}.

To do so, we define a code $\Cc_f$ from finite trees to $\N^*$ (the
set of finite sequences of nonnegative integers), and we say that
$\bott T_1<\bott T_2\iff \Cc_f(\bott T_1)<_{\text{lex}}\Cc_f(\bott
T_2)$ (where $<_{\text{lex}}$ is the lexicographical order).

The code is computed as follows, using two mutually recursive
functions. We consider for a moment that trees are ordered: the
children of a node are stored in an ordered list, say, from left to
right. That is, for a tree $\bott T$, $\Dd(\bott T)$ is now a tuple
rather than a multiset. Then, we define a procedure $collect$ which
takes a finite tree, sorts it (using the function $sort$ defined
below), and then traverses level by level, following the
order of the predecessors, starting from depth $0$, and outputs a
tuple of the number of predecessors of each node encountered.

We also define a procedure $sort$ that takes a finite tree $\bott T$,
begins by calling $collect$ on each of the subtrees anchored on direct
predecessors of the root, and then order those predecessors from left
to right by increasing return value of $collect$. Finally, $\Cc_f(\bott
T)=collect(\bott T)$.

A pseudocode implementation of $sort$ and $collect$ is found in \Cref{SortAndCollect}.

\begin{figure}
  \begin{subfigure}[B]{0.5\textwidth}
    \begin{center}
      \begin{procedure}[H]
        \If{$|\bott T|>1$}{
          $\bott T\leftarrow sort(\bott T)$\;
        }
        $t=[]$\;
        \For{$i\in\iitv{0,\depth(\bott T)}$}{
          \For{$v$ in $\bott T$'s depth $i$, from left to right}{
            $d\leftarrow$ number of children of $v$\;
            $t\leftarrow t :: d$\;
          }
        }
        \Return $t$\;
      \end{procedure}
    \end{center}
    \caption{$collect(\bott T)$ ($::$ is the concatenation operator)}
  \end{subfigure}
  \begin{subfigure}[B]{0.5\textwidth}
    \begin{center}
      \begin{procedure}[H]
        $(T_1, \dots, T_n) := \Dd(\bott T)$\;
        \For{$i\in\iitv{1,n}$}{
          $c_i\leftarrow collect(T_i)$\;
        }
        $(U_1, \dots, U_n)\leftarrow$ sort $(T_1, \dots, T_n)$ by increasing
        $(c_1, \dots, c_n)$\;
        Let $\bott{T'}$ such that $\Dd(\bott{T'})=(U_1, \dots, U_n)$\;
        \Return $\bott{T'}$.
      \end{procedure}
    \end{center}
    \caption{$sort(\bott T)$}
  \end{subfigure}

  \caption{The two mutually recursive functions for computing $\Cc_f$.}
  \label{SortAndCollect}
\end{figure}

\begin{expl}
  The tree in \Cref{fig:tildeExTree} has the code $(2, 0, 2, 0, 0)$;
  its states are traversed in the following order: $A, B, C, D, E$ in
  the topmost call to $collect$.
\end{expl}

\begin{lem}\label{CodePrefixFree}
  The code $\Cc_f$ is prefix-free. That is, if $\bott T, \bott{T'}\in
  \FF_T$ are such that $\Cc_f(\bott T)$ is a prefix of $\Cc_f(\bott{T'})$,
  we have $\Cc_f(\bott T)=\Cc_f(\bott{T'})$.
\end{lem}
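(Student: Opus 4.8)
The plan is to exploit a simple counting invariant satisfied by the level-order degree sequence produced by $collect$. Write $\Cc_f(\bott T)=(d_1,\dots,d_m)$, where $d_i$ is the number of children recorded for the $i$-th vertex visited by the topmost call to $collect$ and $m=|\bott T|$ (one entry is output per vertex, so the length of the code equals the number of vertices). The first thing I would establish is the identity
\[
    \sum_{i=1}^{m} d_i = m-1,
\]
which merely says that every vertex except the root is the child of exactly one vertex, so the total number of recorded children equals the number of edges, namely $m-1$. Equivalently, $m-\sum_{i=1}^m d_i = 1$.

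Next I would prove the key inequality for strict prefixes: for every $k$ with $1\le k<m$,
\[
    \sum_{i=1}^{k} d_i \ge k.
\]
The structural fact driving this, which I expect to be the main point to argue carefully, is that $collect$ visits vertices by increasing depth, so the parent of any visited vertex is visited strictly before it. Consequently, if $v_1,\dots,v_m$ is the visiting order and $k<m$, then each of the $k$ vertices $v_2,\dots,v_{k+1}$ has its parent among $v_1,\dots,v_k$; these are $k$ distinct children of the first $k$ vertices, so $\sum_{i=1}^k d_i\ge k$.

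With these two facts the lemma follows at once. Suppose $\Cc_f(\bott T)=(d_1,\dots,d_m)$ is a prefix of $\Cc_f(\bott{T'})=(d'_1,\dots,d'_n)$, so $m\le n$ and $d_i=d'_i$ for $i\le m$. The identity applied to $\bott T$ gives $\sum_{i=1}^m d_i=m-1$. If we had $m<n$, then $(d'_1,\dots,d'_m)$ would be a strict prefix of $\Cc_f(\bott{T'})$, and the inequality applied to $\bott{T'}$ would give $\sum_{i=1}^m d'_i\ge m$; but $\sum_{i=1}^m d'_i=\sum_{i=1}^m d_i=m-1<m$, a contradiction. Hence $m=n$ and $\Cc_f(\bott T)=\Cc_f(\bott{T'})$.

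The only genuine subtlety, and the step I would write most carefully, is the prefix inequality: it relies on the traversal being genuinely level-by-level (so parents precede children) together with connectedness of the tree (so the degree-sum identity is exact). Note that the sorting performed by $sort$ plays no role in the argument, since it only reorders siblings and therefore preserves both the vertex count and the property that every parent precedes its children.
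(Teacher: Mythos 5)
Your proof is correct and follows essentially the same route as the paper's: both rest on the counting invariant that compares the running sum of recorded degrees with the number of vertices read (the paper packages it as the quantity $\delta(c,i)=\sum_{j=1}^i c_j - i$, the ``frontier'' of announced-but-unvisited vertices, which is positive for a strict prefix and vanishes only at the end of a code). If anything, your two explicit facts $\sum_{i=1}^m d_i = m-1$ and $\sum_{i=1}^k d_i \ge k$ for $k<m$ are a slightly cleaner rendition, since the paper's assertion that $\delta(c,|c|)=0$ is off by one with its stated formula (the full degree sum is $|c|-1$, so $\delta(c,|c|)=-1$), a slip your formulation avoids.
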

\begin{proof}
  Given a code $c$, and an index $i$, write $\delta(c,i)=\sum_{j=1}^i
  c_j - i$. This is the number of vertices that have been announced as
  children of vertices in $c_1, \dots, c_i$ but which are not themselves
  in $c_1, \dots, c_i$. Thus, if we are reading a code $c$, and we have
  read the $i$ first elements, we know that we must read at least
  $\delta(c,i)$ other elements. Moreover, remark that if
  $\delta(c,i)=0$, then we are at the end of the code, since we have
  already read the children of every vertex.

  Now, suppose that $c:=\Cc_f(\bott{T})$ is a prefix of
  $c':=\Cc_f(\bott{T'})$. Let $i=|c|$: we have $\delta(c,i)=0$ since
  $c$ is completely read once we have read the $i$ first elements.
  Moreover, we must have $\delta(c',i)=\delta(c,i)$ since
  $c'_1 \dots c'_i = c_1 \dots c_i$. So, $\delta(c',i)=0$ too, and thus, $c=c'$.
\end{proof}

We now say that, for two trees $\bott T,\bott{T'}$, we have $\bott
T\leq_f \bott{T'}$ if $\Cc_f(\bott{T})\leq_\text{lex} \Cc_f(\bott{T'})$.
We claim that this defines a total order on trees. Reflexivity and
transitivity are trivial, and its antisymmetry is guaranteed by the
following lemma:

\begin{lem}\label{OrderAntisymmetry}
  For any two finite trees $\bott T,\bott{T'}$, $\Cc_f(\bott
  T)=\Cc_f(\bott{T'})\implies \bott T=\bott{T'}$.
\end{lem}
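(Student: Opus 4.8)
The plan is to prove the stronger, constructive statement that $\Cc_f(\bott T)$ determines $\bott T$ uniquely, by exhibiting a deterministic procedure that reconstructs a tree from its code and inverts $collect$. Recall that $collect$ first replaces $\bott T$ by a sorted ordered tree (via $sort$) and then lists, in breadth-first order (level by level, and within a level from left to right), the number of children of each vertex; since $sort$ only permutes siblings, it does not change the isomorphism type of $\bott T$. It therefore suffices to show that the breadth-first child-count sequence determines the underlying ordered tree: if I can recover a unique ordered tree $O(c)$ from a code $c$, then from $\Cc_f(\bott T)=\Cc_f(\bott{T'})=c$ the sorted ordered trees built for $\bott T$ and for $\bott{T'}$ are both equal to $O(c)$, hence isomorphic, and as each is isomorphic to its original we conclude $\bott T = \bott{T'}$.

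The first step is to recover the level structure from $c=(c_1,\dots,c_n)$, where $n=|\bott T|$ and each vertex contributes exactly one entry. The root is the unique vertex of depth $0$ and corresponds to $c_1$; inductively, once the contiguous block of entries listing the depth-$\ell$ vertices has been located, the depth-$(\ell+1)$ vertices are exactly the children of the depth-$\ell$ vertices, so they form the next $\sum_j c_j$ entries, where the sum ranges over the depth-$\ell$ block. Here the accounting function $\delta(c,i)$ of \Cref{CodePrefixFree}, which counts the vertices already announced as children but not yet listed among $c_1,\dots,c_i$, is the key device: it tracks how many entries of the current and subsequent levels remain and detects exactly when a level is exhausted. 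Consequently the positions of the level boundaries, and in particular the number of vertices at each depth, are functions of $c$ alone.

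The second step is to recover the edges. Within each level the children are listed contiguously and in the left-to-right order of their parents: the first $c_j$ entries of the depth-$(\ell+1)$ block are the children of the first depth-$\ell$ vertex, the next entries are the children of the second, and so on. Hence the parent of every non-root vertex, together with the left-to-right order among siblings, is forced by $c$. This produces a unique ordered tree $O(c)$ with $collect(O(c))=c$, establishing that $collect$ is injective on sorted ordered trees and completing the argument via the reduction of the first paragraph.

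I expect the main obstacle to lie entirely in the first two steps, namely rigorously showing that the level boundaries are recoverable even though all levels are concatenated into a single flat sequence; the $\delta$-accounting introduced for \Cref{CodePrefixFree} is precisely the tool that pins these boundaries down, after which the assignment of children to parents and the final conclusion are routine bookkeeping.
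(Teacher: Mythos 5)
Your proposal is correct and follows essentially the same route as the paper's own proof: both reconstruct the (sorted, ordered) tree uniquely from the code by recovering level boundaries via the sums of child counts and then assigning children to parents left to right within each level. The only cosmetic difference is that you frame the level-boundary recovery via the $\delta$-accounting of \Cref{CodePrefixFree} and make explicit the (paper-implicit) point that $sort$ preserves the isomorphism type; neither changes the substance of the argument.
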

\begin{proof}
  We just show that we can reconstruct $\bott T$ from $\Cc_f(\bott
  T)=collect(\bott T)$. We can ignore the call to $sort(\bott T)$ in
  $collect(\bott T)$: we can consider that the tree $\bott T$ we will
  recover is already sorted. To shorten notations, let's write
  $c:=\Cc_f(\bott T)$.

  First, we can partition $c$ into levels. Indeed, remark that if we
  know that the indices corresponding to states at depth $d$ form the
  set $\iitv{k,\ell}$, then we know that the number of states at depth
  $d+1$ is $\sum_{j=k}^\ell c_j$, and so the states at depth $d+1$
  correspond to indices $\iitv{\ell+1, \ell+\sum_{j=k}^\ell c_j}$. So,
  we can now iterate on the levels of $c$: let's write for convenience
  $k_d,\ell_d\in\N$ the first and last indices of states at depth $d$.

  The first level, corresponding to depth $0$, is easy to reconstruct:
  simply create the root. For our induction, we also create the
  predecessors of the root, of which we know the number, so the
  induction begins at depth $1$.

  Now, suppose we have uniquely reconstructed $\bott T$ up to depth
  $d$, and that we want to reconstruct level $d+1$. We traverse our
  reconstructed depth $1$ from left to right, and simultaneously
  traverse $c_{k_d}, \dots, c_{\ell_d}$. The $j$-th state we encounter
  has degree $c_{k_d+j}$, so we create $c_{k_d+j}$ children for that
  state. Thus, the level at depth $d+1$ is reconstructed uniquely too.

  Thus, we reconstruct $\bott T$, and this concludes the proof.
\end{proof}

The results which make this code useful are the following lemma and its
corollary.

\begin{lem}\label{TreeOrderProduct}
  For all finite trees $\bott T_1, \bott T_2, \bott T_3\in\FF_T$, we
  have $[\bott T_1]_{\depth(\bott T_3)}<_f[\bott T_2]_{\depth(\bott T_3)}\implies \bott T_1\bott T_3<_f\bott
  T_2\bott T_3$.
\end{lem}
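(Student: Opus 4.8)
The plan is to first reduce to the case where $\bott T_1$ and $\bott T_2$ already have depth at most $d := \depth(\bott T_3)$. Since the tree product only pairs states of equal depth and every state of $\bott T_3$ has depth at most $d$, a state $(a,b)$ of $\bott T_1\bott T_3$ forces $\depth_{\bott T_1}(a)=\depth_{\bott T_3}(b)\le d$; hence only the depth-$\le d$ part of $\bott T_1$ participates and $\bott T_1\bott T_3 = [\bott T_1]_d\,\bott T_3$, and likewise for $\bott T_2$. Writing $\bott S_1 = [\bott T_1]_d$ and $\bott S_2 = [\bott T_2]_d$, the hypothesis becomes $\bott S_1 <_f \bott S_2$ (with $\bott S_1\ne\bott S_2$, as they are finite trees of depth $\le d$) and the goal becomes $\bott S_1\bott T_3 <_f \bott S_2\bott T_3$.

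The combinatorial heart of the argument is a statement about multisets of positive integers: if $A,B$ are finite multisets with $|A|=|B|$ and $\mathrm{sort}(A) <_{\mathrm{lex}} \mathrm{sort}(B)$ (ascending sorts), and $Q$ is any nonempty finite multiset, then $\mathrm{sort}(AQ) <_{\mathrm{lex}} \mathrm{sort}(BQ)$, where $AQ = \{\{aq : a\in A, q\in Q\}\}$. I would prove this using the cumulative counting functions $F_X(t) = |\{x\in X : x\le t\}|$: the hypothesis is equivalent to the existence of a threshold $v$ with $F_A = F_B$ on $[0,v)$ and $F_A(v) > F_B(v)$, and since $F_{XQ}(t) = \sum_{q\in Q} F_X(t/q)$, taking $v' = v\cdot\min(Q)$ gives $F_{AQ} = F_{BQ}$ on $[0,v')$ together with $F_{AQ}(v') > F_{BQ}(v')$, which is exactly $\mathrm{sort}(AQ) <_{\mathrm{lex}} \mathrm{sort}(BQ)$.

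To transfer this to trees I would compare codes level by level. Writing $\Cc_f(\bott T) = L_0(\bott T) :: L_1(\bott T) :: \cdots$, where $L_i(\bott T)$ lists the child-counts of the depth-$i$ vertices in canonical (sorted) BFS order, the prefix-free property and \Cref{OrderAntisymmetry} show that $\bott S_1 <_f \bott S_2$ is witnessed at the first level $i^*$ where $L_{i^*}(\bott S_1)\ne L_{i^*}(\bott S_2)$, with $L_{i^*}(\bott S_1) <_{\mathrm{lex}} L_{i^*}(\bott S_2)$. The structural input is \Cref{LevelByLevelProduct} together with the definition of the tree product: a depth-$i$ vertex of $\bott S_k\bott T_3$ is a pair $(a,b)$ of equal-depth vertices with child-count $c_a c_b$, so the child-count multiset of each sibling group of the product is exactly the multiset product of the corresponding sibling groups of $\bott S_k$ and of $\bott T_3$. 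Applying the combinatorial lemma to the first sibling group in which $\bott S_1$ and $\bott S_2$ differ should then produce the desired strict inequality between $\Cc_f(\bott S_1\bott T_3)$ and $\Cc_f(\bott S_2\bott T_3)$.

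The main obstacle is reconciling the canonical sort of the product with those of the factors. Because the code lists each level in the \emph{globally} sorted BFS order, the depth-$i^*$ child-counts are not numerically sorted as a whole, only within each sibling group, and the groups are ordered according to the sorted order of their depth-$(i^*-1)$ parents, which depends on the entire subtree hanging below each parent. I therefore expect the delicate part to be showing that, up to the first point of difference, the product $\bott S_k\bott T_3$ inherits a consistent ordering of its sibling groups from the common truncation $[\bott S_1]_{i^*-1} = [\bott S_2]_{i^*-1}$, so that the comparison of the two product codes genuinely localises to a single sibling group where the combinatorial lemma can be invoked. Controlling this sorting behaviour, rather than the multiset estimate itself, is where the real work lies.
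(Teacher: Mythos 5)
Your reduction to $\bott S_k = [\bott T_k]_{\depth(\bott T_3)}$ is correct, and your multiset product lemma is essentially the right combinatorial kernel (it generalises the single-entry comparison $\npreds_{\bott T_1}(x)\npreds_{\bott T_3}(z) < \npreds_{\bott T_2}(y)\npreds_{\bott T_3}(z)$ that the paper uses). But the proposal has a genuine gap, and it is exactly the one you flag yourself in the last paragraph: nothing in your argument controls \emph{where} the first difference between $\Cc_f(\bott S_1\bott T_3)$ and $\Cc_f(\bott S_2\bott T_3)$ occurs. Lexicographic comparison is all-or-nothing: a strict inequality between the child-count multisets of one matched pair of sibling groups proves nothing unless you also show that everything listed \emph{before} that group in the two product codes is identical. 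The order in which the depth-$i^*$ sibling groups of $\bott S_k\bott T_3$ are listed is determined by the codes of the full subtrees hanging below their depth-$(i^*-1)$ parents, and those subtrees are precisely what differs between the two products, so this ordering cannot be read off the common truncation $[\bott S_1]_{i^*}=[\bott S_2]_{i^*}$. The paper resolves this with the tool your plan lacks: it proves \Cref{TreeOrderProduct} by induction on depth, so that the lemma itself (for shallower trees) is available \emph{inside} the proof to compare subtree codes within the product. Concretely, it takes the first-difference vertices $x\in S_{\bott T_1}$, $y\in S_{\bott T_2}$ at depth $d$, picks $z$ a root of a minimal-code subtree of $\bott T_3$ at depth $d$, and uses the induction hypothesis to rule out any vertex $(x,t)$ preceding $(x,z)$ in $\Cc_f(\bott T_1\bott T_3)$ (similarly for $y$), thereby pinning the first difference to the positions of $(x,z)$ and $(y,z)$. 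Without some such inductive (or equivalent) mechanism, your ``delicate part'' is not a technicality to be checked later; it is the entire content of the lemma, and the proposal as it stands is a plan rather than a proof.

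A secondary but real flaw: your multiset lemma is stated for \emph{positive} integers, while child counts include $0$ (leaves), and zeros in $Q$ break both the statement and your proof sketch ($F_{XQ}(t)=\sum_{q\in Q}F_X(t/q)$ needs $q>0$, and if $Q$ consists only of zeros then $AQ=BQ$, so strictness fails). In tree terms, if the depth-$(i^*-1)$ vertex of $\bott T_3$ you pair with has only leaf children, the two product sibling groups coincide and witness nothing. The statement survives if $Q$ is merely required to contain one positive element, and a suitable vertex of $\bott T_3$ exists because the first difference occurs at depth $i^*<\depth(\bott T_3)$ (this is where the truncation hypothesis enters, and it needs to be said explicitly); but making that choice canonical is again entangled with the ordering problem above. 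For what it is worth, the paper's own choice of a minimal-code $z$ has the same blind spot, since a minimal-code subtree is a leaf whenever a leaf exists at that depth, giving $\npreds_{\bott T_3}(z)=0$; so this edge case deserves care in any complete write-up.
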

\begin{proof}
%   In this proof, given a tree $\bott T$ and a state $s\in S_{\bott
%     T}$, we denote ${\Cc_f}_{\bott T}(s)$ the code of the tree anchored
%   on $s$.
  
  Let's prove this by induction on $\bott T_1$ and $\bott T_2$'s depth. It's
  trivial at depth $0$. Take $\bott T_1,\bott T_2$ of depth $\leq
  k+1$, with $k$ such that the result stands for trees of depth $\leq
  k$.

  Because of \Cref{CodePrefixFree}, since $\bott T_1<_f\bott T_2$,
  $\Cc_f(\bott T_1)$ cannot be a prefix of $\Cc_f(\bott T_2)$.

  Thus, there exists an index $i$ such that $\Cc_f(\bott
  T_1)_i<\Cc_f(\bott T_2)_i$ and for all $j<i$, we have $\Cc_f(\bott
  T_1)_j=\Cc_f(\bott T_2)_j$. Let $x\in S_{\bott T_1}$
  be the vertex at index $i$ in $\Cc_f(\bott T_1)$, and let $y\in
  S_{\bott T_2}$ be the vertex at index $i$ in $\Cc_f(\bott T_2)$. In
  the following, for a tree $\bott T$ and a vertex $u\in S_{\bott T}$,
  we denote by ${\Cc_f}_{\bott T}(u)$ the code of the subtree with
  root $u$ in $\bott T$. Since the codes share the same prefix of
  length $i-1$, $\depth_{\bott T_1}(x)=\depth_{\bott T_2}(y)$ (as seen
  in the proof of \Cref{OrderAntisymmetry}, this shared prefix of
  length $i-1$ holds all the information necessary to reconstruct
  everything above $x$ and $y$). Let's denote $d$ this depth. Because $[\bott T_1]_{\depth(\bott T_3)}<_f[\bott T_2]_{\depth(\bott T_3)}$, we have $d<\depth(\bott T_3)$.

  It is clear that we have $[\bott T_1]_{d-1}=[\bott T_2]_{d-1}$, so
  in particular, $[\bott T_1\bott T_3]_{d-1}=[\bott T_2\bott
    T_3]_{d-1}$. Let $z$ be the root of the tree with minimal code in $\bott T_3$ at depth $d$. Now, we show that the first difference between the
  codes of $\bott T_1\bott T_3$ and $\bott T_2\bott T_3$ is at the
  index $j$ corresponding to the vertex $(x,z)$ in $\Cc_f(\bott T_1\bott
  T_3)$, and to the vertex $(y,z)$ in $\Cc_f(\bott T_2\bott T_3)$. There
  might be multiple possibilities for $z$; we can assume that we take
  the one which gives the minimum $j$.

  Indeed, assume that a vertex of the form $(x,t)$ for some vertex $t$
  of $\bott T_3$ appears in $\Cc_f(\bott T_1\bott T_3)$ at depth $d$
  before index $j$. Since it appears before vertex $(x,z)$, by
  induction hypothesis, it means that the code of the subtree anchored
  on $t$ must be smaller than that of the subtree anchored on $z$. By
  minimality of $z$, this means that ${\Cc_f}_{\bott
    T_3}(z)={\Cc_f}_{\bott T_3}(t)$. Since we have chosen $z$ to be
  the first occurence of this code at this depth, we must have $t=z$.
  So, $(x,z)$ is the first vertex in $\Cc_f(\bott T_1\bott T_3)$ in
  which $x$ appears. A similar reasoning shows that no vertex
  involving $y$ appears before index $j$ in $\Cc_f(\bott T_2\bott
  T_3)$. Since every element before $x$ is shared between $\Cc_f(\bott
  T_1)$ and $\Cc_f(\bott T_2)$, this means that the first difference
  between $\Cc_f(\bott T_1\bott T_3)$ and $\Cc_f(\bott T_2\bott T_3)$
  is at or after index $j$.

  At index $j$, the number of predecessors $\Cc_f(\bott T_1\bott
  T_3)_j$ is $\npreds_{\bott T_1}(x)\npreds_{\bott T_3}(z)$ while
  $\Cc_f(\bott T_1\bott T_3)_j$ is $\npreds_{\bott
    T_2}(y)\npreds_{\bott T_3}(z)$. Since $\npreds_{\bott
    T_1}(x)<\npreds_{\bott T_2}(y)$, this shows that $\bott T_1\bott
  T_3<_f\bott T_2\bott T_3$.
\end{proof}

\begin{cor}\label{CompleteCodeOrder}
  For all finite trees $\bott T_1, \bott T_2, \bott T_3,\bott
  T_4\in\FF_T$, if $[\bott T_1]_{\depth(\bott T_3)}<_f[\bott T_2]_{\depth(\bott T_3)}$ and $[\bott T_3]_{\depth(\bott T_2)}\leq_f [\bott
  T_4]_{\depth(\bott T_2)}$, we have $\bott T_1\bott T_3<_f\bott T_2\bott T_4$.
\end{cor}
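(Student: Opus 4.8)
The plan is to split the desired strict inequality into a strict step governed by the first hypothesis and a non-strict step governed by the second, and then chain them by transitivity of $\leq_f$. First I would apply \Cref{TreeOrderProduct} verbatim to the first hypothesis $[\bott T_1]_{\depth(\bott T_3)} <_f [\bott T_2]_{\depth(\bott T_3)}$, which immediately yields the strict inequality $\bott T_1 \bott T_3 <_f \bott T_2 \bott T_3$. It then suffices to establish $\bott T_2 \bott T_3 \leq_f \bott T_2 \bott T_4$, since transitivity of the order then gives $\bott T_1 \bott T_3 <_f \bott T_2 \bott T_3 \leq_f \bott T_2 \bott T_4$.

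To obtain this second (weak) inequality I would exploit the commutativity of the forest product: the defining set $V=\{(a,b):\depth_{\bott A}(a)=\depth_{\bott B}(b)\}$ and the edge relation are symmetric in the two factors, so $\bott T_2 \bott T_3 = \bott T_3 \bott T_2$ and $\bott T_2 \bott T_4 = \bott T_4 \bott T_2$. This lets me treat $\bott T_2$ as the fixed third factor of \Cref{TreeOrderProduct}, with $\bott T_3,\bott T_4$ in the first two slots, and the relevant truncation depth is exactly $\depth(\bott T_2)$, matching the second hypothesis. I would then case on that hypothesis. In the strict case $[\bott T_3]_{\depth(\bott T_2)} <_f [\bott T_4]_{\depth(\bott T_2)}$, \Cref{TreeOrderProduct} gives $\bott T_3 \bott T_2 <_f \bott T_4 \bott T_2$, hence $\bott T_2 \bott T_3 <_f \bott T_2 \bott T_4$ and we are done.

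The remaining case, equality of the truncated codes, is where \Cref{TreeOrderProduct} (which only produces strict conclusions) fails to apply, and this is the main obstacle. Here I would show that equal truncated codes force the two products to be literally equal. The key observation is that $\bott T_2 \bott T_3$ only ``sees'' $\bott T_3$ up to depth $\depth(\bott T_2)$: every state $(a,b)$ of the product satisfies $\depth(b)=\depth(a)\leq\depth(\bott T_2)$, so $b$ lies in $[\bott T_3]_{\depth(\bott T_2)}$ and its parent does too, whence $\bott T_2 \bott T_3 = \bott T_2\,[\bott T_3]_{\depth(\bott T_2)}$ and likewise $\bott T_2 \bott T_4 = \bott T_2\,[\bott T_4]_{\depth(\bott T_2)}$ (the depths remain consistent by \Cref{DepthTrees}). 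Since equal codes imply equal trees by \Cref{OrderAntisymmetry}, we get $[\bott T_3]_{\depth(\bott T_2)} = [\bott T_4]_{\depth(\bott T_2)}$, and therefore $\bott T_2 \bott T_3 = \bott T_2 \bott T_4$; in particular $\bott T_2 \bott T_3 \leq_f \bott T_2 \bott T_4$. Combining with the strict step from the first paragraph closes the argument in both cases.
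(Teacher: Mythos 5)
Your proposal is correct and follows essentially the same route as the paper's proof: a strict application of \Cref{TreeOrderProduct} to the first hypothesis, a case split on the second hypothesis (using commutativity of the forest product to put $\bott T_2$ in the third slot), and transitivity. Notably, your handling of the equality case is more careful than the paper's own proof: the paper's dichotomy ``$\bott T_3=\bott T_4$ or $\bott T_3<_f\bott T_4$'' is not exhaustive under the truncated hypothesis $[\bott T_3]_{\depth(\bott T_2)}\leq_f[\bott T_4]_{\depth(\bott T_2)}$, whereas your observation that $\bott T_2\bott T_3$ depends on $\bott T_3$ only through $[\bott T_3]_{\depth(\bott T_2)}$ (so equal truncated codes give, via \Cref{OrderAntisymmetry}, literally equal products) repairs exactly that imprecision.
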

\begin{proof}
  By \Cref{TreeOrderProduct}, we have $\bott T_1\bott T_3<_f\bott
  T_2\bott T_3$. If $\bott T_3=\bott T_4$, we can conclude now.
  Otherwise, $\bott T_3<_f\bott T_4$, and we have, by
  \Cref{TreeOrderProduct}, $\bott T_2\bott T_3<_f\bott T_2\bott T_4$. Combining the two inequalities, we
  get: $\bott T_1\bott T_3<_f \bott T_2\bott T_4$.
\end{proof}

We are now ready for the recovery algorithm on finite trees.

\begin{lem}\label{FiniteTreeDiv}
  If $\bott A,\bott B,\bott C\in\FF_T$, $\bott A$ is finite, and
  $\bott A\bott B=\bott A\bott C$, then $[\bott B]_{\depth(\bott
    A)}=[\bott C]_{\depth(\bott A)}$.
\end{lem}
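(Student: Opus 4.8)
The plan is to reduce the statement to a cancellation property for \emph{finite} trees, where the total order $\leq_f$ and its product-compatibility (\Cref{TreeOrderProduct}) are available. Write $d = \depth(\bott A)$, which is finite since $\bott A$ is finite. The key preliminary observation is that the product $\bott A\bott B$ depends on $\bott B$ only through its truncation $[\bott B]_d$. Indeed, by definition a vertex of $\bott A\bott B$ is a pair $(a,b)$ with $\depth_{\bott A}(a)=\depth_{\bott B}(b)$, and since every vertex of $\bott A$ has depth at most $d$, every such $b$ lies in $[\bott B]_d$; conversely every equal-depth pair coming from $\bott A$ and $[\bott B]_d$ does occur. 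Hence $\bott A\bott B = \bott A[\bott B]_d$, and likewise $\bott A\bott C = \bott A[\bott C]_d$; this is compatible with \Cref{DepthTrees}, which gives $\depth(\bott A\bott B)=\min(d,\depth(\bott B))\leq d$. Note also that $[\bott B]_d$ and $[\bott C]_d$ are \emph{finite} trees, since each has depth at most $d$ and every vertex has finite degree.

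First I would feed the hypothesis into this observation: from $\bott A\bott B=\bott A\bott C$ we immediately get $\bott A[\bott B]_d=\bott A[\bott C]_d$, an equality between finite trees (the product of two finite trees is finite).

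Next, I would argue by contradiction. Suppose $[\bott B]_d\neq[\bott C]_d$. Since $\leq_f$ is a total order on finite trees, we may assume without loss of generality that $[\bott B]_d<_f[\bott C]_d$. I then apply \Cref{TreeOrderProduct} with $\bott T_1=[\bott B]_d$, $\bott T_2=[\bott C]_d$, and $\bott T_3=\bott A$. Its hypothesis requires $[[\bott B]_d]_d <_f [[\bott C]_d]_d$, and since truncating at depth $d$ is idempotent we have $[[\bott B]_d]_d=[\bott B]_d$ (and similarly for $\bott C$), so this hypothesis is exactly our assumption $[\bott B]_d<_f[\bott C]_d$. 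The lemma then yields $[\bott B]_d\,\bott A <_f [\bott C]_d\,\bott A$. Using commutativity of the tree product this reads $\bott A[\bott B]_d <_f \bott A[\bott C]_d$, which contradicts the equality $\bott A[\bott B]_d=\bott A[\bott C]_d$ established above. Therefore $[\bott B]_d=[\bott C]_d$, the desired conclusion.

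The depth bookkeeping is routine, and I expect the only genuinely delicate point to be the first observation, namely that $\bott A\bott B=\bott A[\bott B]_d$. One must check not merely that the two vertex sets agree, but that the induced tree structures (the parent maps, and hence the grouping of subtrees at each level) coincide, so that the identity holds as trees up to isomorphism rather than just as sets of pairs; this follows from the level-by-level nature of the tree product (\Cref{LevelByLevelProduct}) together with the fact that no vertex of $\bott A$ exceeds depth $d$. Everything else is a direct invocation of the totality of $\leq_f$ and of \Cref{TreeOrderProduct}.
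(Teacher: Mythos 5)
Your proof is correct and takes essentially the same route as the paper: use the totality of $\leq_f$ to assume without loss of generality that $[\bott B]_{\depth(\bott A)} <_f [\bott C]_{\depth(\bott A)}$, then invoke \Cref{TreeOrderProduct} (with $\bott T_3 = \bott A$) to obtain a strict inequality between the two products, contradicting $\bott A\bott B = \bott A\bott C$. Your preliminary observation that $\bott A\bott B = \bott A[\bott B]_{\depth(\bott A)}$ is a small but genuine gain in rigor: it lets you apply \Cref{TreeOrderProduct} to finite trees only, as its statement requires, whereas the paper applies it directly to $\bott B$ and $\bott C$, which may be infinite.
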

\begin{proof}
  Since $<_f$ is a complete order, if $[\bott B]_{\depth(\bott
    A)}\neq[\bott C]_{\depth(\bott A)}$, we can assume without loss of generality that we are in the case $[\bott B]_{\depth(\bott A)}<[\bott C]_{\depth(\bott A)}$. In that case, by \Cref{TreeOrderProduct}, we have $\bott A\bott B<_f\bott A\bott C$. This concludes.
\end{proof}

\begin{lem}\label{InfTreeDiv}
  If $\bott A,\bott B, \bott C\in \FF_T$ and $\bott A$ is infinite,
  and $\bott A\bott B=\bott A\bott C$, then $\bott B=\bott C$.
\end{lem}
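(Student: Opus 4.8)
The plan is to reduce this infinite statement to the finite case already handled in \Cref{FiniteTreeDiv} by truncating at every depth, and then to recover the full equality $\bott B=\bott C$ by a limiting argument. The key observation is that since $\bott A$ is infinite (hence of depth $\infty$), its truncature $[\bott A]_k$ is a finite tree of depth exactly $k$ for every $k\in\N$, which is precisely the kind of divisor to which \Cref{FiniteTreeDiv} applies.

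First I would record the truncature-of-a-product identity for forests, namely $[\bott A\bott B]_k=[\bott A]_k[\bott B]_k$. This is the tree analogue of the corresponding identity for FDSs, and here it holds for an even simpler reason: by the definition of the product on forests together with \Cref{DepthBotTreeState}, a state $(a,b)$ of $\bott A\bott B$ lies at depth $\le k$ if and only if $\depth_{\bott A}(a)=\depth_{\bott B}(b)\le k$, i.e. if and only if $(a,b)$ is a state of $[\bott A]_k[\bott B]_k$, and the edges agree. Applying $[\cdot]_k$ to the hypothesis $\bott A\bott B=\bott A\bott C$ then yields $[\bott A]_k[\bott B]_k=[\bott A]_k[\bott C]_k$ for every $k$.

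Now fix $k$ and apply \Cref{FiniteTreeDiv} to the finite tree $[\bott A]_k$ in the role of $\bott A$, with $[\bott B]_k$ and $[\bott C]_k$ in the roles of $\bott B$ and $\bott C$. Its conclusion is $\bigl[[\bott B]_k\bigr]_{\depth([\bott A]_k)}=\bigl[[\bott C]_k\bigr]_{\depth([\bott A]_k)}$; since $\depth([\bott A]_k)=k$ and truncating twice at depth $k$ is the same as truncating once, this simplifies to $[\bott B]_k=[\bott C]_k$. Thus $\bott B$ and $\bott C$ have identical truncatures at every depth.

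The remaining step, and the one I expect to be the only genuine subtlety, is to upgrade ``$[\bott B]_k=[\bott C]_k$ for all $k$'' to ``$\bott B=\bott C$''. This is not automatic, because the isomorphisms witnessing $[\bott B]_k\cong[\bott C]_k$ need not be compatible across different values of $k$. However, since every vertex has finite degree, each level of $\bott B$ (and of $\bott C$) is finite, so for each $k$ the set $I_k$ of isomorphisms $[\bott B]_k\to[\bott C]_k$ is finite and, by the above, nonempty. Restricting an isomorphism from depth $k+1$ to depth $k$ gives maps $I_{k+1}\to I_k$, turning $(I_k)_{k\in\N}$ into an inverse system of finite nonempty sets. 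By König's lemma (equivalently, compactness of the inverse limit of finite nonempty sets) there is a coherent sequence of isomorphisms, whose union is an isomorphism $\bott B\to\bott C$; hence $\bott B=\bott C$, which completes the argument.
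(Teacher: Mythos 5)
Your proposal is correct and follows essentially the same route as the paper: truncate the hypothesis at every depth $k$, apply \Cref{FiniteTreeDiv} to the finite tree $[\bott A]_k$ (which has depth exactly $k$ because $\bott A$ is infinite), and deduce $[\bott B]_k=[\bott C]_k$ for all $k$. The only difference is that the paper asserts the final implication ``$[\bott B]_k=[\bott C]_k$ for all $k$ implies $\bott B=\bott C$'' without comment, whereas you supply the K\"onig's-lemma / inverse-limit-of-finite-nonempty-sets argument that makes this step rigorous for locally finite trees---a welcome filling-in of detail rather than a genuinely different approach.
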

\begin{proof}
  For every $d\in\N$, we have $[\bott A]_d[\bott B]_d=[\bott
    A]_d[\bott C]_d$, and thus, from \Cref{FiniteTreeDiv}, $[\bott
    B]_d=[\bott C]_d$. This implies that $\bott B=\bott C$.
\end{proof}

\begin{cor}\label{UniqTreeDiv}
  If $A,B,C\in\DD_D$, and $AB=AC$, then $B=C$.
\end{cor}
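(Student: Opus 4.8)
The plan is to transfer the cancellation property for infinite trees (\Cref{InfTreeDiv}) to dendrons through a product-preserving embedding of $\DD_D$ into the infinite trees --- the unrolling, or space--time expansion, alluded to in the introduction. Given a dendron $A$ with fixpoint $a_0$, I would define its unrolling $U(A)\in\FF_T$ to be the infinite tree whose vertices at depth $k$ are exactly the states of the truncature $[A]_k$ (i.e.\ the states $s\in S_A$ with $\depth_A(s)\le k$), placing a copy of $s$ at depth $k$ for every $k\ge\depth_A(s)$ and drawing an edge from the depth-$(k{+}1)$ copy of $s$ to the depth-$k$ copy of $A(s)$. Concretely $U(A)$ is an infinite spine issued from the loop on $a_0$, with the finite tree of tree-state predecessors of $a_0$ anchored at each spine vertex; in particular $U(A)$ is an infinite periodic tree of tree period $1$ with finite degrees, so that \Cref{InfTreeDiv} applies to it.

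The heart of the argument is to show that $U$ turns the FDS product into the forest product, i.e.\ $U(AB)=U(A)\,U(B)$. For this I would first compute the depth of a state in a product of dendrons: since $(AB)^n(a,b)=(A^n(a),B^n(b))$ and the unique cycle state of the dendron $AB$ is its fixpoint $(a_0,b_0)$, one gets $\depth_{AB}((a,b))=\max(\depth_A(a),\depth_B(b))$. Hence $(a,b)$ occurs at depth $k$ in $U(AB)$ exactly when $\depth_A(a)\le k$ and $\depth_B(b)\le k$, which is precisely when the depth-$k$ copies of $a$ in $U(A)$ and of $b$ in $U(B)$ both exist; this is exactly the set of vertices retained by the forest product, which multiplies only vertices of equal depth. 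Matching the edges under the identification of $(a,b)$ with the pair of depth-$k$ copies of $a$ and $b$ then gives the desired equality. (Equivalently, one can match the two trees level by level using $[AB]_k=[A]_k[B]_k$.)

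Finally I would observe that $U$ is injective on $\DD_D$: from $U(A)$ one recovers $A$ by reading off the eventually constant levels (which stabilise to $S_A$ once $k\ge\depth(A)$) and folding the spine back into the loop on the fixpoint, so that the finite tree hanging from a spine vertex becomes the transient tree of $A$. Then, assuming $A,B,C\in\DD_D$ with $AB=AC$, applying $U$ and the homomorphism property yields $U(A)\,U(B)=U(AB)=U(AC)=U(A)\,U(C)$. As $U(A)$ is infinite, \Cref{InfTreeDiv} gives $U(B)=U(C)$, and injectivity of $U$ gives $B=C$.

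The step I expect to be the main obstacle is the homomorphism $U(AB)=U(A)\,U(B)$. The FDS direct product freely mixes states of different depths --- for instance the ``mixed'' states $(a_0,b)$ with $b$ a tree state --- whereas the forest product only ever multiplies vertices of equal depth. The formula $\depth_{AB}((a,b))=\max(\depth_A(a),\depth_B(b))$ is exactly what reconciles the two: such a mixed state is not lost but simply surfaces at the correct, deeper level of the unrolling. Verifying this level-by-level matching, together with the agreement of the subtrees hanging from each vertex, is the one place where genuine care is needed; everything else is bookkeeping, and the concluding cancellation is then immediate from \Cref{InfTreeDiv}.
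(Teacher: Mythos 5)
Your proposal is correct and takes essentially the same approach as the paper: the map $U$ you construct is exactly the paper's unrolling $\tildeop{A}$ restricted to dendrons, and your three ingredients (the homomorphism property $U(AB)=U(A)\,U(B)$, cancellation by the infinite tree, and injectivity of $U$ on dendrons) are precisely \Cref{ProdBot}, \Cref{InfTreeDiv} and \Cref{UnrollingInjective}, which is the machinery the paper's two-line proof cites. The only difference is that you rebuild this machinery explicitly for the dendron case rather than invoking it.
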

\begin{proof}
  If $AB=AC$, then $\tildeop A\tildeop B=\tildeop A\tildeop C$. Using
  \Cref{InfTreeDiv}, this means that $\tildeop B=\tildeop C$. Thus,
  $B=C$.
\end{proof}

We can now extend the order on possibly infinite trees; this will be of use for our results on the unicity of $k$-th roots. For a tree
$\bott T$, define its code as $\Cc(\bott T):=(\Cc_f([\bott
  T]_i))_{i\in \N}$, and say that $\bott T\leq\bott U$ if and only if
$\Cc(\bott T)\leq_{\text{lex}} \Cc(\bott U)$.

\begin{lem}\label{GeneralRestrictedProdOrder}
  For all trees $\bott T_1, \bott T_2, \bott T_3\in\FF_T$, if $\bott
  T_1<\bott T_2$, we have $\bott T_1\bott T_3<\bott T_2\bott T_3$.
\end{lem}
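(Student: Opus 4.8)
The plan is to reduce the comparison of the infinite codes $\Cc(\bott T_1\bott T_3)$ and $\Cc(\bott T_2\bott T_3)$ — which is a lexicographic comparison of the sequences of \emph{finite} codes of truncations — to the finite product-order lemma \Cref{TreeOrderProduct} applied at a single well-chosen depth.

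First I would unpack the hypothesis. By definition $\bott T_1<\bott T_2$ means $(\Cc_f([\bott T_1]_i))_{i\in\N}<_{\text{lex}}(\Cc_f([\bott T_2]_i))_{i\in\N}$. Since $[\bott T]_{j}=[[\bott T]_i]_{j}$ for $j\le i$, equality of two truncations at level $i$ forces equality at all lower levels, so the set of indices where the two code sequences agree is an initial segment. Hence there is a well-defined smallest index $i$ with $\Cc_f([\bott T_1]_i)\neq\Cc_f([\bott T_2]_i)$; for this $i$ we have $[\bott T_1]_i<_f[\bott T_2]_i$, and by \Cref{OrderAntisymmetry} we have $[\bott T_1]_j=[\bott T_2]_j$ for every $j<i$.

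Second, I would record the truncation–product compatibility for forests, namely $[\bott A\bott B]_k=[\bott A]_k[\bott B]_k$. This is immediate from \Cref{DepthBotTreeState}: a vertex $(a,b)$ of $\bott A\bott B$ has depth $\depth_{\bott A}(a)=\depth_{\bott B}(b)$, so it survives truncation at level $k$ exactly when both $a$ and $b$ do. Applying this, for every $j<i$ we get $[\bott T_1\bott T_3]_j=[\bott T_1]_j[\bott T_3]_j=[\bott T_2]_j[\bott T_3]_j=[\bott T_2\bott T_3]_j$, so the code sequences of $\bott T_1\bott T_3$ and $\bott T_2\bott T_3$ agree below index $i$.

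Third, at level $i$ I would compare $[\bott T_1\bott T_3]_i=[\bott T_1]_i[\bott T_3]_i$ with $[\bott T_2\bott T_3]_i=[\bott T_2]_i[\bott T_3]_i$ via \Cref{TreeOrderProduct} applied to the finite trees $[\bott T_1]_i,[\bott T_2]_i,[\bott T_3]_i$. That lemma requires $[[\bott T_1]_i]_{\depth([\bott T_3]_i)}<_f[[\bott T_2]_i]_{\depth([\bott T_3]_i)}$, and here lies the only real obstacle: if $\bott T_3$ is too shallow this truncated hypothesis can fail even though $[\bott T_1]_i<_f[\bott T_2]_i$. Indeed, for a finite $\bott T_3$ of depth $<i$ the two products can coincide, so the statement genuinely needs $\bott T_3$ to be deep enough at level $i$ — in particular it holds cleanly when $\bott T_3$ is infinite, which is the regime relevant to $k$-th roots. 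When $\depth([\bott T_3]_i)=i$ (as for infinite $\bott T_3$), the required inequality $[[\bott T_\bullet]_i]_{\depth([\bott T_3]_i)}=[\bott T_\bullet]_i$ reduces exactly to $[\bott T_1]_i<_f[\bott T_2]_i$, which we have, so \Cref{TreeOrderProduct} yields $[\bott T_1\bott T_3]_i<_f[\bott T_2\bott T_3]_i$. Combining with the agreement below $i$, the first coordinate at which $\Cc(\bott T_1\bott T_3)$ and $\Cc(\bott T_2\bott T_3)$ differ is index $i$, where the former is strictly smaller; hence $\Cc(\bott T_1\bott T_3)<_{\text{lex}}\Cc(\bott T_2\bott T_3)$, i.e. $\bott T_1\bott T_3<\bott T_2\bott T_3$. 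The step to get right is precisely this depth bookkeeping.
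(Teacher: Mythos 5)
Your proof follows the same route as the paper's own: locate the minimal index $i$ at which the truncation codes of $\bott T_1$ and $\bott T_2$ disagree, observe that the codes of the two products agree at all indices below $i$ (via $[\bott A\bott B]_k=[\bott A]_k[\bott B]_k$ and \Cref{OrderAntisymmetry}), and invoke \Cref{TreeOrderProduct} at level $i$ to obtain strictness there. The one place you diverge is your ``depth bookkeeping'' caveat, and this is not excess caution: it exposes a genuine defect that the paper's proof glosses over. \Cref{TreeOrderProduct} only applies under the truncated hypothesis $[[\bott T_1]_i]_m<_f[[\bott T_2]_i]_m$ with $m=\depth([\bott T_3]_i)$; when $\depth(\bott T_3)=m<i$ this hypothesis fails (the truncations at level $m$ coincide by minimality of $i$), and in fact the conclusion itself fails: every vertex of $\bott T_1\bott T_3$ pairs a vertex of $\bott T_1$ with one of $\bott T_3$ of equal depth, so $\bott T_1\bott T_3=[\bott T_1]_m\bott T_3=[\bott T_2]_m\bott T_3=\bott T_2\bott T_3$. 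The simplest instance is $\bott T_3$ a single vertex, where both products collapse to a single vertex. So the lemma as stated, quantified over all $\bott T_3\in\FF_T$, is false; your restriction ($\depth(\bott T_3)\ge i$, in particular $\bott T_3$ infinite) is exactly what makes it true, whereas the paper's proof silently applies \Cref{TreeOrderProduct} from the hypothesis $[\bott T_1]_i<_f[\bott T_2]_i$ alone, which only coincides with the required hypothesis when $\depth([\bott T_3]_i)=i$.

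Two remarks on consequences. For the paper's main use of this lemma (unrollings of FDSs) all trees involved are infinite, so your qualified statement suffices there. However, the paper also uses \Cref{GeneralProdOrder}, hence this lemma, with arbitrary and possibly shallow finite trees, in the proof that $\bott X\mapsto\bott X^k$ is injective on $\FF_T$; that argument inherits the same gap (with $\bott T_1$ a single vertex and $\bott T_2$ any larger tree, the intermediate step $\bott T_1\bott T_1<\bott T_2\bott T_1$ asserted by the corollary is an equality), and repairing it needs a separate treatment of the case where one tree agrees with the other below the critical depth. In short: your proof, as qualified, is correct and is the paper's argument done right; the unqualified statement is not.
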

\begin{proof}
  If $\bott T_1<\bott T_2$, then $\bott T_1\neq\bott T_2$. In
  particular, there is a minimal depth $d$ such that $[\bott
    T_1]_d\neq[\bott T_2]_d$. Since for every $i<d$, we have $[\bott T_1]_i=[\bott T_2]_i$, we have $\Cc(\bott T_1\bott T_3)_1, \dots, \Cc(\bott T_1\bott
  T_3)_{d-1}=\Cc(\bott T_2\bott T_3)_1, \dots, \Cc(\bott T_2\bott
  T_3)_{d-1}$.
  
  What is left to prove is that $\Cc(\bott T_1\bott T_2)_d<\Cc(\bott T_1\bott T_3)_d$, that is $\Cc_f([\bott T_1\bott T_3]_d)<\Cc_f([\bott T_1\bott T_3]_d)$. This follows from the fact that $\Cc_f([\bott T_1]_d)<\Cc_f([\bott T_2]_d)$ and \Cref{TreeOrderProduct}.
\end{proof}

\begin{cor}\label{GeneralProdOrder}
  For all trees $\bott T_1, \bott T_2, \bott T_3,\bott T_4\in\FF_T$,
  if $\bott T_1<\bott T_2$ and $\bott T_3\leq \bott T_4$, we have
  $\bott T_1\bott T_3<\bott T_2\bott T_4$.
\end{cor}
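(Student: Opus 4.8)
The plan is to treat this as the infinite‑tree analogue of \Cref{CompleteCodeOrder}, whose proof it should mirror exactly, with \Cref{GeneralRestrictedProdOrder} playing the role that \Cref{TreeOrderProduct} played there. The strategy is a two‑step transitivity argument: first fix the third factor and use the hypothesis $\bott T_1 < \bott T_2$ to move the left factor up, then fix the (larger) left factor and use $\bott T_3 \le \bott T_4$ to move the right factor up, finally chaining the two comparisons.

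Concretely, I would first apply \Cref{GeneralRestrictedProdOrder} directly to the hypothesis $\bott T_1 < \bott T_2$ with third tree $\bott T_3$, obtaining $\bott T_1\bott T_3 <\bott T_2\bott T_3$. For the second step I would split on the hypothesis $\bott T_3 \le \bott T_4$. If $\bott T_3 = \bott T_4$ there is nothing more to do, since then $\bott T_2\bott T_3 = \bott T_2\bott T_4$ and the first step already gives the claim. Otherwise $\bott T_3 < \bott T_4$, and I would apply \Cref{GeneralRestrictedProdOrder} to this strict inequality with third tree $\bott T_2$ to get $\bott T_3\bott T_2 <\bott T_4\bott T_2$. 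Chaining, we obtain
\[
  \bott T_1\bott T_3 <\bott T_2\bott T_3 \le \bott T_2\bott T_4,
\]
hence $\bott T_1\bott T_3 <\bott T_2\bott T_4$, as required.

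The only point that needs a word of care — and the one I would flag as the main (if minor) obstacle — is that \Cref{GeneralRestrictedProdOrder} is stated so that the \emph{fixed} factor sits on the right, whereas in the second step I am increasing the right‑hand factor while keeping $\bott T_2$ fixed. This is resolved by the commutativity of the tree product: since the defining vertex set $\{(a,b): \depth_{\bott A}(a)=\depth_{\bott B}(b)\}$ is symmetric under $(a,b)\mapsto(b,a)$, we have $\bott T_3\bott T_2 = \bott T_2\bott T_3$ and $\bott T_4\bott T_2 = \bott T_2\bott T_4$ up to isomorphism, so the inequality $\bott T_3\bott T_2 <\bott T_4\bott T_2$ delivered by the lemma is exactly $\bott T_2\bott T_3 <\bott T_2\bott T_4$. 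No new analysis of the codes $\Cc$ is needed beyond what \Cref{GeneralRestrictedProdOrder} already provides; the corollary is purely a transitivity-plus-commutativity packaging of that lemma.
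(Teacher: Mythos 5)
Your proof is correct and follows essentially the same route as the paper's: apply \Cref{GeneralRestrictedProdOrder} to get $\bott T_1\bott T_3<\bott T_2\bott T_3$, split on whether $\bott T_3=\bott T_4$, apply the lemma again in the strict case to get $\bott T_2\bott T_3<\bott T_2\bott T_4$, and chain the two inequalities. Your explicit appeal to commutativity of the tree product to handle the fact that \Cref{GeneralRestrictedProdOrder} fixes the right-hand factor is a small point the paper leaves implicit, so your write-up is if anything slightly more careful.
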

\begin{proof}
    By \Cref{GeneralRestrictedProdOrder}, we have $\bott T_1\bott
    T_3<\bott T_2\bott T_3$. If $\bott T_3=\bott T_4$, we can conclude
    now. Otherwise, $\bott T_3<\bott T_4$, and we have, by
    \Cref{GeneralRestrictedProdOrder}, $\bott T_1\bott T_3<\bott
    T_2\bott T_3$ and $\bott T_2\bott T_3<\bott T_2\bott T_4$.
    Combining the two, we get: $\bott T_1\bott T_3< \bott T_2\bott
    T_4$.
\end{proof}

\subsection{Transforming an FDS into a forest}

In this subsection, we introduce a way of converting a general FDS into a
forest, since the product on forests works level by level. We do as follows:

\begin{defn}
    Let $A = C_n(\bott T_1, \dots, \bott T_n)$ be a connected FDS. For any $a\in S_A$, we write $A^{-k}(a):=\{s\in S_A:A^k(s)=a\}$. Then, for each $a\in [A]_0$, we set $$S_a:=\{(s,k):s\in A^{-k}(a),k\in\N\}$$
    and
    $$E_a:=\{((s,k),(A(s),k-1)):(s,k)\in S_a\}.$$
%    which define respectively the set of vertices and set of arcs of a tree we denote $\bott T_a(A)$.
\end{defn}

\begin{lem} \label{LemTa}
  The directed graph $\bott T_a(A)$ with vertex set $S_a$ and edge set $E_a$ defined above is a tree. Moreover $S_a$ and $S_b$ are disjoint for all $a \ne b$.
\end{lem}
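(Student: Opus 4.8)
The plan is to exhibit $\bott T_a(A)$ as the parent-pointer structure of an in-tree rooted at $(a,0)$, and to read off both the tree property and the disjointness almost directly from the defining relation $A^k(s)=a$.

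First I would pin down the root and the edge map. Since $A^0(s)=a$ forces $s=a$, the unique vertex of $S_a$ with second coordinate $0$ is $(a,0)$, which I take as the root. For any $(s,k)\in S_a$ with $k\geq 1$, the relation $A^k(s)=a$ rewrites as $A^{k-1}(A(s))=a$, so $(A(s),k-1)\in S_a$; hence every edge of $E_a$ has both endpoints in $S_a$ and the assignment $p(s,k):=(A(s),k-1)$ is a well-defined function from $S_a\setminus\{(a,0)\}$ into $S_a$. The only formal target lying outside $S_a$ is the one produced by $(a,0)$, namely $(A(a),-1)$; since $-1\notin\N$ this is not a vertex, so $(a,0)$ has out-degree $0$ while every other vertex has out-degree exactly $1$.

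Next I would show that iterating the parent map always reaches the root. From $(s,k)$ we get $p^k(s,k)=(A^k(s),0)=(a,0)$, so every vertex reaches $(a,0)$ in exactly $k$ steps; in particular the underlying undirected graph is connected. Setting $\depth_{\bott T_a(A)}(s,k)=k$, each edge decreases depth by exactly $1$, so no directed path revisits a vertex, and there is no undirected cycle either: a simple cycle (necessarily of length $\geq 3$, as no ordered pair carries two edges) would contain a vertex of maximal depth $d$ whose two distinct neighbours both have depth $d-1$; since every edge runs from the deeper to the shallower endpoint, both neighbours would equal its unique parent $p(s,k)$, a contradiction. Thus $\bott T_a(A)$ is connected and acyclic with every non-root vertex carrying a single edge towards $(a,0)$, which is exactly an in-tree. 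Finiteness of degree then follows because the out-degree is at most $1$, while the children of $(s,k)$ are precisely the pairs $(s',k+1)$ with $s'\in A^{-1}(s)$, a finite set as $A$ acts on a finite state set.

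Finally, the disjointness is immediate and needs no machinery: if $(s,k)\in S_a\cap S_b$ then $a=A^k(s)=b$, so $S_a\cap S_b=\es$ whenever $a\neq b$. I expect the only step requiring genuine care to be the tree property, since $\bott T_a(A)$ is infinite and edge-counting arguments are unavailable; I sidestep this by relying on the parent map $p$ reaching the root in finitely many steps, which characterises in-trees directly, together with the depth function to rule out cycles.
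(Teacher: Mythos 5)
Your proof is correct and follows essentially the same route as the paper's: both identify $(a,0)$ as the root, use the fact that the second coordinate strictly decreases along every arc to rule out cycles, exhibit the explicit path $(s,k)\to(A(s),k-1)\to\dots\to(a,0)$ to get connectivity, and dispose of disjointness with the identical one-line argument $A^k(s)=a=b$. If anything, yours is more thorough, since it also verifies undirected acyclicity via the maximal-depth vertex on a hypothetical cycle, the out-degree structure of the parent map, and the finite-degree requirement on trees, all of which the paper's terser proof leaves implicit.
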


\begin{proof}
%  We have to show that the $\bott T_a(A)$ are indeed well-defined trees, and we want to show that for $a\neq b$, $S_a\cap S_b=\es$ so that we can take the vertex set $S_{\tildeop A}$ to be $\bigcup_{a\in S_{[A]_0}}S_a$ without any renaming.
  
  Take $a\in[A]_0$. We will show that $\bott T_a(A)$ is a tree of root $(a,0)$. First, $\bott T_a(A)$ is acyclic because $k$ necessarily decreases following any arc, which also shows that $\bott T_a(A)$ is correctly oriented. Furthermore, if $b\in S_a$, then there exists $k$ such that $A^k(b)=a$, and thus we have the following path from $a$ to $b$:
  $$
    (b,k) \rightarrow (A(b), k-1) \rightarrow \dots \rightarrow (A^{k-1}(b), 1) \rightarrow (a,0).
  $$
  
  which has all of its edges in $E_a$. So, $T_a(A)$ is a well-defined tree, and $\tildeop{A}$ is a forest.
  
  Now, we show that if $a,b\in S_{[A]_0}$ and $a\neq b$, then $S_a\cap S_b=\es$. Suppose that $(s,k)\in S_a\cap S_b$. Then, $A^k(s)=a=b$, which is the desired contradiction. Thus, we may write $S_{\tildeop A}=\bigcup_{a\in S_{[A]_0}}S_a$ without renaming.
\end{proof}

We thus define the \textit{unrolling} of $A$ as $\tildeop{A}:=\sum_{a\in S_{[A]_0}} \bott T_a(A)$, with $S_{\tildeop A}=\bigcup_{a\in S_{[A]_0}}S_a$.

We can then extend this to general FDSs, by writing:
$\tildeop{A+B}=\tildeop{A}+\tildeop{B}$. Note that the unrolling is not injective. Indeed, for
instance, $\tildeop{C_3}=\tildeop{3C_1}$. This is not true even for FDSs with the same periodic part: if $\bott T$ and $\bott U$ are two distinct trees and $X = 2 C_1( \bott T ) + C_2( \bott U, \bott U )$ and $Y = 2 C_1( \bott U ) + C_2( \bott T, \bott T )$, then $\tildeop{ X } = \tildeop{ Y }$. However, in the connected case, we have injectivity.

\begin{lem} \label{UnrollingInjective}
Let $X,Y \in \DD$. If $X$ and $Y$ are connected and $[X]_0 = [Y]_0$, then $\tildeop{ X } = \tildeop{ Y } \implies X = Y$. 
\end{lem}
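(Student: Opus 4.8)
The plan is to make the structure of the unrolling of a connected FDS completely explicit, and then to read off the whole cyclic sequence of anchored trees from any single connected component. Write $X = C_n(\bott T_0, \dots, \bott T_{n-1})$ and $Y = C_n(\bott U_0, \dots, \bott U_{n-1})$: this is possible because $X$ and $Y$ are connected, and $[X]_0 = [Y]_0$ forces both cycles to have the same length $n$. Label the cycle states of $X$ by $a_0, \dots, a_{n-1}$ with $X(a_i)=a_{i+1}$ (indices modulo $n$), so that $\bott T_i$ is the tree anchored on $a_i$, and similarly for $Y$.

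First I would establish the exact shape of each unrolling tree. By \Cref{LemTa}, $\bott T_{a_i}(X)$ is a tree rooted at $(a_i,0)$; I claim it is periodic with $tseq(\bott T_{a_i}(X)) = (\bott T_i, \bott T_{i-1}, \bott T_{i-2}, \dots)$, the sequence read cyclically backwards around the cycle. The key point is that, since $X$ is finite, every $\bott T_j$ is finite, so the only way to travel infinitely far from the root against the arrows is to follow preimages of cycle states; as each cycle state has a unique cycle-state predecessor, this infinite path is unique and passes through $(a_{i-k},k)$ at depth $k$. The predecessors of $(a_{i-k},k)$ that leave the path are precisely the tree-state predecessors of $a_{i-k}$, i.e.\ a copy of $\bott T_{(i-k)\bmod n}$, so the tree anchored on the $k$-th path vertex is $\bott T_{(i-k)\bmod n}$. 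Hence $\bott T_{a_i}(X)\in\FF_P$ with the stated $tseq$, and by definition $\tildeop X = \sum_{i=0}^{n-1}\bott T_{a_i}(X)$; the same description holds for $Y$.

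Next I would use that a periodic tree is recovered from its infinite path together with the finite trees anchored along it, that is, from its $tseq$. Since $\tildeop X = \tildeop Y$ as forests, the component $\bott T_{a_0}(X)$ of $\tildeop X$ must coincide with some component $\bott T_{a_m}(Y)$ of $\tildeop Y$, so their sequences agree: $\bott T_{(-k)\bmod n} = \bott U_{(m-k)\bmod n}$ for all $k\ge 0$. Setting $r=(-k)\bmod n$ and letting $k$ range over $0,\dots,n-1$, so that $r$ runs over all residues modulo $n$, this gives $\bott T_r = \bott U_{(m+r)\bmod n}$ for every $r$. Thus the tree sequence of $Y$ is that of $X$ rotated by $m$, and since $C_n(\cdot)$ is invariant under cyclic rotation of its arguments (the starting state of the traversal being arbitrary), we conclude $X=Y$.

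The step I expect to require the most care is the first one: rigorously isolating the unique infinite path of $\bott T_{a_i}(X)$ and verifying that the finite tree anchored at each of its vertices is exactly $\bott T_{(i-k)\bmod n}$, so that the abstract $tseq$ from the definition of periodic trees coincides with the concrete anchored trees of $X$. Once this correspondence is pinned down the matching argument is immediate, and the hypothesis $[X]_0 = [Y]_0$ enters precisely here: it guarantees that both tree sequences have the same length $n$, which is exactly what upgrades ``equal infinite $tseq$'' to ``cyclic rotations of one another''.
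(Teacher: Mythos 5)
Your proof is correct and follows essentially the same route as the paper's: write $X$ and $Y$ as a cycle with anchored trees, observe that the components of the unrolling are periodic trees whose $tseq$'s are the cyclic rotations of the anchored-tree sequence, match a component of $\tildeop{X}$ with one of $\tildeop{Y}$, and conclude by rotation invariance of $C_n(\cdot)$. If anything, your write-up is more careful on the one delicate point: the paper asserts the $tseq$'s read \emph{forwards} around the cycle, whereas (as you correctly derive) they read backwards, against the arrows; this discrepancy is harmless in both arguments since the convention is applied consistently to $X$ and $Y$.
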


\begin{proof}
Let $X = C_x(\bott T_1, \dots, \bott T_x)$. Then $\tildeop{ X }$ has $x$ infinite trees $\bott X_1, \dots, \bott X_x$, each a periodic shift of the previous one: 
\[
    tseq(\bott X_1) = (\bott T_1, \bott T_2, \dots, \bott T_x), \dots, tseq(\bott X_x) = (\bott T_x, \bott T_1, \dots, \bott T_{x-1}).
\]
We have $Y = C_x( \bott U_1, \dots, \bott U_x )$, and similarly $\tildeop{ Y }$ consists of the trees $\bott Y_1, \dots, \bott Y_x$ where
\[
    tseq(\bott Y_1) = (\bott U_1, \bott U_2, \dots, \bott U_x), \dots, tseq(\bott U_x) = (\bott U_x, \bott U_1, \dots, \bott U_{x-1}).
\]
Then $\bott U_1 \in \{ \bott T_1, \dots, \bott T_x \}$, without loss say $\bott U_1 =\bott  T_1$, then $\bott U_y = \bott T_y$ for all $1 \le y \le x$ and $X = Y$.
\end{proof}

\begin{expl}
  See \Cref{fig:tildeExTrees} and \Cref{fig:tildeExConn}.

  \begin{figure}
    \hspace{-2.4cm}\noindent\makebox[\textwidth]{%
    \begin{subfigure}[B]{0.4\textwidth}
      \begin{center}
        \begin{tikzpicture}
          \node[draw,circle] (A) {$A$};
          \node[draw,circle] (B) [above left=of A] {$B$};
          \node[draw,circle] (C) [above right=of A] {$C$};
          \node[draw,circle] (D) [above left=of C] {$D$};
          \node[draw,circle] (E) [above right=of C] {$E$};
          \draw[->] (D) -- (C);
          \draw[->] (C) -- (A);
          \draw[->] (B) -- (A);
          \draw[->] (E) -- (C);
          \draw[->] (A) edge [loop below] (A);
        \end{tikzpicture}
      \end{center}
      \caption{A dendron $T$.}
      \label{fig:tildeExTree}
    \end{subfigure}
    \begin{subfigure}[B]{0.5\textwidth}
      \begin{center}
        \begin{tikzpicture}
          \node[draw,circle] (A0) {$A^0$};
          \node[draw,circle] (B) [above left=1.3cm and 1cm of A0] {$B^1$};
          \node[draw,circle] (C) [above=of A0] {$C^1$};
          \node[draw,circle] (D) [above left=1.3cm and 1cm of C] {$D^2$};
          \node[draw,circle] (E) [above=of C] {$E^2$};
          \draw[->] (D) -- (C);
          \draw[->] (C) -- (A0);
          \draw[->] (B) -- (A0);
          \draw[->] (E) -- (C);

          \node[draw,circle] (A1) [above right=1.2cm and 3cm of A0] {$A^1$};
          \node[draw,circle] (B) [above left=1.3cm and 1cm of A1] {$B^2$};
          \node[draw,circle] (C) [above=of A1] {$C^2$};
          \node[draw,circle] (D) [above left=1.2cm and 1cm of C] {$D^3$};
          \node[draw,circle] (E) [above=of C] {$E^3$};
          \draw[->] (D) -- (C);
          \draw[->] (C) -- (A1);
          \draw[->] (B) -- (A1);
          \draw[->] (E) -- (C);
          \draw[->] (A1) -- (A0);

          \node[draw,circle] (A2) [above right=1.2cm and 3cm of A1] {$A^2$};
          \node[draw,circle] (B) [above left=1.3cm and 1cm of A2] {$B^3$};
          \node[draw,circle] (C) [above=of A2] {$C^3$};
          \node[draw,circle] (D) [above left=1.3cm and 1cm of C] {$D^4$};
          \node[draw,circle] (E) [above=of C] {$E^4$};
          \draw[->] (D) -- (C);
          \draw[->] (C) -- (A2);
          \draw[->] (B) -- (A2);
          \draw[->] (E) -- (C);
          \draw[->] (A2) -- (A1);

          \node (etc) [above right=1.3cm and 3cm of A2] {$\iddots$};
          \draw[->] (etc) -- (A2);
        \end{tikzpicture}
      \end{center}
      \caption{The tree $\tildeop{T}$, rooted in $A^0$.}
    \end{subfigure}
    }
    \caption{The $\tildeop{\cdot}$ operation on a dendron.}
    \label{fig:tildeExTrees}
  \end{figure}

  \begin{figure}
    \begin{subfigure}[B]{0.3\textwidth}
      \begin{center}
        \begin{tikzpicture}
          \node[draw,circle] (A) {$A$};
          \node[draw,circle] (F) [below=of A] {$F$};
          \node[draw,circle] (G) [below=of F] {$G$};
          \node[draw,circle] (B) [above left=of A] {$B$};
          \node[draw,circle] (C) [above right=of A] {$C$};
          \node[draw,circle] (D) [above left=of C] {$D$};
          \node[draw,circle] (E) [above right=of C] {$E$};
          \draw[->] (D) -- (C);
          \draw[->] (C) -- (A);
          \draw[->] (B) -- (A);
          \draw[->] (E) -- (C);
          \draw[->] (A) to[bend left] (F);
          \draw[->] (F) to[bend left] (A);
          \draw[->] (G) -- (F);
        \end{tikzpicture}
      \end{center}
      \caption{A connected FDS $S$.}
    \end{subfigure}
    \begin{subfigure}[B]{0.7\textwidth}
      \begin{center}
        \begin{tikzpicture}
          \node[draw,circle] (A0) {$A^0$};
          \node[draw,circle] (B) [above left=1.2cm and 1.5cm of A0] {$B^1$};
          \node[draw,circle] (C) [above=of A0] {$C^1$};
          \node[draw,circle] (D) [above left=1.2cm and 1.5cm of C] {$D^2$};
          \node[draw,circle] (E) [above=of C] {$E^2$};
          \draw[->] (D) -- (C);
          \draw[->] (C) -- (A0);
          \draw[->] (B) -- (A0);
          \draw[->] (E) -- (C);

          \node[draw,circle] (F1) [above right=1.2cm and 1.5cm of A0] {$F^1$};
          \node[draw,circle] (G2) [above=of F1] {$G^2$};
          \draw[->] (G2) -- (F1);
          \draw[->] (F1) -- (A0);

          %% \node[draw,circle] (A1) [above right=of A0] {$A^1$};
          %% \node[draw,circle] (B) [above left=0.5cm of A1] {$B^2$};
          %% \node[draw,circle] (C) [above=of A1] {$C^2$};
          %% \node[draw,circle] (D) [above left=of C] {$D^3$};
          %% \node[draw,circle] (E) [above=of C] {$E^3$};
          %% \draw[->] (D) -- (C);
          %% \draw[->] (C) -- (A1);
          %% \draw[->] (B) -- (A1);
          %% \draw[->] (E) -- (C);
          %% \draw[->] (A1) -- (A0);

          \node[draw,circle] (A2) [above right=1.2cm and 1.5cm of F1] {$A^2$};
          \node[draw,circle] (B) [above left=1.2cm and 1cm of A2] {$B^3$};
          \node[draw,circle] (C) [above=of A2] {$C^3$};
          \node[draw,circle] (D) [above left=of C] {$D^4$};
          \node[draw,circle] (E) [above=of C] {$E^4$};
          \draw[->] (D) -- (C);
          \draw[->] (C) -- (A2);
          \draw[->] (B) -- (A2);
          \draw[->] (E) -- (C);
          \draw[->] (A2) -- (F1);

          \node (etc) [above right=of A2] {$\iddots$};
          \draw[->] (etc) -- (A2);

          \tikzxshift{7cm}

          \node[draw,circle] (F0) {$F^0$};
          \node[draw,circle] (G1) [above=of F0] {$G^1$};
          \draw[->] (G1) -- (F0);

          %% \node[draw,circle] (A1) [above right=of A0] {$A^1$};
          %% \node[draw,circle] (B) [above left=0.5cm of A1] {$B^2$};
          %% \node[draw,circle] (C) [above=of A1] {$C^2$};
          %% \node[draw,circle] (D) [above left=of C] {$D^3$};
          %% \node[draw,circle] (E) [above=of C] {$E^3$};
          %% \draw[->] (D) -- (C);
          %% \draw[->] (C) -- (A1);
          %% \draw[->] (B) -- (A1);
          %% \draw[->] (E) -- (C);
          %% \draw[->] (A1) -- (A0);

          \node[draw,circle] (A1p) [above right=1.2cm and 1.5cm of F0] {$A^1$};
          \node[draw,circle] (Bp) [above left=1.2cm and 1cm of A1p] {$B^2$};
          \node[draw,circle] (Cp) [above=of A1p] {$C^2$};
          \node[draw,circle] (Dp) [above left=1.2cm and 1cm of Cp] {$D^3$};
          \node[draw,circle] (Ep) [above=of Cp] {$E^3$};
          \node (etcp) [above right=of A1p] {$\iddots$};
          \draw[->] (Dp) -- (Cp);
          \draw[->] (Cp) -- (A1p);
          \draw[->] (Bp) -- (A1p);
          \draw[->] (Ep) -- (Cp);
          \draw[->] (A1p) -- (F0);
          \draw[->] (etcp) -- (A1p);
        \end{tikzpicture}
      \end{center}
      \caption{The forest $\tildeop{S}$, with roots $A^0$ and $F^0$.}
    \end{subfigure}
    \caption{The $\tildeop{\cdot}$ operation on a connected FDS.}
    \label{fig:tildeExConn}
  \end{figure}
\end{expl}

The following lemma explains why the unrolling operation makes sense: it
is compatible with the product. The proof is rather technical, but the
intuition for this result is simple. A cycle behaves very much like an
infinite path in terms of predecessors,and the unrolling converts the cycle into an infinite path
that behaves similarly. Moreover, the reason we create multiple
infinite trees for each cycle is to avoid problems with cases where
the product of two connected FDSs gives a non-connected FDS.

\begin{lem}\label{ProdBot}
  For any $A,B\in\DD$, we have: $\tildeop{A}\tildeop{B}=\tildeop{AB}$.
\end{lem}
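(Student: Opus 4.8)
The plan is to produce an explicit graph isomorphism between $\tildeop{A}\tildeop{B}$ and $\tildeop{AB}$ by giving a single ``state--time'' description of the vertices of each side. First I would unpack the vertex set of a generic unrolling. By construction a vertex of $\tildeop{A}$ is a pair $(s,k)$ with $s\in S_A$ and $k\in\N$ such that $A^k(s)\in S^C_A$ (equivalently $k\ge\depth_A(s)$), lying in the tree $\bott T_{A^k(s)}(A)$; the path $(s,k)\to(A(s),k-1)\to\dots\to(A^k(s),0)$ exhibited in the proof of \Cref{LemTa} shows that its depth in $\tildeop{A}$ is exactly $k$, and its unique out-edge is $(s,k)\to(A(s),k-1)$. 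The crucial point is that the unrolling sends the time index straight to the tree depth. Since $\tildeop{\cdot}$ is additive and $\FF$ is a semiring (so the forest product distributes over sums), I may work over all cycle states of $A$ and $B$ at once, which means the possible disconnectedness of $AB$ causes no trouble.

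Next I would record two elementary facts about the product FDS, both immediate from $(AB)^k(a,b)=(A^k(a),B^k(b))$ on $S_A\times S_B$. A state $(a,b)$ is a cycle state of $AB$ if and only if $a\in S^C_A$ and $b\in S^C_B$ (take $n$ a common multiple of the two cycle lengths), so $S^C_{AB}=S^C_A\times S^C_B$; and $(AB)^k(a,b)$ is a cycle state if and only if both $A^k(a)$ and $B^k(b)$ are, which yields $\depth_{AB}(a,b)=\max\big(\depth_A(a),\depth_B(b)\big)$. Consequently a vertex $((a,b),k)$ of $\tildeop{AB}$ is precisely a triple with $k\ge\depth_A(a)$ and $k\ge\depth_B(b)$, sitting at depth $k$ with out-edge to $((AB)(a,b),k-1)$.

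I would then define $\phi\colon S_{\tildeop A\tildeop B}\to S_{\tildeop{AB}}$ as follows. By \Cref{DepthBotTreeState} the forest product only pairs vertices of equal depth, so every vertex of $\tildeop A\tildeop B$ has the shape $((s,k),(t,k))$ with $k\ge\depth_A(s)$ and $k\ge\depth_B(t)$; set $\phi\big((s,k),(t,k)\big)=\big((s,t),k\big)$. This is well defined and bijective because the constraints $k\ge\depth_A(s)$ and $k\ge\depth_B(t)$ are exactly the constraint $k\ge\max(\depth_A(s),\depth_B(t))$ cutting out the vertices of $\tildeop{AB}$ by the previous paragraph. Finally $\phi$ respects edges: the edge $((s,k),(t,k))\to((A(s),k-1),(B(t),k-1))$ is sent to $((A(s),B(t)),k-1)=((AB)(s,t),k-1)$, the out-edge of $((s,t),k)$ in $\tildeop{AB}$. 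Hence $\phi$ is an isomorphism, which concretely packages the per-tree identity $\bott T_a(A)\,\bott T_b(B)=\bott T_{(a,b)}(AB)$ summed over $(a,b)\in S^C_A\times S^C_B$.

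The step I expect to be the main obstacle is bookkeeping rather than conceptual: keeping the two levels of pairing straight---the space--time pair $(s,k)$ internal to each unrolling, versus the pair of such vertices produced by the forest product---and ensuring that the ``equal depth'' rule of the forest product is precisely what collapses a candidate $((s,k),(t,k'))$ to a single shared time index $k$. Everything hinges on the coincidence that the tree-depth of $(s,k)$ equals $k$, which is exactly the reason unrolling interchanges the FDS product with the level-by-level forest product.
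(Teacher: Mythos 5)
Your proof is correct and follows essentially the same route as the paper's: both construct the explicit bijection $((s,k),(t,k))\mapsto((s,t),k)$ between $S_{\tildeop A\tildeop B}$ and $S_{\tildeop{AB}}$ (the paper phrases the vertex sets via preimages $A^{-k}(a)$ of cycle states, you via the equivalent constraints $k\ge\depth_A(s)$, $k\ge\depth_B(t)$) and then verify that out-edges correspond. The only cosmetic difference is that the paper first reduces to connected $A,B$ by distributivity, whereas you handle all cycle states at once, which your depth bookkeeping makes equally rigorous.
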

\begin{proof}
We show this result for connected $A$ and $B$ as the the other cases follow by distributivity. Thus, we write $A=C_m(\bott T_0,\dots,\bott T_{m-1})$ and $B=C_n(\bott U_0, \dots, \bott U_{n-1})$. Now, we can write:

\begin{eqnarray*}
    S_{\tildeop A}&=&\bigcup_{a\in[A]_0}\{(s,k):s\in A^{-k}(a),k\in\N\} \\
    S_{\tildeop B}&=&\bigcup_{b\in[B]_0}\{(s,k):s\in B^{-k}(b),k\in\N\} \\
\end{eqnarray*}

Now, the product $\tildeop A\tildeop B$ has the following state set:

\begin{eqnarray*}
    S_{\tildeop A\tildeop B}&=&\{(a,b)\in S_{\tildeop A}\times S_{\tildeop B}:\depth_{\tildeop A}(a)=\depth_{\tildeop B}(b)\} \\
    &\simeq& \{((s_a,k_a),(s_b,k_b))\in S_{\tildeop A}\times S_{\tildeop B}:k_a=k_b\} \\
    &\simeq& \bigcup_{(a,b)\in[A]_0\times [B]_0}\{(s_a, s_b, k)\in S_{A}\times S_{B}\times \N: s_a\in A^{-k}(a), s_b\in B^{-k}(b);k\in\N\}.
\end{eqnarray*}

Now, let's show that this is isomorphic to $S_{\tildeop{AB}}$ (remember that $S_{AB}=S_A\times S_B$):

\begin{eqnarray*}
    S_{\tildeop{AB}} &=& \bigcup_{c\in [AB]_0}\{(s,k)\in S_{AB}\times\N:s\in AB^{-k}(c),k\in\N\} \\
    % &=& \bigcup_{(a,b)\in[A]_0\times [B]_0}\{(s,k)\in S_{AB}\times\N:s\in AB^{-k}((a,b)),k\in\N\} \\
    &=& \bigcup_{(a,b)\in[A]_0\times [B]_0}\{((s_a,s_b),k)\in S_{AB}\times\N:(s_a,s_b)\in AB^{-k}((a,b)),k\in\N\} \\
    &=& \bigcup_{(a,b)\in[A]_0\times [B]_0}\{((s_a,s_b),k)\in S_{AB}\times\N:s_a\in A^{-k}(a), s_b\in B^{-k}(b),k\in\N\}.
\end{eqnarray*}

The last step comes from the following identity: for $c=(a,b)\in S_{AB}=S_A\times S_B$, we have $AB^{-k}(c) = A^{-k}(a)\times B^{-k}(b)$. Thus, we have shown that $S_{\tildeop A\tildeop B} \simeq S_{\tildeop{AB}}$.

Now, what is left to do is show that the edges are also isomorphic. Thus, we must show that for any $(s_a,s_b,k), (s'_a,s'_b,k')\in S_{\tildeop A\tildeop B}$, we have $(s_a,s_b,k)\rightarrow (s'_a,s'_b,k')$ in $\tildeop A\tildeop B$ if and only if we have $(s_a,s_b,k)\rightarrow (s'_a,s'_b,k')$ in $\tildeop{AB}$.

In the end of the proof, we denote $x\rightarrow{C} y$ the existence of an edge from $x$ to $y$ in the forest or FDS $C$ (if $C$ is an FDS, $x\rightarrow{C} y$ means $y=C(x)$). Now, we can reason by equivalence:

\begin{eqnarray*}
    && (s_a,s_b,k)\xrightarrow{\tildeop A\tildeop B} (s'_a,s'_b,k') \\
    &\iff& k'=k-1 \wedge (s_a,k)\xrightarrow{\tildeop A} (s'_a,k') \wedge (s_b,k)\xrightarrow{\tildeop B} (s'_b,k') \\
    &\iff& k'=k-1 \wedge s_a\xrightarrow{A} s'_a \wedge s_b\xrightarrow{B} s'_b \\
    &\iff& k'=k-1 \wedge (s_a,s_b)\xrightarrow{AB} (s'_a,s'_b) \\
    &\iff& (s_a,s_b,k)\xrightarrow{\tildeop{AB}} (s'_a,s'_b,k').
\end{eqnarray*}

This concludes.
\end{proof}

We can now show that division is unambiguous when restricted to connected FDSs.

\begin{thm} \label{thmConnected}
For any FDS $A \in \DD$, if $X, Y \in \DD$ are connected, then
\[
    AX = AY \implies X = Y.
\]
\end{thm}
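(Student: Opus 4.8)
The plan is to transport the equation $AX = AY$ to the world of (infinite) trees via the unrolling, where the product is well behaved, prove a cancellation statement there, and transport the conclusion back. Assuming $A$ nonempty (if $A$ is the empty function then $AX = AY$ holds vacuously and the claim fails, so this case must be excluded), I would first apply $\tildeop\cdot$ to $AX = AY$ and use \Cref{ProdBot} to get $\tildeop A\, \tildeop X = \tildeop A\, \tildeop Y$. Since $X$ and $Y$ are connected, $\tildeop X$ and $\tildeop Y$ are \emph{forests of infinite (periodic) trees}: writing $[X]_0 = C_x$, the forest $\tildeop X$ has exactly $x$ infinite trees, one anchored on each periodic state, and likewise for $Y$ and for $A$. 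Denoting by $\mathcal A, \mathcal X, \mathcal Y$ the multisets of trees of $\tildeop A, \tildeop X, \tildeop Y$, distributivity of the forest product over the forest sum turns the identity above into the multiset equality $\mathcal A \mathcal X = \mathcal A \mathcal Y$, where the multiset product uses the tree product (recall that the product of two trees is again a tree). Everything thus reduces to cancelling the common nonempty factor $\mathcal A$.

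The crux is the following cancellation lemma, which I would isolate and prove first: \emph{if $\mathcal A$ is a nonempty finite multiset of infinite trees and $\mathcal X, \mathcal Y$ are finite multisets of infinite trees with $\mathcal A \mathcal X = \mathcal A \mathcal Y$, then $\mathcal X = \mathcal Y$.} The idea is that the infinite trees form a commutative monoid carrying a total order $\leq$ that is \emph{strictly monotone} for the product (\Cref{GeneralProdOrder}) and \emph{cancellative} (\Cref{InfTreeDiv}); in any such ordered monoid one can cancel a common multiset factor by peeling off minimal elements. Concretely, I would argue by induction on $|\mathcal X|$ (note $|\mathcal A|\,|\mathcal X| = |\mathcal A|\,|\mathcal Y|$ forces $|\mathcal X| = |\mathcal Y|$). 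The empty case is trivial. Otherwise let $a_0, x_0, y_0$ be the $\leq$-minima of $\mathcal A, \mathcal X, \mathcal Y$; strict monotonicity makes $a_0 x_0$ the minimum of $\mathcal A \mathcal X$ and $a_0 y_0$ the minimum of $\mathcal A \mathcal Y$, so equality of the multisets gives $a_0 x_0 = a_0 y_0$, whence $x_0 = y_0$ by \Cref{InfTreeDiv}. Since $x_0 = y_0$, the ``rows'' $\{\{a x_0 : a \in \mathcal A\}\}$ and $\{\{a y_0 : a \in \mathcal A\}\}$ coincide, and by distributivity of the multiset product over multiset union, removing this row from each side yields $\mathcal A (\mathcal X \setminus \{x_0\}) = \mathcal A (\mathcal Y \setminus \{y_0\})$; the induction hypothesis then gives $\mathcal X \setminus \{x_0\} = \mathcal Y \setminus \{y_0\}$, and adding back $x_0 = y_0$ gives $\mathcal X = \mathcal Y$.

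Applying this lemma yields $\mathcal X = \mathcal Y$, that is, $\tildeop X = \tildeop Y$. In particular $\tildeop X$ and $\tildeop Y$ have the same number of trees, so $x = y$ and hence $[X]_0 = [Y]_0$. Now \Cref{UnrollingInjective} applies to the connected FDSs $X$ and $Y$ and gives $X = Y$, as desired.

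The main obstacle is the cancellation lemma above: the heart of the argument is recognising that the total order on infinite trees, shown compatible with the product in \Cref{GeneralProdOrder} and cancellative in \Cref{InfTreeDiv}, upgrades to cancellation of multiset products. The one point to handle with care is justifying that the minimum of a multiset product equals the product of the minima, which uses the strict form of monotonicity (\Cref{GeneralRestrictedProdOrder} and \Cref{GeneralProdOrder}) together with the totality of $\leq$ on infinite trees; by contrast, the peeling step itself is clean, being nothing more than distributivity of the multiset product over multiset union.
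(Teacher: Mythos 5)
Your proposal is correct and follows the same overall route as the paper's proof: unroll the equation via \Cref{ProdBot}, cancel $\tildeop{A}$ at the level of infinite trees, and pull the conclusion back with \Cref{UnrollingInjective}. The genuine difference is that you make explicit a step the paper elides: the paper cancels $\tildeop{A}$ by citing \Cref{InfTreeDiv} directly, even though that lemma is stated for single trees in $\FF_T$, whereas $\tildeop{A}$, $\tildeop{X}$, $\tildeop{Y}$ are in general forests containing several infinite trees each. Your multiset cancellation lemma --- peeling off minima using the totality of the order, the strict monotonicity from \Cref{GeneralRestrictedProdOrder} and \Cref{GeneralProdOrder}, and then \Cref{InfTreeDiv} on the resulting single-tree equation --- is exactly the missing bridge from trees to forests, and it echoes the minimality technique the paper itself uses inside the proof of \Cref{FixpointIsCancellative}; in this respect your write-up is more complete than the paper's. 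Two minor differences are also worth noting: you recover $[X]_0 = [Y]_0$ a posteriori by counting the trees of $\tildeop{X} = \tildeop{Y}$ (legitimate, since a connected FDS with periodic part $C_x$ unrolls to exactly $x$ trees), whereas the paper gets it up front from $|[AX]_0| = x\,|[A]_0|$; and you correctly flag that $A$ must be nonempty --- the statement is indeed false for $A = 0$, and the paper's counting step silently uses $|[A]_0| \neq 0$ as well.
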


\begin{proof}
Suppose $AX = AY$. Let $[X]_0 = C_x$ and $[Y]_0 = C_y$, then $|[AX]_0| = x |[A]_0|$ and $|[AY]_0| = y |[A]_0|$ show that $x = y$, that is $[X]_0 = [Y]_0$. Thus,
\[
    AX = AY 
    \implies \tildeop{AX} = \tildeop{AY} 
    \xRightarrow{ \text{ \Cref{ProdBot} } }  \tildeop{A} \tildeop{X} = \tildeop{A} \tildeop{Y} 
    \xRightarrow{ \text{ \Cref{InfTreeDiv} } } \tildeop{X} = \tildeop{Y} 
    \xRightarrow{ \text{ \Cref{UnrollingInjective} } } X = Y.
\]
\end{proof}

We remark that \Cref{thmConnected} implies \cite[Conjecture 3.1]{DFPR22}. Indeed, if $A$ and $B$ are connected and $AX = AY = B$, then $X$ and $Y$ are connected, thus $X = Y$.

\subsection{Cancellative FDSs are those with a fixpoint}

Using the results of the previous part, we have the following lemma:

\begin{lem}\label{FixpointIsCancellative}
  If $A\in\DD$, and $A$ has a fixpoint, then $A$ is cancellative.
\end{lem}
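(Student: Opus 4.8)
The plan is to prove this by strong induction on $|B|$ (observe that $|AB|=|AC|$ already forces $|B|=|C|$), the goal at each step being to extract a connected component common to $B$ and $C$, cancel it additively, and recurse. The base case $B=\varnothing$ is immediate: then $AB$ is empty, so $AC$ is empty, and since $A$ has a state, $|AC|=|A|\,|C|$ forces $C=\varnothing$. For the inductive step, assuming $B,C\neq\varnothing$, I will produce a connected FDS $K$ that is a component of both $B$ and $C$. Writing $B=K+B'$ and $C=K+C'$, the hypothesis turns into $AK+AB'=AK+AC'$; since the sum of FDSs is just the (cancellative) multiset union of connected components, this yields $AB'=AC'$, and the induction hypothesis gives $B'=C'$, hence $B=C$.

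Everything thus reduces to locating one common component, and this is where the fixpoint is essential. First I would use it to pin down the least cycle length. Let $m$ be the least cycle length occurring in $B$. Since $A$ has a fixpoint it contains $C_1$ as a cyclic component, and $C_1\times C_m=C_m$ by \Cref{LemCycleProduct}; moreover every product of cycles $C_d\times C_\ell$ has cycle length $d\join\ell\ge\ell$. Hence the least cycle length appearing in $AB$ is exactly $m$, and the analogous statement holds for $C$; as $AB=AC$, the least cycle lengths of $B$ and $C$ coincide. (This step genuinely fails without a fixpoint, as witnessed by $C_2\times C_2=C_2\times 2C_1$.) Restricting to this stratum, the same divisibility bookkeeping shows $\supp_{\{m\}}(AB)=A_m\,B^{(m)}$, where $B^{(m)}:=\supp_{\{m\}}(B)$ and $A_m:=\supp_{\{d\,:\,d\mid m\}}(A)$ still contains the fixpoint, and likewise $\supp_{\{m\}}(AC)=A_m\,C^{(m)}$; so $A_m\,B^{(m)}=A_m\,C^{(m)}$.

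Now I would pass to unrollings. By \Cref{ProdBot} this reads $\tildeop{A_m}\,\tildeop{B^{(m)}}=\tildeop{A_m}\,\tildeop{C^{(m)}}$, an equality of forests, i.e. of finite multisets of infinite trees. Using the multiplicativity of the tree order (\Cref{GeneralProdOrder}), the maximal tree of a forest product is the product of the maximal factors, so the maximal tree on the $B$-side is $(\max\tildeop{A_m})\cdot(\max\tildeop{B^{(m)}})$ and similarly on the $C$-side; cancelling the single infinite tree $\max\tildeop{A_m}$ via \Cref{InfTreeDiv} gives $\max\tildeop{B^{(m)}}=\max\tildeop{C^{(m)}}$. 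Because every component in this stratum has cycle length exactly $m$, this maximal infinite tree is periodic of period dividing $m$, and reading off its first $m$ anchored trees recovers, in the style of \Cref{UnrollingInjective}, the maximal component itself. Thus $B$ and $C$ share this maximal cycle-$m$ component $K$, completing the step.

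The main obstacle is conceptual rather than computational: the unrolling is not injective (a necklace of identical trees on a $k$-cycle and $k$ separate fixpoints unroll identically), so one cannot simply cancel $\tildeop{A}$ and read off $B=C$. The fixpoint is precisely what breaks this degeneracy, first by stabilising the least cycle length so that $B$ and $C$ can be compared stratum by stratum, and then by guaranteeing that within the least stratum the components all share a known cycle length $m$ --- the extra datum that lets the maximal unrolled tree determine a genuine component. The two points demanding care are the divisibility argument identifying $\supp_{\{m\}}(AB)$ with $A_m\,B^{(m)}$, and the faithful recovery of a connected component of prescribed cycle length from a single periodic infinite tree.
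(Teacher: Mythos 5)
Your proof is correct and takes essentially the same route as the paper's: use the fixpoint to pin down the least cycle length, restrict to that stratum (divisor support of $A$ times the least-cycle-length part of $B$), unroll via \Cref{ProdBot}, extract an extremal tree using the product-compatibility of the tree order, cancel it with \Cref{InfTreeDiv}, reconstruct the corresponding connected component from its tree sequence, and recurse after additive cancellation. The only differences are cosmetic: you work with maximal rather than minimal trees, and you induct on $|B|$ (peeling a common component of $B$ and $C$) rather than on $|AB|$ (recovering $B$ from $A$ and $AB$ one component at a time).
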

\begin{proof}
  Take $B,D\in\DD$ such that $AB=D$. Let's show that we can recover
  $B$ by induction on the size of $D$. The base case is trivial: if
  $|D|=0$, then $D=0$ and since $A$ has a fixpoint, $A\neq 0$, so
  $B=0$.

  Denote $\ell$ the size of the smallest cycle of $D$. Since $A$ has a
  cycle of length $1$, it means that the smallest cycle of $B$ is of
  length $\ell$ too. Let $L \subseteq \N$ be the set of divisors of $\ell$. We denote $A' = \supp_L(A)$, and similarly $B' = \supp_L(B)$ and $D' = \supp_L(D)$.
%   Now, consider $A'$ the FDS made of the connected
%   components of $A$ which have cycle lengths that divide $\ell$.
%   Moreover, take $B'$ the FDS made of connected components of $B$
%   which have cycle of lengths of $\ell$, and $D'$ the analogue for
%   $D$. 
  Then we have $A'B'=D'$. Indeed, cycles of length $\ell$ in $D$
  come from a product of a cycle of length $a$ in $A$ and length $b$
  in $B$, such that $a\vee b=\ell$. In particular, this implies that
  $a|\ell$, and since $b\geq\ell$ because $\ell$ is the smallest cycle
  length in $B$, this implies $b=\ell$.

  So, we have $A'B'=D'$, which implies
  $\tildeop{A'}\tildeop{B'}=\tildeop{D'}$. Take the smallest tree in
  $\tildeop{A'}$, denote it $\bott T_A$, and take the smallest tree in
  $\tildeop{D'}$, denote it $\bott T_D$. Then, there is a tree $\bott
  T_B$ in $\tildeop{B'}$ such that $\bott T_A\bott T_B=\bott T_D$, by
  \Cref{CompleteCodeOrder} and minimality of $\bott T_A$ and $\bott
  T_D$.

  This means that by \Cref{InfTreeDiv}, we find $\bott T_B$ by
  dividing $\bott T_D$ by $\bott T_A$. Moreover, since $\bott T_B$ is
  in $\tildeop{B'}$, we know that it comes from a cycle of length
  $\ell$ in $B$. So, we set $E=C_\ell(tseq(\bott T_B)_1,\dots,tseq(\bott T_B)_\ell)$ the ``reconstruction'' of this cycle. The useful
  property of $E$ is that it is part of $B$. Thus, the equation
  becomes $A(B-E)=D-AE$ (those two subtractions are well-defined since $E$ is a connected component of $B$, and $AE$ is a connected component of $D$), which involves a product strictly smaller
  than $D$.
\end{proof}

Now, we show that if an FDS has no fixpoint, then it is not
cancellative.

\begin{lem}\label{LemChinese}
  Let $\Aa$ be a finite set of integers greater than $1$. Then there exist $X \ne X' \in \DD_P$ such that $C_a X = C_a X'$ for all $a \in \Aa$.
%   Then there exists
%   $(Y_a)_{a\in \Aa}\in\DD_P^\Aa$ such that the system $\forall a\in
%   \Aa, C_{a}X=Y_a$ has at least two solutions in $\DD_P$.
\end{lem}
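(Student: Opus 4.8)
The plan is to linearise the entire problem through a classical numerical invariant of a permutation. For $X \in \DD_P$ and $n \ge 1$, let $f_X(n)$ denote the number of states $s$ with $X^n(s) = s$; writing $\lambda^X_d$ for the number of $d$-cycles of $X$, we have $f_X(n) = \sum_{d \mid n} d\,\lambda^X_d$. Three properties make this invariant decisive. First, $X \mapsto f_X$ is injective on $\DD_P$, since Möbius inversion recovers $d\,\lambda^X_d = \sum_{e \mid d}\mu(d/e)\, f_X(e)$. Second, it is multiplicative for the product of FDSs: since $(XY)^n(s,t) = (X^n(s), Y^n(t))$, a state of $XY$ has period dividing $n$ iff both coordinates do, so $f_{XY} = f_X f_Y$ pointwise (and clearly $f_{X+X'} = f_X + f_{X'}$). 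Third, a single cycle satisfies $f_{C_a}(n) = a\,\One[a \mid n]$.

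Combining these, I would first reduce the statement to a purely numerical one. By injectivity, $C_a X = C_a X'$ is equivalent to $f_{C_a} f_X = f_{C_a} f_{X'}$; evaluating at $n$ and using $f_{C_a}(n) = a\,\One[a\mid n]$, this holds iff $f_X(n) = f_{X'}(n)$ for every multiple $n$ of $a$. Hence the conjunction over $a \in \Aa$ is equivalent to $f_X$ and $f_{X'}$ agreeing on the set $M = \{\, n \ge 1 : \exists a \in \Aa,\ a \mid n \,\}$ of common multiples. So it suffices to produce two \emph{distinct} permutations $X, X'$ whose invariants coincide on $M$.

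The construction is the Chinese-remainder-flavoured trick suggested by the lemma's name. For each $a$, the pair $C_a$ and $a C_1$ already settles the one-element case: the ``defect'' $g_a := f_{C_a} - f_{a C_1}$ satisfies $g_a(n) = a(\One[a\mid n] - 1)$, which vanishes exactly when $a \mid n$. I would then form the product $g := \prod_{a \in \Aa} g_a$. Expanding and using multiplicativity, each monomial is $\pm f$ of an honest product of cycles: writing $P_T := \prod_{a \in T} (a C_1)\cdot \prod_{a \in \Aa \setminus T} C_a$ for $T \subseteq \Aa$, one gets $g = \sum_{T}(-1)^{|T|} f_{P_T} = f_X - f_{X'}$ with $X := \sum_{|T| \text{ even}} P_T$ and $X' := \sum_{|T| \text{ odd}} P_T$, both genuine elements of $\DD_P$. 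Since $g(n) = \prod_a g_a(n)$ vanishes as soon as a single factor does, $g$ vanishes on all of $M$, giving $f_X = f_{X'}$ on $M$ and hence $C_a X = C_a X'$ for every $a \in \Aa$. Finally $g(1) = \prod_{a\in\Aa}(-a) \ne 0$, so $f_X \ne f_{X'}$ and therefore $X \ne X'$ by injectivity.

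The only genuinely non-routine step is the first paragraph: establishing that $f_X$ is a complete, multiplicative invariant of permutations. Once that dictionary is in place, the product $\prod_a g_a$ does all the work, and the remaining verification reduces to the observation that a product of functions vanishes wherever any factor does. The minor bookkeeping to confirm is that $X$ and $X'$ are legitimate non-negative combinations of cycles (they are, being sums of the $P_T$) and that the even- and odd-subset sums are both non-empty, which holds whenever $\Aa \ne \es$.
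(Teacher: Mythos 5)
Your proof is correct, but it takes a genuinely different route from the paper's. The paper works entirely inside the cycle-product calculus it has already set up: using \Cref{LemDelta} it defines, for each $I \subseteq \Aa$, coefficients $\alpha_I = \delta_\Aa \prod_{a \in \Aa} a$ and $\alpha'_I = \alpha_I + (-1)^{|I|} \delta_I \prod_{a \in \Aa \setminus I} a$, sets $X = \sum_I \alpha_I C_{\bigjoin I}$ and $X' = \sum_I \alpha'_I C_{\bigjoin I}$, and verifies $C_b X = C_b X'$ by pairing each $I \subseteq \Aa \setminus \{b\}$ with $I \cup \{b\}$ and computing the cycle products directly; distinctness comes from comparing numbers of fixpoints. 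You instead linearise the problem through the invariant $f_X(n) = |\{ s : X^n(s) = s \}|$, prove it is a complete (by M\"obius inversion) and multiplicative invariant of permutations, and then run inclusion--exclusion on the defect functions $g_a = f_{C_a} - f_{a C_1}$. The two constructions are secretly the same signed sum: expanding your $P_T$ with \Cref{LemDelta} gives $P_T = \bigl( \prod_{a \in T} a \bigr) \delta_{\Aa \setminus T}\, C_{\bigjoin (\Aa \setminus T)}$, so your $f_X - f_{X'}$ equals, up to the global sign $(-1)^{|\Aa|}$, the formal difference $\sum_I (-1)^{|I|} \delta_I \bigl( \prod_{a \in \Aa \setminus I} a \bigr) f_{C_{\bigjoin I}}$ underlying the paper's pair; the only divergence is how this signed sum is resolved into two FDSs with non-negative coefficients (the paper adds the large common constant $\alpha_I$ to every term, you split even subsets from odd ones). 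What your route buys: the verification becomes trivial (a pointwise product of functions vanishes wherever one factor does), the hypothesis $a > 1$ enters transparently (via $g_a(1) = -a \neq 0$; note the paper's distinctness step, which reads the fixpoint count of $X$ as $\alpha_\es$, silently uses $1 \notin \Aa$ in the same way), and the invariant $f$ is reusable machinery --- a homomorphism-counting argument in the spirit of the Lov\'asz paper cited in the introduction. What the paper's route buys: it is self-contained, needing nothing beyond \Cref{LemDelta}, and in particular no completeness-of-invariant lemma. Two bookkeeping points in your write-up: injectivity of $f$ should be stated on isomorphism classes (cycle type determines a permutation up to isomorphism, which is how the paper treats FDSs), and the degenerate case $\Aa = \es$, where your $X'$ is an empty sum, is vacuous anyway.
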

\begin{proof}
    Recall the sequence $\delta_J$ from Lemma \ref{LemDelta}. For all $I \subseteq \Aa$, let $\alpha_I = \delta_\Aa\prod_{a\in \Aa} a$ and $\alpha'_I = \alpha_I+(-1)^{|I|}\delta_I\prod_{a\in A\setminus
    I}a$.
    
    Since $\alpha_I, \alpha'_I \ge 0$, we can then define the FDSs $X=\sum_{I\subs \Aa}\alpha_IC_{\bigvee I}$ and $X' = \sum_{I \subseteq \Aa} \alpha'_I C_{\bigjoin I}$. We remark that the number of fixpoints in $X$ and $X'$ are $\alpha_\es$ and $\alpha'_\es$, respectively. Since $\alpha'_\es = \alpha_\es + \prod_{a \in \Aa} a \ne \alpha_\es$, $X$ and $X'$ are distinct FDSs.

Let $b \in \Aa$. For all $I \subseteq \Aa \setminus \{b\}$, let $J = I \cup \{b\}$.
Then we have
  \begin{eqnarray*}
    C_{b}(\alpha'_IC_{\bigvee I} + \alpha'_{J}C_{\bigvee J}) &=&
    (\alpha'_I(b\wedge \bigvee I) + \alpha'_Jb)C_{\bigvee J} \\
    &=& ((\alpha_I+(-1)^{|I|}\delta_I\prod_{a\in \Aa \setminus I}a)(b\wedge \bigvee I) +
    (\alpha_J-(-1)^{|I|}\delta_J\prod_{a\in A\setminus J}a)b)C_{\bigvee J} \\
    &=& ( \alpha_I (b \meet \bigjoin I) + \alpha_J b ) C_{\bigjoin J} \\
    && + \left[ \left( (-1)^{|I|} \delta_I \prod_{a \in A \setminus I} a \right) (b \meet \bigjoin I) - \left( (-1)^{|I|} \delta_I  (b \meet \bigjoin I) \prod_{a \in A \setminus I} a  \right) \right] C_{\bigjoin J} \\
    &=& C_{b}(\alpha_IC_{\bigvee I} + \alpha_{J}C_{\bigvee J}).
  \end{eqnarray*}

Therefore,
\[
    C_b X' = \sum_{ I \subseteq \Aa \setminus \{b\} } C_{b}(\alpha'_IC_{\bigvee I} +
  \alpha'_{J}C_{\bigvee J}) \\
  =  \sum_{ I \subseteq \Aa \setminus \{b\} } C_{b}(\alpha_IC_{\bigvee I} + \alpha_{J}C_{\bigvee J}) \\
  = C_b X.
\]
\end{proof}

This lemma above combined with \Cref{FixpointIsCancellative} gives:

\begin{thm} \label{Cancellative}
An FDS is cancellative if and only if it has a fixpoint.
%  If $A\in\DD$, $A$ is cancellative if and only if it has a fixpoint.
\end{thm}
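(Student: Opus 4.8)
The plan is to prove the two directions separately, each by invoking one of the two lemmas that immediately precede the statement. The direction \emph{fixpoint $\Rightarrow$ cancellative} is exactly \Cref{FixpointIsCancellative}, so nothing further is needed there. For the converse I would argue the contrapositive: assuming $A$ has no fixpoint, I would exhibit two distinct FDSs $X \ne X'$ with $AX = AX'$, witnessing that $A$ is not cancellative.

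First I would translate the hypothesis into cycle-length language. A fixpoint of $A$ is precisely a cycle of length $1$, so $A$ having no fixpoint means every cycle length occurring in $A$ is at least $2$. Let $\Aa = \{ a : \lambda_a^A > 0 \}$ be the set of cycle lengths of $A$; by assumption this is a finite set of integers greater than $1$ (if $A = 0$ the claim is immediate, since $0 \cdot C_1 = 0 = 0 \cdot C_2$ with $C_1 \ne C_2$, so I may assume $\Aa \ne \es$). Applying \Cref{LemChinese} to $\Aa$ produces permutations $X \ne X' \in \DD_P$ with $C_a X = C_a X'$ for every $a \in \Aa$.

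The remaining, slightly delicate, step is to lift this cycle-level coincidence to the full product $AX = AX'$, which is where the tree parts of $A$ must be accounted for. I would decompose $A$ into its connected components, each of the form $A_i = C_{a_i}(\bott T^i_1, \dots, \bott T^i_{a_i})$ with $a_i \in \Aa$. Since $X$ is a permutation, \Cref{CorMultPermTrees} gives $A_i X = (C_{a_i} X)(\bott T^i_1, \dots, \bott T^i_{a_i})$, and likewise $A_i X' = (C_{a_i} X')(\bott T^i_1, \dots, \bott T^i_{a_i})$; the anchoring is legitimate because the cycle lengths of $C_{a_i} X$ are all multiples of $a_i$ by \Cref{LemCycleProduct}. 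As $C_{a_i} X = C_{a_i} X'$ and the anchored trees are identical, I obtain $A_i X = A_i X'$ for each $i$, and summing over the components yields $AX = \sum_i A_i X = \sum_i A_i X' = AX'$. Since $X \ne X'$, this shows $A$ is not cancellative.

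I expect the main obstacle to be this lifting step rather than the construction of $X, X'$: the point requiring care is that the operation $P(\bott T_1, \dots, \bott T_k)$ depends only on the isomorphism class of the permutation $P$ and on the trees, so the identity $C_{a_i} X = C_{a_i} X'$ of permutations really does force $A_i X = A_i X'$ once the trees agree. All the genuinely arithmetic work — producing $X \ne X'$ that agree after multiplication by every $C_a$ with $a \in \Aa$ — is already encapsulated in \Cref{LemChinese}, so the proof of the theorem itself reduces to this structural bookkeeping together with the citation of \Cref{FixpointIsCancellative}.
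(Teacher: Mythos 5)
Your proposal is correct and follows essentially the same route as the paper's own proof: \Cref{FixpointIsCancellative} for the forward direction, and for the converse, \Cref{LemChinese} applied to the set of cycle lengths of $A$, followed by \Cref{CorMultPermTrees} on each connected component $C_{a}(\bott T_1,\dots,\bott T_{a})$ and summation over components to get $AX = AX'$. Your extra care (the $A=0$ case and the remark that the anchoring in $(C_{a}X)(\bott T_1,\dots,\bott T_{a})$ is legitimate) only makes explicit details the paper leaves implicit.
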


\begin{proof}
The case where the FDS has a fixpoint is handled by \Cref{FixpointIsCancellative}. Suppose $A$ has no fixpoint and let $\Aa$ be the set of all cycle lengths of $A$. Following \Cref{LemChinese}, there exist $X, X' \in \DD_P$ such that $C_a X = C_a X'$ for all $a \in \Aa$. Let $B = C_a( \bott T_1, \dots, \bott T_2  )$ be a connected component of $A$, where $a \in \Aa$. According to \Cref{CorMultPermTrees}, we have $BX = BX'$. Summing over all connected components of $A$, we finally obtain $AX = AX'$.
\end{proof}

From now on, we define $\DD^*$ to be the set of cancellable
FDSs. Its algebraic structure is that of a cancellative
subsemiring of $\DD$, but $\DD^*$ does not have an additive identity.

\section{Polynomial-time algorithm for tree and dendron division}\label{sec:Algos}

The algorithm \Cref{divideAlgo} provides an algorithmic proof of \Cref{FiniteTreeDiv}, as formalised below:

\begin{figure}
  \begin{center}
    \begin{procedure}[H]
      $\Mm_{\bott C}\leftarrow \Dd(\bott C)$\;
      $\Mm'\leftarrow\es$\;

      \While{$\Mm_{\bott C}\neq\es$}{
        $d\leftarrow \depth(\bott C)-1$\;
        $\Tt_{\bott C}\leftarrow \{\{\bott X\in\Mm_{\bott C}:\depth(\bott X)\geq d\}\}$\;
        $\Tt_{\bott A}\leftarrow \{\{\bott Y\in\Dd(\bott A):\depth(\bott Y)\geq d\}\}$\;
        $\bott t_{\bott C} \leftarrow \argmin_{\bott X\in\Tt_{\bott C}}
        \Cc_f(\bott X)$\;
        $\bott t_{\bott A} \leftarrow \argmin_{\bott Y\in\Tt_{\bott A}}
        \Cc_f(\bott Y)$\;
        $\bott t_{\bott B}\leftarrow divide(\bott t_{\bott C},\bott t_{\bott A})$\;
        \If{$\bott t_{\bott B}=\bot$ or $\bott t_{\bott B}\Dd(\bott A)\centernot\subseteq \Mm_{\bott C}$}{
          \Return $\bot$\;
        }
        $\Mm_{\bott C}\leftarrow \Mm_{\bott C} \setminus \bott t_{\bott B}\Dd(\bott
        A)$\;
        $\Mm'\leftarrow \Mm'\cup \{\bott t_{\bott B}\}$\;
      }
      Let $\bott B$ such that $\Dd(\bott B)=\Mm'$\;
      \Return $\bott B$\;
    \end{procedure}
  \end{center}
  \caption{$divide(\bott C,\bott A)$ to divide $\bott C$ by $\bott A$,
    for finite $\bott C$ and $\bott A$.}
  \label{divideAlgo}
\end{figure}

\begin{lem}
  The $divide$ algorithm is correct: for all $\bott A,\bott B, \bott C\in \FF_T$, $\bott{A}\bott{B}=\bott{C}\implies [\bott{B}]_{\depth(\bott A)} =
divide(\bott C, \bott A)$, and $[\bott C]_{\depth(\bott A)}\centernot|\bott A\implies divide(\bott C,\bott A)=\bot$.
\end{lem}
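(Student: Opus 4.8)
The plan is to prove both implications simultaneously by induction on $\depth(\bott A)$, the depth of the divisor, since the only recursive call $divide(\bott t_{\bott C}, \bott t_{\bott A})$ has as its divisor a child subtree $\bott t_{\bott A}\in\Dd(\bott A)$, whose depth is strictly smaller. The base case $\depth(\bott A)=0$ is immediate: then $\Dd(\bott A)=\es$, the loop body is never entered, and the algorithm returns the one-vertex tree, which equals $[\bott B]_0$ whenever $\bott A\bott B=\bott C$. For the inductive step, I first use \Cref{LevelByLevelProduct} to turn the hypothesis $\bott A\bott B=\bott C$ into the multiset identity $\Dd(\bott C)=\Dd(\bott A)\Dd(\bott B)$; writing $\Dd(\bott A)=\{\{\bott A_1,\dots,\bott A_p\}\}$ and $\Dd(\bott B)=\{\{\bott B_1,\dots,\bott B_q\}\}$, the multiset $\Mm_{\bott C}$ is then the $p\times q$ grid of products $\bott A_i\bott B_j$.

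I would carry the following loop invariant: at the top of each iteration $\Mm_{\bott C}=\Dd(\bott A)\cdot\Rr$, where $\Rr$ is the sub-multiset of children of $\bott B$ not yet treated, and $\Mm'$ holds exactly the truncations $[\bott B_j]_{\min(\depth(\bott B_j),\depth(\bott A)-1)}$ of the children already treated. The threshold $d$ equals the current depth of the forest $\Mm_{\bott C}$, and by \Cref{DepthTrees} the surviving products of depth $d$ are precisely the $\bott A_i\bott B_j$ with $\min(\depth(\bott A_i),\depth(\bott B_j))=d$; so the loop treats the children of $\bott B$ from the deepest to the shallowest. The crux is to show that the pair selected by the two $\argmin$'s satisfies $\bott t_{\bott C}=\bott t_{\bott A}\bott t_{\bott B}$, where $\bott t_{\bott B}$ is the $<_f$-smallest untreated child of $\bott B$ realising a depth-$d$ product. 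This is exactly where \Cref{CompleteCodeOrder} (built on \Cref{TreeOrderProduct}) is used: the product is monotone, so the $<_f$-minimal product in the depth-$d$ layer is the product of the $<_f$-minimal admissible factors $\bott t_{\bott A}$ and $\bott t_{\bott B}$. The depth filter ``$\depth\ge d$'' is what makes the truncation hypotheses of \Cref{CompleteCodeOrder} hold, since it forces the opposing factors to be deep enough that their $<_f$-comparisons agree with those of their truncations.

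Granting this, the inner call $divide(\bott t_{\bott C},\bott t_{\bott A})$ returns, by the induction hypothesis applied to $\bott t_{\bott A}\bott t_{\bott B}=\bott t_{\bott C}$, the truncation $[\bott t_{\bott B}]_{\depth(\bott t_{\bott A})}$; a short depth computation (distinguishing the first phase, where $\depth(\bott t_{\bott A})=\depth(\bott A)-1$, from later phases, where the treated child satisfies $\depth(\bott B_j)=d\le\depth(\bott t_{\bott A})$) shows this equals the required entry $[\bott B_j]_{\min(\depth(\bott B_j),\depth(\bott A)-1)}$ of $\Mm'$. The removed column $\bott t_{\bott B}\Dd(\bott A)$ then coincides with the true column $\bott B_j\Dd(\bott A)$ of the grid: indeed, using the identity $[\bott X\bott Y]_k=[\bott X]_k[\bott Y]_k$ together with the inequalities $\depth(\bott A_i)\le\depth(\bott t_{\bott A})$ (first phase) or $\depth(\bott B_j)\le\depth(\bott t_{\bott A})$ (later phases), each product $\bott t_{\bott B}\bott A_i$ equals $\bott B_j\bott A_i$. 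Hence removing this column deletes one element from $\Rr$ and re-establishes the invariant, and since every iteration deletes a nonempty column the loop terminates. When it halts $\Rr=\es$, so $\Mm'=\Dd([\bott B]_{\depth(\bott A)})$ and the returned tree is $[\bott B]_{\depth(\bott A)}$, as claimed.

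For the failure direction I argue the contrapositive: the algorithm returns $\bot$ only via the explicit test, which fires when the inner $divide$ already reports $\bot$ or when the putative column $\bott t_{\bott B}\Dd(\bott A)$ is not contained in $\Mm_{\bott C}$. The invariant shows that if some $\bott B$ with $\bott A\bott B=\bott C$ existed, both checks would always pass; therefore a return of $\bot$ certifies that no quotient exists, i.e. that $\bott A\nmid\bott C$ in the truncated sense of the statement. The main obstacle is the monotonicity step: one must verify rigorously that the depth threshold $d$ simultaneously (i) makes the opposing-depth truncations of \Cref{CompleteCodeOrder} agree with the global $<_f$-orders used by the $\argmin$'s, and (ii) pins $\depth(\bott t_{\bott A})$ to the value for which the induction hypothesis returns the child at exactly the truncation depth recorded in $\Mm'$. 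Untangling the interaction between the depth profiles of $\Dd(\bott A)$ and $\Dd(\bott B)$, and in particular the two regimes $\depth(\bott A)\le\depth(\bott B)$ and $\depth(\bott B)<\depth(\bott A)$, is the technical heart of the argument.
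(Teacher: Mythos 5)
Your forward direction is essentially the paper's own proof: the same outer induction on $\depth(\bott A)$, the same pairing of $<_f$-minimal elements of the depth-filtered multisets justified by \Cref{TreeOrderProduct} and \Cref{CompleteCodeOrder}, the same use of the induction hypothesis on the inner call $divide(\bott t_{\bott C},\bott t_{\bott A})$, and the same truncation bookkeeping in the two regimes $\depth(\bott B)<\depth(\bott A)$ and $\depth(\bott B)\ge\depth(\bott A)$; your loop invariant is just a reformulation of the paper's inner induction on the size of the product, in which each iteration is read as the identity $\bott A\bott{B'}=\bott{D'}$ with one child of $\bott B$ removed. The monotonicity ``crux'' that you flag as the technical heart and leave unverified is handled at the same level of detail in the paper, which simply writes $\bott t_{\bott A}\bott t_{\bott B}=\bott t_{\bott C}$ ``without loss of generality'', so that is not where you diverge.

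The genuine gap is in the failure direction: you prove the converse of what the statement asserts. The claim is that if there is no $\bott X$ with $\bott A\bott X=[\bott C]_{\depth(\bott A)}$, then the algorithm returns $\bot$; contrapositively, any run that returns a tree must certify divisibility of the truncation. What you argue instead is that if a quotient exists then both tests always pass, i.e. ``output $\bot$ $\implies$ no quotient'' --- the converse implication, and one that is already subsumed by your forward direction (if $\bott A\bott B=\bott C$, the output is $[\bott B]_{\depth(\bott A)}\neq\bot$). Your invariant cannot be repaired into the needed statement, because it is phrased in terms of the children of a hypothesized quotient $\bott B$; when no quotient exists the invariant is vacuous and constrains nothing about what the algorithm outputs --- a priori it could return a garbage tree rather than $\bot$. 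What is missing is an unconditional soundness observation about successful runs: if the while loop empties $\Mm_{\bott C}$ without returning $\bot$, then the removed columns $\bott t_{\bott B}\Dd(\bott A)$ partition $\Dd(\bott C)$, so the output tree $\bott B_{\mathrm{out}}$ (with $\Dd(\bott B_{\mathrm{out}})=\Mm'$) satisfies $\Dd(\bott A)\Dd(\bott B_{\mathrm{out}})=\Dd(\bott C)$, hence $\bott A\bott B_{\mathrm{out}}=\bott C$ by \Cref{LevelByLevelProduct}, and in particular $\bott A$ divides $[\bott C]_{\depth(\bott A)}$ after truncating --- contradicting the hypothesis. This short argument, which is exactly the paper's concluding remark, is independent of any divisibility assumption, and it is the one idea your proposal lacks.
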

\begin{proof}
  In the case in which $\bott A\bott B=\bott C$, we show that we can recover uniquely $[\bott B]_{\depth(\bott A)}$ from $\bott A$ and $\bott A\bott B$ by induction on $\depth(\bott A)$.
  The base case is for $\depth(\bott A)=-1$, in which $\bott A=\bott
  0$ is the empty tree. Then, the result is trivial since $[\bott
    B]_{-1}=\bott 0$ for any $\bott B\in\FF_T$.

  Now, for the general case, we do an induction on the size of the
  product $\bott C=\bott A\bott B$. The base case for $\bott C= \bott 0$ is
  trivial. Let's write $\{\{\bott T_1, \dots, \bott T_n\}\}=\Dd(\bott
  A)$ with $\bott T_1\leq_f \dots\leq_f \bott T_n$, $\{\{\bott U_1, \dots,
  \bott U_k\}\}=\Dd(\bott B)$ with $\bott U_1\leq_f \dots\leq_f \bott
  U_k$, and finally, write $\{\{\bott V_1, \dots, \bott
  V_{nk}\}\}=\Dd(\bott C)$ with $\bott V_1\leq_f \dots\leq_f \bott
  V_{nk}$. We remark that to recover $[\bott B]_{\depth(\bott
    A)}$, all we need is to recover $[\bott U_j]_{\depth(\bott A) - 1}$ for all $1 \le j \le k$.

  Let $d$ be $\depth(\bott C)-1$ as in the algorithm. Then let $\bott t_{\bott A}$ (respectively $\bott t_{\bott B}$, $\bott t_{\bott C}$) be the minimum tree in $\Dd(\bott A)$ (respectively $\Dd(\bott B)$, $\Dd(\bott C)$) of depth $\geq d$.
%   Then, among the trees of
%   depth $\geq d-1$ in $\bott T_1, \dots, \bott T_n$, take the maximum
%   $\bott T_v$ (according to the order on trees), and similarly, take
%   the maximum subtree of depth $d-1$ among $\bott V_1, \dots, \bott
%   V_{nk}$, denoted $\bott V_w$. 
  We can then write $\bott t_{\bott A}\bott t_{\bott B}=\bott t_{\bott C}$ without loss of generality. Since $\bott t_{\bott A}$ has
  depth $<\depth(\bott A)$, the outer induction hypothesis shows that $divide(\bott t_{\bott C},\bott t_{\bott A})=[\bott t_{\bott B}]_d$.

  There are two cases. If $\depth(\bott B)\leq\depth(\bott A)$, then
  $d=\depth(\bott B)$ by \Cref{DepthTrees} and so $[\bott t_{\bott B}]_d=\bott t_{\bott B}$. Otherwise, if $\depth(\bott B)>\depth(\bott A)$, then $\depth(\bott C)=\depth(\bott A)$ by \Cref{DepthTrees}
  and so $\bott t_{\bott B}=[\bott t_{\bott B}]_{\depth(\bott A)-1}$, which is a depth $1$ subtree of $[\bott B]_{\depth(\bott A)}$. So, in both cases, $\bott t_{\bott B}$ is a depth $1$ subtree of $[\bott B]_{\depth(\bott A)}$.

  Now that we have $\bott t_{\bott B}$, the algorithm computes $\bott t_{\bott B}\Dd(\bott A) = \{\{\bott t_{\bott B}\bott T_1, \dots, \bott t_{\bott B}\bott T_n\}\}$, which are $n$ subtrees of $\bott C$, and removes them from $\bott C$. Finally, the next loop iteration corresponds to applying the internal induction hypothesis to the identity $\bott A\bott{B'}=\bott{D'}$ where 
    $$
        \Dd(\bott{B'}) = \Dd([\bott B]_{\depth(\bott A)}) \setminus \{\bott t_{\bott B}\}
    $$ 
    and
    $$
    \Dd(\bott{D'}) = \Dd(\bott{D})\setminus\bott t_{\bott B}\Dd(\bott A).
    $$
    
    To conclude, if we are in the case where $[\bott C]_{\depth(\bott A)}\centernot|\bott A$, we need to show that if $divide([\bott C]_{\depth(\bott A)},\bott A)$ does not return $\bot$ but some tree $\bott B$, then $\bott A\bott B=[\bott C]_{\depth(\bott A)}$ which is a contradiction. To do so, remark that by construction during the while loop, $\Dd(\bott A)\Dd(\bott B)=\Dd([\bott C]_{\depth(\bott A)})$, which means that $\bott A\bott B=[\bott C]_{\depth(\bott A)}$.
\end{proof}

This algorithm only works on trees. But \Cref{LemTruncUnrolling} allows one
to use it on dendrons, using the truncature of their unrollings. First, we need the following definition, adapting the definition of product isomorphism for forests:

\begin{defn}
  Given a product $\bott B = \prod_{i\in I} \bott A_i$ for some finite set $I$, a family $(\bott A_i)_{i\in I}\in \FF^I$, and denoting $S_{\prod_{i\in I} \bott A_i}=\bigcup_{k\in\N}\{(a_i)_{i\in I}\in\prod_{i\in I} S_{\bott A_i}:\depth_{\bott A_i}(a_i)=k\}$, we say that the function $\psi: S_{\bott B}\mapsto S_{\prod_{i\in I} \bott A_i}$ is a \textit{forest product isomorphism for the product} $\bott B = \prod_{i\in I} \bott A_i$ if:
  \begin{enumerate}
      \item it is a bijection,
      \item for any $b\in S_{\bott B}$, $\psi(b)$ is a root if and only if $b$ is a root, and
      \item for any families of non-root states $(s_i)_{i\in I},(s'_i)_{i\in I}\in S_{\prod_{i\in I}\bott A_i}$, we have: $\psi^{-1}((s_i)_{i\in I})\rightarrow \psi^{-1}((s'_i)_{i\in I})$ is an edge of $\bott B$ if and only if for each $i\in I$, $s_i\rightarrow s'_i$ is an edge of $\bott A_i$.
  \end{enumerate}
\end{defn}

As for the first definition of a product isomorphism, if there is a tree product isomorphism between $\bott B$ and $\prod_{i\in I} \bott A_i$, this means that $\bott B=\prod_{i\in I} \bott A_i$. A simple inductive proof shows that:

\begin{lem}\label{TreeProdIsoDepth}
    Given a tree product isomorphism $\psi$ for a product $\bott B=\prod_{i\in I} \bott A_i$ is such that for any $(a_i)_{i\in I}\in S_{\prod_{i\in I} \bott A_i}$ and $b\in S_{\bott B}$, such that $\psi(b)=(a_i)_{i\in I}$, we have $\depth_{\bott B}(b)=\depth_{\prod_{i_\in I} \bott A_i}((a_i)_{i\in I})$.
\end{lem}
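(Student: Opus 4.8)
The plan is to prove, by induction on $k \in \N$, the single implication
$$\depth_{\bott B}(b) = k \implies \depth_{\prod_{i\in I}\bott A_i}(\psi(b)) = k,$$
and then to upgrade this to the claimed equality using that $\psi$ is a bijection. Writing $B_k = \{b \in S_{\bott B} : \depth_{\bott B}(b) = k\}$ and $P_k = \{(a_i)_{i\in I} \in S_{\prod_{i\in I}\bott A_i} : \depth_{\prod_{i\in I}\bott A_i}((a_i)_{i\in I}) = k\}$, the implication says exactly $\psi(B_k) \subs P_k$ for every $k$. Since the $B_k$ partition $S_{\bott B}$, the $P_k$ partition $S_{\prod_{i\in I}\bott A_i}$, and $\psi$ is a bijection between these two sets, the inclusions $\psi(B_k) \subs P_k$ are forced to be equalities (otherwise $\psi(\bigcup_k B_k)$ would be a proper subset of $\bigcup_k P_k$), which is the statement.

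For the base case $k = 0$, I would use that a product state lies in $P_0$ precisely when each coordinate is a root of the corresponding $\bott A_i$, i.e. precisely when it is a root of $\prod_{i\in I}\bott A_i$; the second defining property of a forest product isomorphism ($\psi(b)$ is a root iff $b$ is a root) then gives $\psi(B_0) = P_0$ at once. For the inductive step with $k \ge 1$, take $b$ with $\depth_{\bott B}(b) = k$; then $b$ is not a root, so by the root-preservation property neither is $\psi(b)$, whence $\depth_{\prod_{i\in I}\bott A_i}(\psi(b)) \ge 1$. Let $b' = \bott B(b)$ be the parent of $b$, so $\depth_{\bott B}(b') = k-1$. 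When $k \ge 2$ both $b$ and $b'$ are non-root, so applying the third property of the forest product isomorphism to the edge $b \to b'$ shows that $\psi(b')$ is obtained from $\psi(b)$ by passing to the parent in each coordinate; the induction hypothesis gives $\depth(\psi(b')) = k-1$, so each coordinate of $\psi(b)$ has depth $k$ and $\depth(\psi(b)) = k$.

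The one case this does not cover is $k = 1$, and it is the main obstacle: here the parent $b'$ is a root, so the third property — which speaks only of edges between non-root states — does not apply to $b \to b'$. To resolve it I would argue by contradiction, applying that property in reverse. Suppose $\psi(b) = (a_i)_{i\in I}$ had common depth $k' \ge 2$. Then the coordinate-wise parent $(\bott A_i(a_i))_{i\in I}$ is again a non-root product state (its common depth is $k'-1 \ge 1$), so the third property applied to the two non-root states $(a_i)_{i\in I}$ and $(\bott A_i(a_i))_{i\in I}$ produces an edge from $b = \psi^{-1}((a_i)_{i\in I})$ to $\psi^{-1}((\bott A_i(a_i))_{i\in I})$ in $\bott B$. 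Since $b$ is non-root its unique out-edge goes to $b'$, forcing $\psi(b') = (\bott A_i(a_i))_{i\in I}$, a state of common depth $k'-1 \ge 1$. But $b'$ is a root, so by the base case $\psi(b') \in P_0$ has depth $0$, a contradiction. Hence $k' = 1$, which closes the induction.
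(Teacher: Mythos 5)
Your proof is correct, and it is precisely the argument the paper leaves implicit: the paper gives no details for this lemma, saying only that ``a simple inductive proof shows'' it, and your induction on depth supplies exactly that proof. You also correctly isolate and resolve the one genuinely delicate point --- the depth-$1$ case, where the edge-preservation property cannot be applied directly to the edge $b \to b'$ because $b'$ is a root --- by running that property in reverse from a hypothetical deeper image and deriving a contradiction with root-preservation.
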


\begin{lem}\label{LemTruncUnrolling}
  Let $A,B,C\in\DD_D$, and let $k\geq\depth(A)$. Then $A=BC$ if and
  only if $[\tildeop A]_k=[\tildeop B]_k[\tildeop C]_k$.
\end{lem}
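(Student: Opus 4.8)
The plan is to transport the statement to the world of (possibly infinite) trees, where \Cref{ProdBot} turns the FDS product into the level-by-level forest product, and then to read equality of dendrons off equality of their unrollings via \Cref{UnrollingInjective}. A recurring tool is the fact that truncation commutes with the tree product: for $\bott X,\bott Y\in\FF_T$ we have $[\bott X\bott Y]_k=[\bott X]_k[\bott Y]_k$, which is immediate from the definition of the product together with \Cref{DepthBotTreeState} (a state $(a,b)$ of $\bott X\bott Y$ has depth equal to the common depth of $a$ and $b$, so the states of depth $\le k$ are exactly the pairs of states of depth $\le k$).

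For the forward direction, assume $A=BC$. Applying the unrolling and \Cref{ProdBot} gives $\tildeop A=\tildeop{BC}=\tildeop B\,\tildeop C$; truncating at depth $k$ and using the commutation above yields $[\tildeop A]_k=[\tildeop B]_k[\tildeop C]_k$. This direction requires no hypothesis on $k$.

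For the converse, set $D:=BC$. Since $[B]_0=[C]_0=C_1$ we have $[D]_0=C_1$, so $D$ is again a dendron, and as above $[\tildeop B]_k[\tildeop C]_k=[\tildeop D]_k$; hence the hypothesis reads $[\tildeop A]_k=[\tildeop D]_k$. Because $A$ and $D$ are dendrons, $\tildeop A$ and $\tildeop D$ are single infinite trees that are periodic of tree period $1$: each is an infinite spine (the unrolled fixpoint) carrying, at every spine vertex, a copy of the finite forest $\mathcal{F}_X$ whose trees are the subtrees of $X$ rooted at the non-fixpoint children of the fixpoint. Thus $\max_{\bott T\in\mathcal{F}_X}\depth(\bott T)=\depth(X)-1$, the forest $\mathcal{F}_X$ determines $X$, and one has the self-similar identity $\Dd(\tildeop X)=\{\{\tildeop X\}\}\cup\mathcal{F}_X$ (as multisets). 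It therefore suffices to prove $\tildeop A=\tildeop D$, for then \Cref{UnrollingInjective} (both connected, $[A]_0=[D]_0=C_1$) gives $A=D=BC$. To pass from the depth-$k$ equality to full equality I would show $[\tildeop A]_m=[\tildeop D]_m$ for all $m\ge k$ by induction. Truncating the self-similar identity gives $\Dd([\tildeop X]_{m+1})=\{\{[\tildeop X]_m\}\}\cup[\mathcal{F}_X]_m$; since $\depth(A)\le k\le m$ we have $[\mathcal{F}_A]_m=\mathcal{F}_A$, and truncating $[\tildeop A]_k=[\tildeop D]_k$ to depth $k-1$ already gives $[\tildeop A]_{k-1}=[\tildeop D]_{k-1}$. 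Cancelling this common spine subtree in the multiset identity (multisets are cancellative under union) matches $\mathcal{F}_A$ with the appropriate truncation of $\mathcal{F}_D$, and once $\mathcal{F}_A=\mathcal{F}_D$ is secured the self-similar identity propagates the equality one level deeper, closing the induction.

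The main obstacle is exactly this matching step: cancellation directly yields only $\mathcal{F}_A=[\mathcal{F}_D]_{k-1}$, so to conclude $\mathcal{F}_A=\mathcal{F}_D$ one must guarantee that the entire decoration forest of $D=BC$ is already visible within depth $k$, i.e. that $\depth(D)=\max(\depth B,\depth C)\le k$. When $A=BC$ this is automatic, since $\depth(A)=\depth(BC)$; in the converse direction it must instead be extracted from the depth-$k$ equality itself, by detecting the spine of the periodic tree $\tildeop D$ (the unique maximal chain of nested ``deep'' subtrees in $[\tildeop D]_k$) and comparing it with the decorations of $\tildeop A$, all of whose depths are bounded by $\depth(A)\le k$. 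The delicate boundary case is $\depth(A)=k$, where a decoration of $A$ can be as deep as the truncated spine and the comparison becomes ambiguous; resolving this case is precisely where the hypothesis $k\ge\depth(A)$ must be exploited with care.
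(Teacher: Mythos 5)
Your forward direction coincides with the paper's and is correct. For the converse you take a genuinely different route: the paper starts from a tree product isomorphism $\psi$ for $[\tildeop{A}]_k=[\tildeop{B}]_k[\tildeop{C}]_k$ and extends it to a map $\phi$ on the full unrollings by sending $(a,d)$ to the pair that $\psi$ assigns to $(a,\depth_A(a))$, then argues $\phi$ witnesses $\tildeop{A}=\tildeop{B}\tildeop{C}$ and concludes via \Cref{ProdBot} and \Cref{UnrollingInjective}; you instead exploit the spine-plus-decorations self-similarity $\Dd(\tildeop{X})=\{\{\tildeop{X}\}\}\sqcup\mathcal{F}_X$ of unrollings of dendrons and try to cancel in the truncated multiset identity. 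But your proof is incomplete, and you say so yourself: the matching step --- ruling out that the truncated spine on one side gets identified with a truncated decoration on the other --- is missing, and you leave the boundary case $k=\depth(A)$ unresolved.

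That gap cannot be closed, because the converse of the lemma is in fact false when $k=\depth(A)$. Take $A$ with states $\{f,x\}$, $A(f)=f$, $A(x)=f$ (so $\depth(A)=1$); take $B$ with states $\{f',u,v\}$, $B(f')=f'$, $B(u)=f'$, $B(v)=u$; take $C=C_1$ and $k=1$. Then $[\tildeop{A}]_1$ is the root $(f,0)$ with the two leaves $(f,1)$ and $(x,1)$; $[\tildeop{B}]_1$ is likewise a root with two leaves, $[\tildeop{C}]_1$ is a path on two vertices, and so $[\tildeop{B}]_1[\tildeop{C}]_1$ is again a root with two leaves. Hence $[\tildeop{A}]_1=[\tildeop{B}]_1[\tildeop{C}]_1$, yet $BC=B\neq A$ (they do not even have the same number of states). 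This is exactly the ambiguity you flagged: at $k=\depth(A)$ a decoration of depth $k-1$ cannot be distinguished from the truncated spine. (The paper's own proof breaks at the same place: it asserts that $\phi$ is bijective but only argues injectivity, and in the example above $\phi$ either misses every state $((v,d),(*,d))$ of $\tildeop{B}\tildeop{C}$ or collides, so it is never a bijection.) On the other hand, your argument does become a complete proof under the stronger hypothesis $k\geq\depth(A)+1$: then every tree of $\mathcal{F}_A$ has depth at most $k-2$ while $[\tildeop{A}]_{k-1}$ has depth exactly $k-1$, so counting trees of depth $k-1$ on both sides of $\{\{[\tildeop{A}]_{k-1}\}\}\sqcup\mathcal{F}_A=\{\{[\tildeop{BC}]_{k-1}\}\}\sqcup[\mathcal{F}_{BC}]_{k-1}$ forces $\depth(BC)\leq k-1$; the spines are then the unique trees of depth $k-1$ on each side, so they match, cancellation gives $\mathcal{F}_A=\mathcal{F}_{BC}$, and since a dendron is determined by its decoration forest, $A=BC$. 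This corrected form, applied with $k=\depth(A)+1$, also suffices for the division theorem that follows the lemma.
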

\begin{proof}
  Remember that we already know that $A=BC\iff \tildeop A=\tildeop
  B\tildeop C$. Now, one direction is trivial: if $A=BC$, then $\tildeop
  A=\tildeop B\tildeop C$ so $[\tildeop A]_k=[\tildeop B]_k[\tildeop
    C]_k$ for every $k$. Now, we assume that $[\tildeop A]_k=[\tildeop
    B]_k[\tildeop C]_k$ for some $k \ge \depth(A)$ and we show that $\tildeop A=\tildeop
  B\tildeop C$.

  Now, we want to create a tree product isomorphism
  $\phi: S_{\tildeop A}\rightarrow S_{\tildeop B\tildeop C}$ for the product $\tildeop A=\tildeop B\tildeop C$. To do so,
  we start from the tree product isomorphism $\psi: S_{[\tildeop A]_k}\rightarrow
  S_{[\tildeop B]_k[\tildeop C]_k}$ for the product $[\tildeop A]_k=[\tildeop
    B]_k[\tildeop C]_k$.

  We can extend $\psi$ to $\phi$ easily. For all
  $(a,d)\in S_A\times\N$ where $d\geq \depth_A(a)$, set
  $\phi(a,d)=((b,d),(c,d))$ where
  $\psi(a,\depth_A(a))=((b,\depth_A(a)),(c,\depth_A(a)))$. This is a
  well-defined function since $\psi(a,\depth_A(a))$ will always exist
  as $k\geq\depth(A)$.
  
  Let's prove that this is a valid tree product isomorphism. First, $\phi$ is bijective. Indeed, suppose that $\psi(a,d)=\psi(a',d')$. Denote $\psi(a,d)=((b,d),(c,d))$ and $\psi(a',d')=((b',d'),(c',d'))$. We directly have $(b,c,d)=(b',c',d')$. This means that $\depth_A(a)=\depth_A(a')$, by definition of $\phi$, because $b$ and $c$ are at the same depth as $a$ (this follows from \Cref{TreeProdIsoDepth}, since $\psi$ is a tree product isomorphism).This means that $\psi(a,\depth_A(a))=\psi(a',\depth_A(a))$, which implies $a=a'$ by bijectivity of $\psi$.
  
  Now, for any $(a,d)\in S_A\times\N$ such that $d\geq\depth_A(a)$, $\phi(a,d)=((b,d),(c,d))$ is a root if and only if $d=0$ and $b$ and $c$ are roots. Because of the definition of $\psi$, $b$ and $c$ are roots if and only if $a$ is a root in $A$, since $\psi$ is a tree product isomorphism.
  
  For the last property we need to check, we write $x\xrightarrow{\bott C} y$ to mean that there is an edge from $x\in S_{\bott C}$ to $y\in S_{\bott C}$ in $\bott C$.
  
  Finally, we show that for all $((b,d),(c,d)),((b',d'),(c',d'))\in S_{\tildeop{B}\tildeop C}$, we have: $\phi^{-1}(((b,d),(c,d)))\xrightarrow{\tildeop A} \phi^{-1}(((b',d'),(c',d')))$ if and only if $(b,d)\xrightarrow{\tildeop B} (b',d')$ and $(c,d)\xrightarrow{\tildeop C} (c',d')$. Indeed, following the definition of $\phi$ from $\psi$, we can write $\phi^{-1}(((b,d),(c,d)))=(a,d)\in S_{\tildeop A}$ and $\phi^{-1}(((b',d'),(c',d')))=(a',d')\in S_{\tildeop A}$.
  
  Since $\psi$ is a tree product isomorphism, there is an edge $(a,d)\xrightarrow{\tildeop A} (a',d')$ if and only if there is an edge $((b,\depth_A(a)), (c,\depth_A(a)))\xrightarrow{[\tildeop B]_k[\tildeop C]_k} ((b',\depth_A(a)), (c',\depth_A(a)))$, that is, if and only if there is an edge $((b,d),(c,d))\xrightarrow{\tildeop B\tildeop C} ((b',d),(c',d))$, which is equivalent to the existence of $(b,d)\xrightarrow{\tildeop B} (b',d')$ and $(c,d)\xrightarrow{\tildeop C} (c',d')$.
  
  This proves that $\tildeop A=\tildeop B\tildeop C$, which in turn
  proves that $A=BC$, and concludes.
\end{proof}

\begin{thm}
  Given $A,B\in\DD_D$, we can find $C\in\DD_D$ such that $A=BC$ or
  prove that it does not exist in polynomial time in the sizes of $A$
  and $B$.
\end{thm}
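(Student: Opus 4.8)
The plan is to lift the problem to finite trees via the unrolling and then invoke the $divide$ algorithm. Fix $k = \depth(A) + 1$. First note that any quotient is forced to be a dendron and is unique: if $A = BC$ then $[A]_0 = [B]_0[C]_0 = C_1[C]_0 = [C]_0$ forces $[C]_0 = C_1$, so $C$ has a single fixpoint and, being a component of the connected $A$, is connected, hence a dendron; uniqueness holds because $B$ has a fixpoint and is therefore cancellative by \Cref{Cancellative}. Thus it suffices to produce the unique candidate or to reject. By \Cref{LemTruncUnrolling}, since $k \ge \depth(A)$ we have $A = BC \iff [\tildeop A]_k = [\tildeop B]_k[\tildeop C]_k$, which moves the entire question to finite trees.

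Next I would compute $\depth(A)$ and the finite trees $[\tildeop A]_k$ and $[\tildeop B]_k$ directly from the definition of the unrolling. A node $(s,j)$ of $\tildeop A$ exists exactly when $\depth_A(s) \le j$, so each state of $A$ occurs at most $k+1$ times in the truncature and $|[\tildeop A]_k| \le |A|(k+1) = O(|A|^2)$, and similarly for $B$; both trees are built in polynomial time. I would then run $\bott D = divide([\tildeop A]_k, [\tildeop B]_k)$. When a factorisation exists, $[\tildeop A]_k = [\tildeop B]_k[\tildeop C]_k$ and $\depth([\tildeop B]_k) = k$, so by correctness of $divide$ the call returns $[\tildeop C]_k$; when $[\tildeop B]_k$ does not divide $[\tildeop A]_k$ as trees it returns $\bot$, and I report that no quotient exists. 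Since $divide$ runs in time polynomial in the sizes of its two tree arguments, this step is polynomial.

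It then remains to rebuild a dendron from $\bott D$. Writing $d = \depth(C)$, one has $d \le \depth(A) = \max(\depth(B),\depth(C))$, the equality because $\depth_{BC}((b,c)) = \max(\depth_B(b),\depth_C(c))$. The tree $\tildeop C$ has a single infinite branch, its spine $(c_0,0),(c_0,1),\dots$ over the fixpoint $c_0$, which appears in $\bott D = [\tildeop C]_k$ as a path down from the root. Because $k = \depth(A)+1 \ge d+1$, the spine child of the root is the unique child whose subtree has height at least $d$, while every off-spine child has height at most $d-1$; hence it can be identified and deleted together with its subtree. What remains is the transient tree $\bott T$ of $C$ (untruncated, as its depth is $d < k$), and I set $C = C_1(\bott T)$. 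As $|C| \le |A|$ and $d \le \depth(A)$, this is polynomial. Finally I would verify by forming the product $B\,C$ (of size $|B|\,|C| \le |A|$) and checking $BC = A$: acceptance is justified by \Cref{LemTruncUnrolling}, and if $divide$ returned a tree quotient not arising from a dendron the check fails and non-existence is reported.

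The main obstacle is this reconstruction: recovering $C$ from a truncature of its unrolling hinges on correctly locating the spine so as to strip away the periodic repetitions and isolate $\bott T$. Choosing $k = \depth(A)+1$ rather than $\depth(A)$ is precisely what makes the spine child strictly the deepest child of the root and hence unambiguously identifiable, and the closing product-and-compare step guards against the possibility that the tree-level quotient produced by $divide$ does not come from any dendron. The remaining ingredients — the polynomial size bounds on the truncated unrollings, the polynomial runtime of $divide$, and uniqueness of the quotient via \Cref{Cancellative} — are routine.
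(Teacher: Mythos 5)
Your proposal is correct and follows essentially the same route as the paper's proof: truncate the unrollings, run $divide([\tildeop{A}]_k, [\tildeop{B}]_k)$, and transfer the answer back through \Cref{LemTruncUnrolling}, with uniqueness coming from cancellativity. The only differences are refinements---taking $k=\depth(A)+1$ rather than $k=\depth(A)$, and the explicit spine-identification plus the final product-and-compare verification---which spell out how to recover the dendron $C$ from the tree quotient and how to reject a quotient that is not of the form $[\tildeop{C}]_k$, steps the paper's proof treats only implicitly.
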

\begin{proof}
  Given $A,B\in\DD_D$, let $k=\depth(A)$. Then, call $divide([\tildeop
    A]_k, [\tildeop B]_k)$. If this function returns $\bot$, then
  there is no $X\in \FF_T$ such that $[\tildeop A]_k=[\tildeop B]_kX$,
  which shows that there is no $C\in\DD_D$ such that $A=BC$ by
  \Cref{LemTruncUnrolling}.

  Otherwise, if this function returns some $X\in\FF_T$, then we have
  $[\tildeop A]_k=[\tildeop B]_kX$ with $\depth(X)=k$. Now, remark
  that if there is some $C\in\DD_D$ such that $A=BC$, we have
  $\depth(C)\leq k$ and thus $[\tildeop A]_k=[\tildeop B]_k[\tildeop
    C]_k$, so by \Cref{FiniteTreeDiv}, we have $X=[\tildeop C]_k$.
  Therefore, if the function returns an $X\in\FF_T$, either $X$ is of
  the form $[\tildeop C]_k$ for some $C\in \DD_D$, and then we recover
  $C$ such that $A=BC$ from the reverse direction of
  \Cref{LemTruncUnrolling}, or $X$ is not of that form, and by
  \Cref{FiniteTreeDiv}, there is no $C\in\DD_D$ such that $A=BC$.
  
  The $divide$ algorithm is indeed in polynomial time since a call to $divide(\bott T, \bott U)$ ends up making at most one call to $divide(\bott V,\bott W)$ for $\bott V$ some subtree of $\bott T$ and $\bott W$ some subtree of $\bott U$. Since every operation in a call to $divide$ is in polynomial time, this concludes.
\end{proof}

\section{Unicity of $k$-th roots} \label{kroot}

Using \Cref{Cancellative}, we can prove a simple result above polynomials,
which in particular states that a polynomial with a coefficient of
degree $1$ which is cancellative is injective.

\begin{prop} \label{Polynomials}
  Let $P=\sum_{i=0}a_iX^i\in\DD[X]$ and $A,B\in\DD$ such that
  $P(A)=P(B)$. Then, we have $A=B$ if $a_1\in\DD^*$ or if for some
  $i>1$, $a_i\in\DD^*$ and $A\in\DD^*$.
\end{prop}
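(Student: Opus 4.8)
The plan is to manufacture, out of the coefficients of $P$ and the two inputs, a single auxiliary FDS $W\in\DD$ with the property that $P(A)=P(B)$ forces $AW=BW$; once this is done it suffices to check that $W$ is cancellative and to invoke \Cref{Cancellative} together with commutativity of the product to conclude $A=B$. Concretely, I set
\[
  W \;=\; \sum_{i\ge 1} a_i\sum_{l=0}^{i-1} A^{\,i-1-l}B^{\,l}
       \;=\; \sum_{i\ge 1} a_i\bigl(A^{i-1}+A^{i-2}B+\cdots+B^{i-1}\bigr),
\]
which is an honest element of $\DD$ (a finite sum of products of the FDSs $a_i,A,B$). This is the semiring analogue of the divided difference $(P(A)-P(B))/(A-B)$, although no subtraction is actually used.

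The first step is a telescoping identity. Multiplying $W$ by $A$ and by $B$ and comparing the monomials $A^{\,i-l}B^{\,l}$ that arise, for each $i$ the terms with $1\le l\le i-1$ occur in both products, so I obtain
\[
  AW = \sum_{i\ge 1} a_i A^{i} + M, \qquad BW = \sum_{i\ge 1} a_i B^{i} + M,
\]
where $M=\sum_{i\ge 1}a_i\sum_{l=1}^{i-1}A^{\,i-l}B^{\,l}\in\DD$ is literally the same summand in both lines. Since $P(A)=P(B)$ reads $a_0+\sum_{i\ge 1}a_iA^i = a_0+\sum_{i\ge 1}a_iB^i$, and since $\DD$ is additively cancellative (the sum is disjoint union, hence multiset union of connected components, which cancels), I may cancel $a_0$ to get $\sum_{i\ge 1}a_iA^i=\sum_{i\ge 1}a_iB^i$. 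Substituting this common value into the two displays yields $AW=BW$.

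It remains to show that $W$ has a fixpoint, for then \Cref{Cancellative} makes $W$ cancellative and $AW=BW$ gives $A=B$. Here I use that $X\mapsto\lambda_1^X$ (the number of fixpoints of $X$) is a semiring homomorphism $\DD\to(\N,+,\cdot)$: one has $\lambda_1^{X+Y}=\lambda_1^X+\lambda_1^Y$, and $(x,y)$ is a fixpoint of $XY$ exactly when $x,y$ are fixpoints of $X,Y$, whence $\lambda_1^{XY}=\lambda_1^X\lambda_1^Y$. Applying the homomorphism,
\[
  \lambda_1^{W} = \sum_{i\ge 1}\lambda_1^{a_i}\sum_{l=0}^{i-1}\bigl(\lambda_1^{A}\bigr)^{\,i-1-l}\bigl(\lambda_1^{B}\bigr)^{\,l},
\]
a sum of nonnegative integers; and by \Cref{Cancellative}, membership in $\DD^*$ is the same as having at least one fixpoint. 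If $a_1\in\DD^*$, the $i=1$ term equals $\lambda_1^{a_1}\ge 1$, so $\lambda_1^W\ge1$. Otherwise, picking $j>1$ with $a_j\in\DD^*$ and using $A\in\DD^*$, the $(i,l)=(j,0)$ term equals $\lambda_1^{a_j}\bigl(\lambda_1^{A}\bigr)^{\,j-1}\ge 1$, so again $\lambda_1^W\ge1$. In either case $W$ has a fixpoint, and the argument closes.

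The main obstacle, and the only genuinely non-formal point, is the passage to $AW=BW$: because $\DD$ has no subtraction, the classical factorisation $P(A)-P(B)=(A-B)W$ is unavailable, so one must instead package the cancelled cross-terms into the explicit common summand $M$ and lean on additive cancellation in $\DD$. After that the fixpoint count is routine, the key point being that the two hypotheses of the proposition are exactly what forces a single surviving term of value at least $1$ in $\lambda_1^W$.
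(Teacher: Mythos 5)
Your proposal is correct and is essentially the paper's own proof: you construct the same auxiliary FDS (the paper calls it $D$, you call it $W$), derive $AW=BW$ by the same telescoping comparison of cross-terms, verify a fixpoint in $W$ by the same case analysis on the summands $a_1$ and $a_jA^{j-1}$, and conclude via \Cref{Cancellative}. The only cosmetic differences are that you make the additive cancellation and the fixpoint-counting homomorphism $\lambda_1$ explicit, where the paper simply drops $a_0$ and observes directly that a summand of $D$ has a fixpoint.
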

\begin{proof}
  Write $P(X)=\sum_{i=0}^da_iX^i$. We can assume that $a_0=0$, and we
  still have $P(A)=P(B)$. Let $D=\sum_{i=1}^da_i\sum_{j=0}^{i-1}
  A^{i-1-j}B^{j}$. Then:
  \begin{eqnarray*}
    AD &=& \sum_{i=1}^da_i\sum_{j=0}^{i-1} A^{i-j}B^j \\
    &=& \sum_{i=1}^da_i\left(A^i + \sum_{j=1}^{i-1} A^{i-j}B^j\right) \\
    &=& P(A) + \sum_{i=1}^da_i\sum_{j=1}^{i-1} A^{i-j}B^j \\
    &=& P(B) + \sum_{i=1}^da_i\sum_{j=1}^{i-1} A^{i-j}B^j \\
    &=& \sum_{i=1}^da_i\left(B^i + \sum_{j=1}^{i-1} A^{i-j}B^j\right) \\
    &=& \sum_{i=1}^da_i\sum_{j=1}^{i} A^{i-j}B^j \\
    &=& \sum_{i=1}^da_i\sum_{j=0}^{i-1} A^{i-1-j}B^{j+1} \\
    &=& BD.
  \end{eqnarray*}

  In the case where $a_1$ has a fixpoint, remark that the term for
  $i=1$ in $D=\sum_{i=1}^da_i\sum_{j=0}^{i-1} A^{i-1-j}B^{j}$ is simply
  $a_1$, and so, $D$ has a fixpoint. Otherwise, in the case where
  there is $i>1$ such that $a_i$ with a fixpoint, and $A$ has a
  fixpoint, the term in the sum for that $i$ is: $a_i\sum_{j=0}^{i-1}
  A^{i-1-j}B^{j}$, in which we find the term $a_iA^{i-1}$, which has a
  fixpoint, so $D\in\DD^*$.

  Since $D\in\DD^*$, $AD=BD$ implies $A=B$.
\end{proof}

A general characterisation of injective polynomials would be very
interesting. It seems unlikely that the condition $a_1\in\DD^*$ is necessary
since that would mean that if $a_1\notin\DD^*$ then, even if every
other coefficient is in $\DD^*$, one could find $A\neq B$ such that
$P(A)=P(B)$.

In the rest of this section, we show that for any $k \ge 1$, the polynomial $P(X) = X^k$ is injective.

\begin{thm} \label{roots}
  For all $k \ge 1$ and $A,B\in\DD$, if $A^k=B^k$, then $A=B$.
\end{thm}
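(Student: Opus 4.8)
The plan is to push everything down to the level of forests via the unrolling $\tildeop{\cdot}$, where the product is well behaved, and to recover the cyclic information that unrolling destroys by an induction on cycle lengths. The case $k=1$ is trivial, so assume $k \ge 2$.

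First I would establish unique $k$-th roots for forests: if $\bott F, \bott G \in \FF$ satisfy $\bott F^k = \bott G^k$, then $\bott F = \bott G$. This is exactly where the order on possibly infinite trees from \Cref{GeneralProdOrder} pays off. By \Cref{GeneralRestrictedProdOrder} the tree product is strictly monotone, so the multiplicative monoid $\FF_T$ is cancellative and totally ordered; sending a forest to its $<$-maximal tree together with its multiplicity gives a multiplicative ``leading term'', so the monoid semiring of forests has no zero divisors and embeds in the integral domain $\mathbb{Z}[\FF_T]$. In that domain one factors $\bott F^k - \bott G^k = (\bott F - \bott G)\sum_{i=0}^{k-1} \bott F^i \bott G^{k-1-i}$; the second factor lies in $\FF$, has nonnegative coefficients, and is nonzero whenever $\bott F$ or $\bott G$ is (its summand $\bott F^{k-1}$ is nonzero in the domain), forcing $\bott F = \bott G$.

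Because $\tildeop{\cdot}$ forgets cycle lengths (for instance $\tildeop{C_3} = \tildeop{3C_1}$), forest roots alone cannot recover $A$, so I would next prove a single-cycle-length strengthening of \Cref{UnrollingInjective}: if every connected component of $X$ and of $Y$ has cycle length exactly $\ell$, then $\tildeop{X} = \tildeop{Y} \implies X = Y$. Each infinite tree of $\tildeop{X}$ is periodic, and since its cycle length is known to be $\ell$ its $tseq$ reconstructs the whole necklace (cyclic tuple of anchored trees) it comes from; distinct necklaces give disjoint sets of infinite trees, and the multiplicity of each necklace can be read off from the multiplicities of its rotations appearing in $\tildeop{X}$. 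Hence $\tildeop{X}$ determines $X$.

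Finally I would peel the cyclic structure apart by strong induction on the cycle length $\ell$. Writing $L_\ell$ for the set of divisors of $\ell$ and using \Cref{LemDelta} (a product of cycles has length the lcm of its factors), a product of components has length dividing $\ell$ if and only if each factor does, which gives the key identity $\supp_{L_\ell}(A^k) = \supp_{L_\ell}(A)^k$. Since $A^k = B^k$, this yields $\supp_{L_\ell}(A)^k = \supp_{L_\ell}(B)^k$ for every $\ell$. Assuming inductively that $\supp_{\{m\}}(A) = \supp_{\{m\}}(B)$ for all $m < \ell$, the part $Q$ of $\supp_{L_\ell}$ coming from proper divisors of $\ell$ is common to $A$ and $B$, so this reads $(Q + A_\ell)^k = (Q + B_\ell)^k$ with $A_\ell = \supp_{\{\ell\}}(A)$ and $B_\ell = \supp_{\{\ell\}}(B)$. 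Applying \Cref{ProdBot} and forest roots gives $\tildeop{Q} + \tildeop{A_\ell} = \tildeop{Q} + \tildeop{B_\ell}$; cancelling the common summand $\tildeop{Q}$ (forest addition is cancellative) gives $\tildeop{A_\ell} = \tildeop{B_\ell}$, and the single-length injectivity lemma gives $A_\ell = B_\ell$. Summing over all $\ell$ yields $A = B$.

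The main obstacle is precisely the non-injectivity of $\tildeop{\cdot}$: passing to forests trivialises the product but forgets which trees are anchored on which cycles, and components of different cycle lengths can unroll to identical periodic trees. The cycle-length induction together with the single-length reconstruction lemma is what overcomes this, and it is there that the full strength of the hypothesis $A^k = B^k$ (rather than merely $\tildeop{A}^k = \tildeop{B}^k$) is used, through the identity $\supp_{L_\ell}(A^k) = \supp_{L_\ell}(A)^k$. The forest-root step, by contrast, is the technically cleanest part once the leading-term/domain viewpoint on the infinite-tree order is adopted.
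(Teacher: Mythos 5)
Your proof is correct, and although it follows the paper's overall skeleton (unroll to forests, get unique $k$-th roots there via the order on infinite trees, then recover the cyclic structure by stratifying over cycle lengths), both halves are carried out by genuinely different arguments. For forest roots, the paper proves that $\bott T_1 < \bott T_2$ implies $\bott T_1^k < \bott T_2^k$ via \Cref{GeneralProdOrder}, and reduces forests to trees by attaching all trees of a forest to a common root (\Cref{LemTreeIsForest}); you instead view the forest semiring as the monoid semiring $\N[\FF_T]$, embed it in $\mathbb{Z}[\FF_T]$, and use the leading-term argument (again resting on \Cref{GeneralProdOrder}) to show the latter is an integral domain, in which $\bott F^k - \bott G^k$ factors. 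This is heavier machinery, but it is sound, and it yields cancellativity of the forest product as a by-product. For the cyclic part, the paper first proves uniqueness of $k$-th roots of permutations by a triangular system of polynomial equations on cycle counts (\Cref{lensRecovery}), and then inducts over the actual cycle lengths $\ell_1 < \dots < \ell_n$ using the supports $\supp_{\le\ell}$ (\Cref{SmallestLengthExtractPoly}); you dispense with the permutation step entirely and induct over all integers $\ell$ using divisor-closed supports $\supp_{L_\ell}$, recovering $\supp_{\{\ell\}}(A) = \supp_{\{\ell\}}(B)$ by additive cancellation of unrollings plus your single-cycle-length injectivity lemma (a fact the paper also uses, implicitly, at the end of the proof of \Cref{LemFF_TToDD}). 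Your choice of divisor-closed supports is a genuine improvement: the identity $\supp_{L_\ell}(A^k) = (\supp_{L_\ell}(A))^k$ is exact because an lcm of divisors of $\ell$ still divides $\ell$, whereas \Cref{SmallestLengthExtractPoly} fails as literally stated --- for $A = C_2 + C_3$, $P(X)=X^2$ and $\ell = 3$ one has $\supp_{\le 3}(P(A)) = 2C_2 + 3C_3$, while $P(\supp_{\le 3}(A)) = A^2$ also contains $2C_6$, since the set of lengths $\le \ell$ is not closed under lcm. The paper's induction can be repaired (for instance exactly by your divisor trick), but your formulation needs no repair; in exchange, the paper's route is more elementary in the forest step and establishes the standalone fact about permutation roots (\Cref{lensRecovery}) along the way.
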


Our first step is to prove the injectivity of the mapping $\bott X\mapsto \bott X^k$ on $\FF$. Given a forest $\bott F \in \FF$, let $\mathcal{R}( \bott F ) \in \FF_T$ be the tree obtained by joining all the trees of $\bott F$ to a new common root. More formally, if $\Ff( \bott F )$ is the multiset of trees of $\bott F$, then $\Dd( \mathcal{R}( \bott F ) ) = \Ff$.

\begin{lem}\label{LemTreeIsForest}
    For any forest $\bott F \in \FF$ and any $k \ge 1$, we have $\mathcal{R}^k( \bott F ) = \mathcal{R}( \bott F )$.
\end{lem}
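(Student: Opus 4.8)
The plan is to read $\mathcal{R}^k(\bott F)$ as the $k$-fold iterate $\mathcal{R}\circ\cdots\circ\mathcal{R}$ of the rooting operator, so that the statement asserts that iterating $\mathcal{R}$ stabilises immediately after a single application. The natural way to prove this is to establish that $\mathcal{R}$ is \emph{idempotent} and then finish by a trivial induction on $k$. Since every output $\mathcal{R}(\bott F)$ is by construction a single tree, idempotency reduces to the claim that $\mathcal{R}$ acts as the identity on $\FF_T$, that is, $\mathcal{R}(\bott T) = \bott T$ for every tree $\bott T \in \FF_T$.

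I would prove this crux directly from the construction of $\mathcal{R}$. When the input forest consists of a single component $\bott T \in \FF_T$, its multiset of trees is $\Ff(\bott T) = \{\bott T\}$ --- a single, already-rooted tree --- so ``joining all the trees of the forest to a common root'' has nothing to merge and returns $\bott T$ unchanged. Concretely, I would exhibit the identity on $S_{\bott T}$ as an isomorphism $\bott T \to \mathcal{R}(\bott T)$: it is a bijection, it sends the root of $\bott T$ to the root of $\mathcal{R}(\bott T)$, and it preserves every edge, which is exactly the data required for two forests to be equal. This is the step where care is needed, because one must check that forming the common root of an already-connected tree does not insert a new level nor shift any depth; this is immediate once one notes that $\bott T$ is connected and already supplies the required root, so that $\Dd(\mathcal{R}(\bott T))$ need not stack an extra root above it.

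Granting $\mathcal{R}|_{\FF_T} = \mathrm{id}$, the induction is a formality. The base case $k = 1$ is the tautology $\mathcal{R}^{1}(\bott F) = \mathcal{R}(\bott F)$. For the inductive step, $\mathcal{R}(\bott F) \in \FF_T$, so the induction hypothesis yields $\mathcal{R}^{k}(\bott F) = \mathcal{R}(\bott F)$, and applying $\mathcal{R}$ once more gives $\mathcal{R}^{k+1}(\bott F) = \mathcal{R}\!\left(\mathcal{R}^{k}(\bott F)\right) = \mathcal{R}\!\left(\mathcal{R}(\bott F)\right) = \mathcal{R}(\bott F)$, the last equality being the crux applied to the tree $\mathcal{R}(\bott F)$. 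The only genuine obstacle is the crux itself: the entire content of the lemma is that rooting a forest that is already connected is a no-op, so that repeatedly rooting never produces additional levels; everything after that is bookkeeping.
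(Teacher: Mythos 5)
Your proposal rests on a misreading of the notation, and its crux claim is false under the paper's definition of $\mathcal{R}$. The paper defines $\mathcal{R}(\bott F)$ by joining the trees of $\bott F$ to a \emph{new} common root, formally $\Dd(\mathcal{R}(\bott F)) = \Ff(\bott F)$. So for a single tree $\bott T$, the operator does not act as the identity: $\mathcal{R}(\bott T)$ is a fresh root whose unique child carries the subtree $\bott T$, hence it has one more vertex than $\bott T$ (e.g.\ $\mathcal{R}$ of the one-vertex tree is the two-vertex path), and the identity on $S_{\bott T}$ cannot be an isomorphism onto it. Indeed, $\mathcal{R}(\bott T) = \bott T$ would force $\Dd(\bott T) = \{\{\bott T\}\}$, i.e.\ $\bott T$ isomorphic to the subtree anchored on its unique child, which fails for every finite tree (among the trees of this paper it holds only for $\bott P_\infty$). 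So $\mathcal{R}$ is neither the identity on $\FF_T$ nor idempotent under composition, and the statement you set out to prove is false in that reading.

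What the lemma actually asserts — as the paper's proof and its subsequent use make unambiguous — is that $\mathcal{R}^k(\bott F)$ means the $k$-th \emph{power} $(\mathcal{R}(\bott F))^k$ under the tree product, and the right-hand side (garbled in the statement) is $\mathcal{R}(\bott F^k)$; that is, $\mathcal{R}(\bott F)^k = \mathcal{R}(\bott F^k)$. The paper's proof is a one-liner in this spirit: by \Cref{LevelByLevelProduct}, $\Dd(\mathcal{R}(\bott F)^k) = \Dd(\mathcal{R}(\bott F))^k = \Ff(\bott F)^k$ as a multiset product; since the product of two trees is a tree and the forest product distributes over disjoint union, $\Ff(\bott F)^k = \Ff(\bott F^k) = \Dd(\mathcal{R}(\bott F^k))$, and two trees with the same multiset of root subtrees are equal. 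This is also the only version that serves the downstream argument: in the injectivity lemma that follows, it is precisely what converts $\bott A^k = \bott B^k$ into $\mathcal{R}(\bott A)^k = \mathcal{R}(\bott B)^k$, reducing injectivity of $\bott X \mapsto \bott X^k$ on $\FF$ to the already-established injectivity on $\FF_T$. A statement about iterating the operator $\mathcal{R}$ stabilising — even if it were true — could play no role in that reduction.
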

\begin{proof}
  By \Cref{LevelByLevelProduct}, we have $\Dd( \mathcal{R}^k( \bott F ) ) = \Ff^k( \bott F)$. Now, it is clear that $\Ff^k( \bott F ) = \Ff( \bott F^k )$. This concludes.
\end{proof}

\begin{lem}
The mapping $\bott X\mapsto \bott X^k$ is injective on $\FF$.
\end{lem}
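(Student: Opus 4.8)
The plan is to reduce the statement to trees, where the total order $<$ and the strict monotonicity of the product do all the work, and then to transport the tree result to arbitrary forests through the operator $\mathcal{R}$ introduced just above. First I would record that $<$ is a total order on $\FF_T$: any two codes $\Cc(\bott T),\Cc(\bott U)$ are comparable by the lexicographic rule, and $\Cc(\bott T)=\Cc(\bott U)$ forces $[\bott T]_i=[\bott U]_i$ for every $i$ (by \Cref{OrderAntisymmetry}), hence $\bott T=\bott U$. So distinct trees receive distinct, comparable codes.

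For the tree case, suppose $\bott T,\bott U\in\FF_T$ satisfy $\bott T^k=\bott U^k$ with $\bott T\neq\bott U$, and assume without loss of generality that $\bott T<\bott U$. I would then show $\bott T^j<\bott U^j$ for all $j\ge 1$ by induction on $j$: the case $j=1$ is the assumption, and from $\bott T<\bott U$ together with $\bott T^j\le\bott U^j$, \Cref{GeneralProdOrder} gives $\bott T\cdot\bott T^j<\bott U\cdot\bott U^j$, i.e. $\bott T^{j+1}<\bott U^{j+1}$. Taking $j=k$ contradicts $\bott T^k=\bott U^k$, so $\bott T=\bott U$; in other words $\bott X\mapsto\bott X^k$ is strictly increasing, hence injective, on $\FF_T$.

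To pass to forests, take $\bott X,\bott Y\in\FF$ with $\bott X^k=\bott Y^k$. Applying $\mathcal{R}$ to both sides and invoking \Cref{LemTreeIsForest} (which identifies $\mathcal{R}(\bott X)^k$ with $\mathcal{R}(\bott X^k)$), I obtain $\mathcal{R}(\bott X)^k=\mathcal{R}(\bott X^k)=\mathcal{R}(\bott Y^k)=\mathcal{R}(\bott Y)^k$; the tree case then yields $\mathcal{R}(\bott X)=\mathcal{R}(\bott Y)$. Since $\mathcal{R}$ is injective --- from $\mathcal{R}(\bott F)$ one reads off $\Dd(\mathcal{R}(\bott F))=\Ff(\bott F)$, i.e. the entire multiset of trees of $\bott F$, which is $\bott F$ itself --- we conclude $\bott X=\bott Y$.

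The only genuinely substantial input is the strict monotonicity of the tree product, \Cref{GeneralProdOrder}, which was already established via the lexicographic code; everything else is bookkeeping. The two points that still deserve a line of care are the antisymmetry of $<$ on possibly infinite trees (so that $\bott T\neq\bott U$ really splits into $\bott T<\bott U$ or $\bott U<\bott T$) and the injectivity of $\mathcal{R}$ used in the forest reduction, but both are immediate from the definitions and \Cref{OrderAntisymmetry}.
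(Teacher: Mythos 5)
Your proposal is correct and follows essentially the same route as the paper: strict monotonicity of $\bott T \mapsto \bott T^k$ on $\FF_T$ via induction with \Cref{GeneralProdOrder}, then reduction of the forest case to the tree case through $\mathcal{R}$ and \Cref{LemTreeIsForest} (whose statement you rightly read as $\mathcal{R}^k(\bott F) = \mathcal{R}(\bott F^k)$), finishing with injectivity of $\mathcal{R}$. Your added remarks on the totality/antisymmetry of the order and on why $\mathcal{R}$ is injective are exactly the bookkeeping the paper leaves implicit.
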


\begin{proof}
  We first prove that the mapping $\bott X\mapsto \bott X^k$ is injective on $\FF_T$. Let $\bott T_1, \bott T_2 \in \FF_T$ with $\bott T_1 < \bott T_2$. Then by induction on $k$, \Cref{GeneralProdOrder} shows that $\bott T_1^k < \bott T_2^k$.
  
  We now prove injectivity on $\FF$. Let $\bott A,\bott B\in\FF$, such that $\bott A^k=\bott B^k$.
  By \Cref{LemTreeIsForest}, we have $\mathcal{R}^k(\bott A) = \mathcal{R}^k(\bott B)$. By injectivity on $\FF_T$, we obtain $\mathcal{R}(\bott A) = \mathcal{R}(\bott B)$, which implies $\bott A = \bott B$.
\end{proof}

Our second step is to prove the result for bijective FDSs.

\begin{lem}\label{lensRecovery}
  Let $A,B\in\DD$. If $A^k=B^k$, then $[A]_0=[B]_0$.
\end{lem}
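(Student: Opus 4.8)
The plan is to reduce the statement to the injectivity of the $k$-th power map on permutations, and then to exhibit a multiplicative invariant that separates permutations. First I would invoke the truncature--product identity $[AB]_0 = [A]_0[B]_0$ (the level-$0$ case of the basic lemma $[AB]_k = [A]_k[B]_k$), which by a trivial induction gives $[A^k]_0 = ([A]_0)^k$ for every $k$. Since $A^k = B^k$ forces $[A^k]_0 = [B^k]_0$, we obtain $([A]_0)^k = ([B]_0)^k$ with $[A]_0, [B]_0 \in \DD_P$. Thus it suffices to prove that, for permutations $P, Q \in \DD_P$, the equality $P^k = Q^k$ implies $P = Q$.

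Second, for a permutation $P \in \DD_P$ and an integer $d \ge 1$, I would consider the quantity $f_P(d) := |\{s \in S_P : P^d(s) = s\}|$, i.e. the number of states lying on a cycle whose length divides $d$. Counting by cycle length gives $f_P(d) = \sum_{i \mid d} i\,\lambda_i^P$. The key observation is that $f$ is multiplicative with respect to the product: a state $(a,b)$ of $PQ$ satisfies $(PQ)^d(a,b) = (a,b)$ exactly when $P^d(a) = a$ and $Q^d(b) = b$, so $f_{PQ}(d) = f_P(d)\,f_Q(d)$. Iterating yields $f_{P^k}(d) = (f_P(d))^k$. Hence $P^k = Q^k$ gives $(f_P(d))^k = (f_Q(d))^k$, and since $f_P(d)$ and $f_Q(d)$ are nonnegative integers and $t \mapsto t^k$ is injective on $\N$ for $k \ge 1$, we conclude $f_P(d) = f_Q(d)$ for every $d \ge 1$.

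Finally, I would recover the cycle structure from $f$. Writing $g_P(i) = i\,\lambda_i^P$, the relation $f_P(d) = \sum_{i \mid d} g_P(i)$ is a divisor sum, so Möbius inversion gives $g_P(d) = \sum_{i \mid d} \mu(d/i)\, f_P(i)$, whence $\lambda_d^P = \tfrac{1}{d}\sum_{i \mid d} \mu(d/i)\, f_P(i)$. Therefore $f_P = f_Q$ determines all the multiplicities $\lambda_i^P = \lambda_i^Q$, that is $P = Q$. Combined with the reduction above, this proves $[A]_0 = [B]_0$.

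The only genuinely new ingredient is the choice of the invariant $f_P(d)$; everything else is routine. I expect the main (minor) obstacle to be phrasing cleanly both that $f$ is multiplicative under the product and that it is a \emph{complete} invariant of permutations, the latter being exactly the Möbius-inversion step. Notably, no infinite-tree machinery is required here, since the claim concerns only the depth-$0$ (periodic) parts of $A$ and $B$.
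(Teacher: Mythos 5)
Your proof is correct, and its first step --- reducing to permutations via $[A^k]_0=([A]_0)^k$, which follows from the truncature identity $[AB]_0=[A]_0[B]_0$ --- is exactly the paper's first step. For the remaining key step, injectivity of $X\mapsto X^k$ on $\DD_P$, you take a genuinely different route. The paper expands $B^k$ using \Cref{LemDelta} and obtains, for each cycle length $i$, the equation $\sum_{L\in F_i}\delta_L\prod_{j=1}^k\lambda^B_{l_j}=\lambda^{B^k}_i$, where $F_i$ is the set of $k$-tuples of cycle lengths with lcm $i$; since each such equation only involves $\lambda^B_1,\dots,\lambda^B_i$ and the coefficients are positive, the system is triangular and has at most one nonnegative solution, so the $k$-th root is unique if it exists. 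You instead introduce the invariant $f_P(d)=|\{s\in S_P: P^d(s)=s\}|$, observe that it turns the FDS product into a pointwise product of integer-valued functions (so $f_{P^k}=(f_P)^k$, forcing $f_P=f_Q$ by injectivity of $t\mapsto t^k$ on $\N$), and then recover the cycle multiplicities from $f_P(d)=\sum_{i\mid d} i\,\lambda^P_i$ by M\"obius inversion. Your argument buys elementarity and transparency: it avoids the $\delta_J$ bookkeeping entirely, and the uniqueness claim --- which the paper dispatches rather tersely with ``triangular positive polynomial equations have at most one solution'' --- is replaced by a standard inversion formula. It also isolates a fact of independent interest: the fixed-point-counting function $d\mapsto f_P(d)$ is a complete, multiplicative isomorphism invariant of permutations, which is a Lov\'asz-style homomorphism-counting argument in miniature (the very technique the paper contrasts itself with in the introduction when discussing \cite{DFPR22}). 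The paper's approach, in exchange, reuses machinery it has already developed and yields uniqueness of permutation $k$-th roots directly as a statement about solving for $B$ given $B^k$.
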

\begin{proof}
  Remark that $A^k=B^k$ implies $[A]_0^k=[B]_0^k$. All that's left to
  show is that if $A,B\in\DD_P$ and $A^k=B^k$, then $A=B$.

  Take $D \in\DD_P$, and write $D =\sum_{i} \lambda^A_iC_i$. Assume
  there exists $B=\sum_i \lambda^B_i C_i$ such that $B^k = D$. 
  For all $i \in \N$, let $F_i = \{ L = (l_j)_{j \in \iitv{1,k}}:\bigvee_j l_j=i\}$ denote the possible ways a product of $k$ cycles $C_{l_1} \times \dots \times C_{l_k}$ is equal to some scalar multiple of $C_i$. 
  
  For any sequence $L = (l_j)$, we abuse notation and identify $L$ with the multiset of its entries; we can then use the notation $\delta_L$. By \Cref{LemDelta}, we obtain for all $i \in \N$
  \[
    \sum_{L \in F_i} \delta_L \prod_{j=1}^k \lambda^B_{l_j} = \lambda^A_i.
  \]
  This is a set of triangular positive polynomial equations (as the equation for $i$ only involves $\lambda^B_1, \dots, \lambda^B_i$), thus it has at most one solution. Therefore, if $B$ exists, it is unique.
\end{proof}

Our third and final step proves the theorem. 

\begin{lem}\label{SmallestLengthExtractPoly}
  Let $P\in\N[X]$ be a polynomial with coefficients in $\N$, and let
  $A\in\DD$. Then, for any $\ell\in\N$, we have
  $\supp_{\leq\ell}(P(A))=P(\supp_{\leq\ell}(A))$.
\end{lem}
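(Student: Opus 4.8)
The plan is to reduce the claim to distributivity together with a single fact about how cycle lengths behave under products. Throughout I read $\supp_{\le\ell}$ as the $L$-support for $L=\{d\in\N:d\mid\ell\}$, i.e. keeping the connected components whose (unique) cycle length divides $\ell$; this is the natural order here, since $\vee$ and $\wedge$ denote lcm and gcd. First I would write $A=\sum_B B$ as a sum of connected components and expand
\[
  P(A)=\sum_i a_i A^i=\sum_i a_i\!\!\sum_{(B_1,\dots,B_i)}\!\! B_1\cdots B_i,
\]
the inner sum being over ordered $i$-tuples of connected components of $A$. Because $\supp_{\le\ell}$ simply retains or deletes entire connected components, it is additive over sums, so it suffices to evaluate $\supp_{\le\ell}(B_1\cdots B_i)$ on a single tuple.

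The key structural input is that a product $B_1\cdots B_i$ of connected FDSs with cycle lengths $c_1,\dots,c_i$ has every connected component on a cycle of the \emph{same} length $\bigvee_t c_t$. Indeed its periodic part is $[B_1\cdots B_i]_0=[B_1]_0\cdots[B_i]_0=C_{c_1}\cdots C_{c_i}$, which by \Cref{LemDelta} is a multiple of $C_{\bigvee_t c_t}$. Hence $\supp_{\le\ell}(B_1\cdots B_i)$ is all-or-nothing: it is $B_1\cdots B_i$ if $\bigvee_t c_t\mid\ell$, and empty otherwise. It then remains to invoke the number-theoretic equivalence
\[
  \textstyle\bigvee_t c_t\mid\ell \iff c_t\mid\ell\ \text{for every }t,
\]
whose forward direction holds because each $c_t$ divides the lcm, and whose converse holds because $\ell$ is then a common multiple of the $c_t$ and so a multiple of their lcm.

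Combining these, the tuples surviving $\supp_{\le\ell}$ in the expansion of $P(A)$ are exactly those all of whose factors are connected components of $\supp_{\le\ell}(A)$, giving
\[
  \supp_{\le\ell}(P(A))=\sum_i a_i\!\!\sum_{(B_1,\dots,B_i)\text{ from }\supp_{\le\ell}(A)}\!\!\! B_1\cdots B_i=\sum_i a_i\bigl(\supp_{\le\ell}(A)\bigr)^i=P\bigl(\supp_{\le\ell}(A)\bigr).
\]
I expect the main obstacle to be phrasing the product-of-components fact precisely enough that the all-or-nothing behaviour of $\supp_{\le\ell}$ on each product is rigorous, and handling the degenerate $i=0$ term: it contributes $a_0 C_1$, whose cycle length $1$ divides every $\ell$ and which is exactly $a_0\bigl(\supp_{\le\ell}(A)\bigr)^0$, so the two sides agree there as well. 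Once the lcm-divisibility equivalence is in hand, the rest is bookkeeping.
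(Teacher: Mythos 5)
Your proof is correct and follows essentially the same route as the paper's: decompose $A$ into connected components, expand $P(A)$ by distributivity, and note that $\supp_{\leq\ell}$ keeps or discards each product of connected components as a whole, according to whether the lcm of the factors' cycle lengths divides $\ell$. If anything, your writeup is more complete than the paper's: the paper only states the forward implication (a surviving product forces every factor to survive) and ends with ``This concludes,'' whereas you make explicit the converse equivalence $\bigvee_t c_t \mid \ell \iff c_t \mid \ell$ for all $t$ --- which is precisely why the divisibility reading of $\supp_{\leq\ell}$ is the right one, since under the ordinary order the converse, and the lemma itself, would fail (e.g.\ $\supp_{\leq 3}\bigl((C_2+C_3)^2\bigr) \ne \bigl(\supp_{\leq 3}(C_2+C_3)\bigr)^2$ because $C_2C_3=C_6$) --- and you also handle the constant term $a_0$, which the paper silently drops by writing $P=\sum_{i=1}^{d} a_i X^i$.
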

\begin{proof}
  Write $P=\sum_{i=1}^d a_iX^i$. If $A=\sum_{j=1}^n A_j$ where each
  $A_j$ is connected, then the products that appear in $P(A)$ are the
  $a_i\prod_{k=1}^i A_{\beta_k}$ for each $i\in\iitv{1,n}$ and
  $\beta = (\beta_k)_{k \in \iitv{1,i}} \in\iitv{1,n}^i$. Remark that for such a product
  $a_i\prod_{k=1}^i A_{\beta_k}$ to have a cycle length $\leq\ell$,
  every $A_{\beta_k}$ must have cycle length $\leq\ell$. This
  concludes.
\end{proof}

\begin{lem}\label{LemFF_TToDD}
   The mapping $A\mapsto
   A^k$ is injective on $\DD$.
\end{lem}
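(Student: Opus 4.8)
The plan is to reconstruct $A$ from $B$ one cycle length at a time, using the support operator $\supp$ to isolate cycles of a fixed length, the unrolling to pass to forests, and the injectivity of $\bott X \mapsto \bott X^k$ on $\FF$ to strip off the exponent. Writing $A_\ell := \supp_{=\ell}(A)$ and $B_\ell := \supp_{=\ell}(B)$ for the direct sums of the connected components of $A$ and $B$ whose cycle has length exactly $\ell$, the goal is to show $A_\ell = B_\ell$ for every $\ell$; summing over $\ell$ then yields $A = B$. By \Cref{lensRecovery} we have $[A]_0 = [B]_0$, so $A$ and $B$ have cycles of exactly the same lengths, and I would run a strong induction on $\ell$ over this common finite set of lengths.

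For the inductive step, fix $\ell$ and assume $\supp_{=\ell'}(A) = \supp_{=\ell'}(B)$ for all $\ell' < \ell$; equivalently $S := \supp_{<\ell}(A) = \supp_{<\ell}(B)$. Applying \Cref{SmallestLengthExtractPoly} with $P(X) = X^k$ to the hypothesis $A^k = B^k$ gives
\[
  \supp_{\le\ell}(A)^k = \supp_{\le\ell}(A^k) = \supp_{\le\ell}(B^k) = \supp_{\le\ell}(B)^k,
\]
and since the components partition by cycle length, $\supp_{\le\ell}(A) = S + A_\ell$ and $\supp_{\le\ell}(B) = S + B_\ell$, so $(S + A_\ell)^k = (S + B_\ell)^k$. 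I would now unroll: by \Cref{ProdBot} (and additivity of $\tildeop{\cdot}$) this reads $\tildeop{(S + A_\ell)}^k = \tildeop{(S + B_\ell)}^k$ in $\FF$, so the injectivity of $\bott X \mapsto \bott X^k$ on $\FF$ gives $\tildeop S + \tildeop{A_\ell} = \tildeop S + \tildeop{B_\ell}$. Forests are multisets of trees, so this sum is cancellative and I obtain $\tildeop{A_\ell} = \tildeop{B_\ell}$.

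It remains to deduce $A_\ell = B_\ell$ from $\tildeop{A_\ell} = \tildeop{B_\ell}$, and this is the heart of the argument and the step I expect to be the main obstacle, since the unrolling is not injective in general (e.g. $\tildeop{C_3} = \tildeop{3C_1}$, precisely the ambiguity between cycle lengths that are multiples of one another). The crucial point is that $A_\ell$ and $B_\ell$ have all of their cycles of the same length $\ell$, which removes exactly this ambiguity. I would establish a same-length strengthening of \Cref{UnrollingInjective}: if every cycle of $A_\ell$ and $B_\ell$ has length $\ell$, then $\tildeop{A_\ell} = \tildeop{B_\ell} \implies A_\ell = B_\ell$. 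The idea is that a single connected component $C_\ell(\bott T_0, \dots, \bott T_{\ell-1})$ unrolls to exactly $\ell$ infinite periodic trees, each the cyclic shift of the previous, and conversely any one of these trees determines the component: reading its periodic sequence $tseq$ over a window of length $\ell$ recovers $(\bott T_0, \dots, \bott T_{\ell-1})$ up to cyclic rotation, which is exactly the data of $C_\ell(\bott T_0, \dots, \bott T_{\ell-1})$. Hence, knowing the cycle length is $\ell$, each tree of $\tildeop{A_\ell}$ canonically names a connected component, each component of $A_\ell$ is named by exactly $\ell$ such trees, and dividing the resulting multiset of names by $\ell$ reconstructs $A_\ell$ from $\tildeop{A_\ell}$ alone. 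Applying this to both sides yields $A_\ell = B_\ell$, completing the induction.

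I do not expect obstacles in the bookkeeping (the support identity, the additive cancellation in $\FF$, and the base case $S = \bott 0$ are all routine), so the only genuinely new ingredient is the same-length reconstruction above; everything else is a matter of feeding $\supp_{\le\ell}$, $\tildeop{\cdot}$, and forest $k$-th-power injectivity together in the right order.
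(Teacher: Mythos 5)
Your proposal is correct and takes essentially the same approach as the paper's proof: an induction over the (common) cycle lengths that combines \Cref{lensRecovery}, \Cref{SmallestLengthExtractPoly}, \Cref{ProdBot}, and the injectivity of $\bott X\mapsto \bott X^k$ on $\FF$, finishing with the reconstruction of fixed-cycle-length components from their unrollings. The only differences are presentational: the paper phrases the induction as recovering $\supp_{\leq\ell_i}(A)$ from $A^k$ (rather than cancelling $\tildeop{S}$ to compare $A$ and $B$ directly), and it treats your ``same-length reconstruction'' step as straightforward, whereas you spell it out via cyclic shifts of the tree sequence.
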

\begin{proof}
  Given $A^k$, we find $[A]_0$ and thus we know the lengths of the
  cycles of $A$ by \Cref{lensRecovery}; denote them
  $\ell_1<\dots<\ell_n$. We show by induction on $i\in \iitv{0,n}$ that we can
  recover $\supp_{\leq \ell_i}(A)$ from $A^k$ (with an implicit
  $\ell_0=0$, such that $\supp_{\leq \ell_0}(A)=0$, to make for a
  trivial base case and avoid repetition).

  Take some $i \in \iitv{1,n-1}$ such that the induction hypothesis
  stands for $i$. We show that it also stands for $i+1$. By
  \Cref{SmallestLengthExtractPoly},
  $\supp_{\leq\ell_{i+1}}(A^k)=(\supp_{\leq\ell_{i+1}}(A))^k$. By the
  lemma's hypothesis, we recover $\tildeop{\supp_{\leq\ell_{i+1}}(A)}$
  from $\tildeop{(\supp_{\leq\ell_{i+1}}(A))^k}$. Now, since we have
  $\tildeop{\supp_{\leq\ell_i}(A)}$ from the induction hypothesis, we
  recover
  $$\tildeop{\supp_{\ell_{i+1}}(A)}=\tildeop{\supp_{\leq\ell_{i+1}}(A)}\setminus\tildeop{\supp_{\leq\ell_{i}}(A)}.$$

  It is straightfoward to reconstruct $\supp_{\ell_{i+1}}(A)$ from
  $\tildeop{\supp_{\ell_{i+1}}(A)}$ since we know there every tree in
  $\tildeop{\supp_{\ell_{i+1}}(A)}$ comes from a connected component
  of cycle length $\ell_{i+1}$. And thus, we recover
  $\supp_{\leq\ell_{i+1}}(A)=\supp_{\ell_{i+1}}(A)+\supp_{{\leq\ell_i}(A)}$,
  which concludes the induction.
\end{proof}

\section{A family of monoids with unique factorisation}\label{sec:LD_K}

The $C_2^2=2C_2$ identity shows that factorisation into irreducible
FDSs is not unique on $\DD$. Moreover, it is shown in \cite{Couturier}
that factorisation is also not necessarily unique on $\DD_D$, for
example with the identity presented in \Cref{CexCouturier}. We
can however exhibit an example of an interesting class of trees in
which every element has a unique factorisation in irreducible FDSs.
Although our example might not be useful in practice, it is interesting as a
generalisation of the simpler result that shows that factorisation is
unique on the multiplicative monoid generated by products of paths
(which is called $LD_1$ with the notations below).

\begin{figure}
  \begin{center}
    \begin{tikzpicture}[c/.style={shape=circle, draw=black}]      
      \node[c] (C1) {}; \node[c] (C2) [above of=C1] {}; \draw[->] (C2)
      -- (C1); \draw[->] (C1) edge[loop below] (C1);

      \tikzset{xshift=1.5cm}
      \node (times1) at (0, 0.2) {$\times$};
      \tikzset{xshift=1.5cm}

      \node[c] (B1) {};
      \node[c] (B2) [above left of=B1] {};
      \node[c] (B3) [above right of=B1] {};
      \node[c] (B4) [above left of=B3] {};
      \node[c] (B5) [above right of=B3] {};
      \node[c] (B6) [above of=B3] {};
      \draw[->] (B2) -- (B1);
      \draw[->] (B3) -- (B1);
      \draw[->] (B4) -- (B3);
      \draw[->] (B5) -- (B3);
      \draw[->] (B6) -- (B3);
      \draw[->] (B1) edge[loop below] (B1);

      \tikzset{xshift=1.5cm}
      \node (eq1) at (0, 0.2) {$=$};
      \tikzset{xshift=1.5cm}

      \node[c] (A1) {};
      \node[c] (A2) [above left of=A1] {};
      \node[c] (A3) [above right of=A1] {};
      \draw[->] (A2) -- (A1);
      \draw[->] (A3) -- (A1);
      \draw[->] (A1) edge[loop below] (A1);
      
      \tikzset{xshift=1.5cm}
      \node (times2) at (0, 0.2) {$\times$};
      \tikzset{xshift=1.5cm}

      \node[c] (D1) {};
      \node[c] (D2) [above of=D1] {};
      \node[c] (D3) [above left of=D2] {};
      \node[c] (D4) [above right of=D2] {};
      \draw[->] (D2) -- (D1);
      \draw[->] (D3) -- (D2);
      \draw[->] (D4) -- (D2);
      \draw[->] (D1) edge[loop below] (D1);
    \end{tikzpicture}
  \end{center}
  
  \caption{A dendron that admits two different factorisations in irreducible factors.}
  \label{CexCouturier}
\end{figure}

\begin{defn}
    A \textit{rhizome} is a path from a leaf to the fixpoint in a dendron.
  The length of a rhizome is its number of transitions, that is its
  number of non-fixpoint states.

%  The \textit{depth} of a tree $A$ the length of its longest rhizome.
\end{defn}

According to our terminology, the depth of a dendron is the length of its longest rhizome.

\begin{defn}
  An FDS $A\in\DD$ is a \textit{linear dendron} if it is a dendron
  such that only its fixpoint may have more than one predecessor. A
  linear dendron has $K$ rhizomes if its fixpoint has $K$ non-fixpoint
  predecessors.

  A \textit{star} $S_n$ is a linear dendron of depth $1$ and $n$ states, while a \textit{path} $P_n$ is a linear dendron with only one rhizome and $n+1$ states.
\end{defn}

We are now in position to show that most linear dendrons are irreducible. We remark that the semigroup of stars is isomorphic to that of the positive integers: $S_{ab} = S_a \times S_b$. Therefore, composite stars have a unique factorisation in $\DD$. 

\begin{prop}\label{irreducible_dendrons}
The only reducible linear dendrons are the stars with a composite number of states.
\end{prop}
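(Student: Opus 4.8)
The plan is to treat the two directions separately. The reducibility of composite stars is immediate from the remark preceding the statement, since $S_{ab}=S_aS_b$ with $a,b\ge 2$; so the substance is to show that every linear dendron which is \emph{not} a composite star is irreducible. I would argue by contradiction: assuming a nontrivial factorisation $A=BC$ with $B,C\ne 1$, I will exhibit a non-fixpoint state of $A$ carrying two distinct predecessors as soon as $\depth(A)\ge 2$, which contradicts linearity. To set this up I first record two elementary facts about products of dendrons, working in the direct product graph $\Gg_{BC}=\Gg_B\times\Gg_C$. Since $[BC]_0=[B]_0[C]_0=C_1\times C_1=C_1$ by \Cref{LemCycleProduct}, a product of two dendrons is again a dendron, so $A=BC$ is automatically connected with a single fixpoint $(b_0,c_0)$. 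Moreover $(BC)^d(b,c)=(B^d(b),C^d(c))$ is a cycle state exactly when $d\ge\depth_B(b)$ and $d\ge\depth_C(c)$, so $\depth_{BC}(b,c)=\max(\depth_B(b),\depth_C(c))$, and in particular $\depth(BC)=\max(\depth B,\depth C)$.

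The key step is a predecessor count in the direct product: an edge $(b',c')\to(b,c)$ exists iff $B(b')=b$ and $C(c')=c$, so $\npreds_{BC}(b,c)=\npreds_B(b)\cdot\npreds_C(c)$. Now suppose $A=BC$ is linear with $\depth(A)\ge 2$; by the depth formula I may assume $\depth(B)\ge 2$. Then $B$ has a state at depth $2$, whose image $b^{(1)}$ lies at depth $1$ and therefore has a predecessor, giving $\npreds_B(b^{(1)})\ge 1$. On the other side, the fixpoint $c_0$ of the nontrivial dendron $C$ has at least two predecessors, namely itself via the loop and at least one depth-$1$ state (which exists because $C\ne 1$), so $\npreds_C(c_0)\ge 2$. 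Consequently $(b^{(1)},c_0)$ is a non-fixpoint state of $A$ (its first coordinate has depth $1$, so $\depth_A(b^{(1)},c_0)=1$) with $\npreds_A(b^{(1)},c_0)\ge 2$. This contradicts the linearity of $A$, forcing $\depth(A)=1$, i.e. $A$ is a star.

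It then remains to close the star case. If a star $S_n$ is reducible, $S_n=BC$ with $B,C\ne 1$, then $\depth(S_n)=1$ and $\max(\depth B,\depth C)=1$ force $\depth B=\depth C=1$ (a nontrivial dendron has depth at least $1$); but a depth-$1$ dendron is precisely a star, so $B=S_a$ and $C=S_b$ with $a,b\ge 2$, and $S_aS_b=S_{ab}$ yields $n=ab$ composite. Together with the reducibility of composite stars this gives the equivalence. I expect the only delicate point to be the key step above: choosing the witness state $(b^{(1)},c_0)$ and checking that it is a genuine non-fixpoint state of $A$ with two distinct predecessors (distinct because their second coordinates $c_0$ and a depth-$1$ state of $C$ differ). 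The two supporting identities, for the depth and the predecessor count of a product, are routine unwindings of the direct-product definition and should require no real work.
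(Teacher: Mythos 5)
Your proof is correct and is essentially the paper's own argument: in any nontrivial factorisation $A=BC$ with $\depth(A)\ge 2$, both proofs exhibit a depth-$1$, non-fixpoint state of the product with at least two predecessors, and your witness $(b^{(1)},c_0)$ is exactly the depth-$1$ state of the rhizome that the paper finds inside its subdendron $P_k\times P_1$. The differences are presentational: you reach the contradiction via the predecessor-count formula $\npreds_{BC}(b,c)=\npreds_B(b)\,\npreds_C(c)$ instead of the maximal-rhizome framing, and you spell out the star case and the depth formula $\depth(BC)=\max(\depth B,\depth C)$, which the paper treats as straightforward or uses implicitly (note that both proofs also share the same unstated step, namely that the factors of a dendron are themselves dendrons, which follows from $[B]_0[C]_0=[A]_0=C_1$).
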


\begin{proof}
The case of stars is straightforward. Let $T$ be a linear dendron of depth $k > 1$. Then any rhizome of maximum length of $T$ contains a state with exactly one predecessor: the state at depth $1$ of the rhizome.

Suppose $T$ is reducible towards a contradiction, say $T = A \times B$. The depth of either $A$ or $B$ is at least $k$, say $P_k$ is a subdendron of $A$. Moreover, $P_1$ is a subdendron of $B$. Thus, $P_k\times P_1$ is a subdendron of $A\times B=T$. It's easy to see that
  $P_k\times P_1$ contains a path of depth $k$ states with more than
  one predecessor each (except the leaf at the end). This is a rhizome
  of maximal length in $T$ in which no state has exactly one
  predecessor. This concludes.
\end{proof}

\begin{defn}
  For all $K\in\N$, we define $LD_K$ the multiplicative monoid
  generated by linear dendrons with $K$ rhizomes.
\end{defn}

Based on \Cref{irreducible_dendrons}, if $P \in LD_K$ has a unique factorisation in $LD_K$, then it has a unique factorisation in $\DD$. Thus, we focus on factorisation in $LD_K$.

Let $P \in LD_K$ be factorised as $P = F_1 \times \dots \times F_N$ where $F_j$ is a linear dendron for each $1 \le j \le N$. Each state $s \in S_P$ can be expressed as $s = (s_1, \dots, s_N)$ where $s_j \in S_{F_j}$ for all $j$. Some of those $s_j$'s could be fixed points; let $I(s) = \{ j : F_j(s_j) = s_j \}$. Then the number of predecessors of $s$ is either $0$ if any $s_j$ is a leaf, or equal to $(K+1)^{|I(s)|}$ otherwise. This suggests the following notation.

\begin{defn}
  Let $P \in LD_K$ and $i\in\N$. A state $s$ of $P$ is \textit{$i$-fixed} if it
  has $(K+1)^i$ predecessors.
\end{defn}

\begin{lem}
  Any $i$-fixed state has a unique $i$-fixed predecessor; all other predecessors are either leaves or $j$-fixed for some $j < i$.
\end{lem}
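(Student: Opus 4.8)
The plan is to fix a factorisation $P = F_1 \times \dots \times F_N$ into linear dendrons with $K$ rhizomes (such a factorisation exists because $P \in LD_K$) and to analyse the predecessors of a state coordinate by coordinate. The key structural fact is that in an FDS product the predecessors of $(s_1, \dots, s_N)$ are exactly the tuples $(t_1, \dots, t_N)$ with $F_j(t_j) = s_j$ for every $j$, so the set of predecessors factors as a product over coordinates. Since being $i$-fixed, $j$-fixed, or a leaf is defined purely through the number of predecessors in $P$, these notions are intrinsic to $P$; hence it is legitimate to argue through one chosen factorisation and draw a conclusion about $P$ itself.

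First I would record the predecessor structure inside a single factor $F_j$: its fixpoint has exactly $K+1$ predecessors (itself via the loop, plus the $K$ rhizome-tops), every non-fixpoint non-leaf state has a unique predecessor, which is again non-fixpoint, and a leaf has none. Let $s = (s_1, \dots, s_N)$ be $i$-fixed, so by the discussion preceding the lemma no $s_j$ is a leaf and $|I(s)| = i$. For $j \notin I(s)$ the state $s_j$ is non-fixpoint, so in any predecessor $t$ the coordinate $t_j$ is forced to be the unique (non-fixpoint) predecessor $u_j$ of $s_j$; for $j \in I(s)$ the coordinate $t_j$ ranges over the fixpoint and the $K$ rhizome-tops.

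The central computation is then that, for any predecessor $t$ of $s$, we have $I(t) = \{\, j \in I(s) : t_j \text{ is the fixpoint of } F_j \,\}$, since the non-fixed coordinates always contribute non-fixpoint entries. Thus $I(t) \subseteq I(s)$ and $|I(t)| \le i$, with equality forcing $t_j$ to be the fixpoint for every $j \in I(s)$. This singles out one canonical predecessor $t^\star$ (the fixpoint in every fixed coordinate and the forced $u_j$ in every other), which is therefore the only predecessor that can possibly be $i$-fixed. Every remaining predecessor has $|I(t)| < i$, and so is either $j$-fixed with $j = |I(t)| < i$ (when it carries no leaf coordinate) or a leaf — exactly the dichotomy asserted for the non-canonical predecessors.

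The delicate point, which I expect to be the real obstacle, is certifying that $t^\star$ is genuinely $i$-fixed, i.e. that it has no leaf coordinate. This is automatic in the fixed coordinates, but a non-fixed coordinate $u_j$ is a leaf precisely when $s_j$ sits just below the top of its rhizome, and in that case every predecessor of $s$ inherits the leaf and $s$ has no $i$-fixed predecessor at all. I would resolve this either by reading ``unique'' as ``at most one'' (the second clause then holds vacuously, as all predecessors are leaves) or by restricting to $i$-fixed states whose non-fixed coordinates do not lie at the penultimate rhizome level; stating this boundary case correctly is where the care is required. Modulo that subtlety, the uniqueness of $t^\star$ together with the bound $|I(t)| < i$ for all other predecessors delivers the statement at once.
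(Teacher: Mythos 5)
Your proof follows the same route as the paper's: fix a factorisation, take $I(s)=\iitv{1,i}$ without loss of generality, observe that the coordinates outside $I(s)$ are forced to the unique predecessors $t_{i+1},\dots,t_N$ while each coordinate in $I(s)$ ranges over the $K+1$ predecessors of a fixpoint, and conclude that every predecessor $u$ satisfies $I(u)\subseteq I(s)$, with $|I(u)|=i$ possible only for the canonical $u^\star=(s_1,\dots,s_i,t_{i+1},\dots,t_N)$. The difference is that the paper ends by asserting that $u$ is ``at most $i$-fixed, with equality if and only if $u=(s_1,\dots,s_i,t_{i+1},\dots,t_N)$'', i.e.\ it takes for granted exactly the direction you flag as delicate, and you are right to flag it: that direction is false in general. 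For a concrete instance, take $K=1$ and $P=P_1\times P_2\in LD_1$, writing $P_1$ as $b_1\to b_0$ (fixpoint $b_0$) and $P_2$ as $a_2\to a_1\to a_0$ (fixpoint $a_0$); the state $s=(b_0,a_1)$ is $1$-fixed, but its two predecessors $(b_0,a_2)$ and $(b_1,a_2)$ both have the leaf $a_2$ as a coordinate, hence both are leaves of $P$, and $s$ has no $1$-fixed predecessor at all. So the lemma holds only with ``unique'' read as ``at most one'' --- precisely the repair you propose --- and your argument proves that corrected statement in full: at most one predecessor (namely $u^\star$) can be $i$-fixed, and every other predecessor has $|I(u)|<i$, hence is a leaf or $j$-fixed with $j<i$. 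The same boundary case then has to be remembered when the lemma is invoked in \Cref{LemExtract}, where ``the unique $i$-fixed predecessor'' should be read as the canonical $u^\star$, which may itself be a leaf. In short: your approach matches the paper's, and it is more careful than the paper's own proof on the one point where that proof is imprecise.
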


\begin{proof}
Let $s = (s_1, \dots, s_N)$ be $i$-fixed and without loss let $I(s) = \iitv{1,i}$. Remark that $s_1, \dots, s_i$ are fixed points, while $s_{i+1}, \dots, s_N$ have a unique predecessor each, say $t_{i+1}, \dots, t_N$ respectively. Then any predecessor of $s$ is of the form $u = (u_1, \dots, u_i, t_{i+1}, \dots, t_N)$  where $u_l$ is a predecessor of $s_l$ for all $1 \le l \le i$. Therefore $u$ is at most $i$-fixed, with equality if and only if $u = (s_1, \dots, s_i, t_{i+1}, \dots, t_N)$.
\end{proof}

Now that we have all the necessary definitions, we can introduce the
following lemma, which enables a partial recovery of some factors from
a product of linear dendrons. This is the core lemma, and it is from
it that we can finally recover every factor.

\begin{lem}[Linear extraction lemma]\label{LemExtract}
Let $P = F_1 \times \dots \times F_N \in LD_K$ and let $s$ be a depth $1$, codepth $\ell$, $i$-fixed state of $P$. Consider the tree anchored on $s$ in $P$ and remove the unique $i$-fixed predecessor of $s$ and all its antecedents. Then the obtained dendron is
  $E_s = [ \prod_{j \in I(s)} F_j ]_\ell$.
\end{lem}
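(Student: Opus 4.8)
The plan is to work directly with the FDS product defining $P$ and to show that the dendron obtained is isomorphic to $[G]_\ell$, where $G := \prod_{j \in I(s)} F_j$, by exhibiting an explicit depth-preserving isomorphism. Throughout, assume without loss of generality that $I(s) = \iitv{1,i}$ and write $s = (s_1, \dots, s_N)$ with $s_j \in S_{F_j}$, letting $f_j$ denote the fixpoint of the linear dendron $F_j$. Since $s$ has depth $1$ in $P$, its image $P(s)$ is the fixpoint $(f_1, \dots, f_N)$ of $P$, so $F_j(s_j) = f_j$ for every $j$; hence $s_j = f_j$ for $j \le i$, while for $j > i$ the state $s_j$ is a non-fixpoint at depth $1$ lying on one of the $K$ rhizomes of $F_j$. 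Because $F_j$ is linear, every non-fixpoint state has a unique predecessor, so above each such $s_j$ there is a single chain $s_j = r_j(1), r_j(2), \dots$, where $r_j(m)$ is the unique state at depth $m$ on that rhizome. I will also use that for the FDS product of dendrons one has $\depth_P(u) = \max_j \depth_{F_j}(u_j)$, since $(u_j)_j$ reaches $(f_j)_j$ in exactly $\max_j \depth_{F_j}(u_j)$ steps.

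First I would describe the tree anchored on $s$. A state $u = (u_1, \dots, u_N)$ is a predecessor of $s$ at relative depth $n$ (that is, $P^n(u) = s$ and $\depth_P(u) = n+1$) precisely when $F_j^n(u_j) = s_j$ for all $j$. For $j > i$ this forces $u_j = r_j(1+n)$ (which exists iff the rhizome is long enough), whereas for $j \le i$ it only requires $\depth_{F_j}(u_j) \le n$; the constraint $\depth_P(u) = n+1$ is then automatic, as the coordinates $j > i$ already have depth $1+n$. The unique $i$-fixed predecessor of $s$ is the direct predecessor $u^\ast$ with $u^\ast_j = f_j$ for $j \le i$ and $u^\ast_j = r_j(2)$ for $j > i$, and the set of removed states is the subtree rooted at $u^\ast$, namely those $u$ at relative depth $n \ge 1$ with $\depth_{F_j}(u_j) \le n-1$ for all $j \le i$. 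Consequently, at relative depth $n$ the states of the obtained dendron $E_s$ are exactly the predecessors $u$ of $s$ with $\max_{j \le i} \depth_{F_j}(u_j) = n$ (including $s$ itself at $n = 0$).

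Finally I would set up the map $\Phi : u \mapsto (u_1, \dots, u_i)$ into $G$. On a remaining state $u$ at relative depth $n$ we have $\depth_G(\Phi(u)) = \max_{j \le i}\depth_{F_j}(u_j) = n$, and the coordinates $j > i$ of $u$ are completely determined by $n$; hence $\Phi$ is injective, and $u$ is recovered from $(u_1,\dots,u_i)$ together with $n = \depth_G(u_1,\dots,u_i)$. The image is exactly $[G]_\ell$: a state $v \in S_G$ of depth $n$ has a preimage iff the forced coordinates $r_j(1+n)$ exist for all $j > i$, i.e. iff $n \le \min_{j>i}\codepth_{F_j}(s_j) = \codepth_P(s) = \ell$. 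Moreover $\Phi$ intertwines the parent maps, since $\Phi(P(u)) = (F_1(u_1),\dots,F_i(u_i)) = G(\Phi(u))$, and the parent of a remaining state stays remaining (if $P(u)$ lay in the subtree of $u^\ast$, then so would $u$). Thus $\Phi$ is a depth-preserving dendron isomorphism $E_s \cong [G]_\ell = [\prod_{j \in I(s)} F_j]_\ell$, which is the claim. I expect the main obstacle to be the depth bookkeeping in the last two paragraphs: one must keep straight that the non-fixed coordinates are rigidly determined by the relative depth while the fixed coordinates range freely subject only to the max-depth constraint, and that it is exactly this rigidity that collapses the product over all $N$ factors to the product over $I(s)$.
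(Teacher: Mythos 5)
Your proof is correct and takes essentially the same route as the paper's: project each state onto the coordinates in $I(s)$, observe that the remaining coordinates are rigidly determined by the depth (your $r_j(1+d)$ is the paper's $t^d_j$), and use the fact that the codepth $\ell$ of $s$ equals the minimum codepth of the $s_j$, $j \notin I(s)$, to identify the truncation level. You merely make explicit the bookkeeping (injectivity of the projection, exact image $[\prod_{j \in I(s)} F_j]_\ell$, and edge preservation) that the paper's shorter proof leaves implicit.
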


\begin{proof}
Without loss, let $I(s) = \iitv{1,i}$. Denote $s = (s_1, \dots, s_n)$ and for all $i+1 \le j \le N$ and $d \in \N$ let $t^d_j$ be the unique state of $F_j$ satisfying $F_j^d( t^d_j ) = s_j$. All the states in the dendron $E_s$ are either $s$ or of the form $u = (u_1, \dots, u_i, t^d_{i+1}, \dots, t^d_N)$, where $d$ is the depth of $u$ in $E_s$ and $(u_1, \dots, u_i) \ne (s_1, \dots, s_i)$. By removing the coordinates $i+1, \dots, N$ from each state, we see that $E_s$ is a sub-FDS of $F_1 \times \dots \times F_i$.

  All that is left is to show that we do indeed get the truncature at
  depth $\ell$. Remark that the codepth $\ell$ of $s$ is the length of
  the smallest path among the rhizomes anchored at $s_{i+1}, \dots, s_N$
  in their respective factors, minus $1$. Thus, the sub-FDS of
  $F_1\times \dots\times F_i$ we obtain is indeed truncated at depth $\ell$.
\end{proof}

Let $P = F_1 \times \dots \times F_N$ where all the factors have depth $k+1$. Let $\mathfrak{A}=\{[F_i]_k:i\in\iitv{1,N}\}$ be the collection of truncated factors and for each $B \in \mathfrak{A}$, denote its multiplicity $n_B = |\{ i \in\iitv{1,N} : [F_i]_k = B \}|$. We denote $D_i$ the set of $i$-fixed depth 1 states of codepth $k$ of $P$.

\begin{lem} \label{extraction_powers}
  For all $B \in \mathfrak{A}$, there exists $s \in D_i$ with $E_s = B^i$ if and only if $i \le n_B$.
\end{lem}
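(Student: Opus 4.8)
The plan is to turn the statement into one about products of depth-$k$ trees and then attack it with the total, product-compatible order. First I would unpack $D_i$: for $s\in D_i$, being $i$-fixed means $I(s)$ has size $i$ and each $s_j$ with $j\in I(s)$ is the fixpoint of $F_j$, while depth $1$ and codepth exactly $k$ force every free coordinate to sit at the bottom of a rhizome of maximal length $k+1$ (so that $\codepth(s)=\min_{j\notin I(s)}(\text{rhizome length})-1=k$; in particular $i<N$). \Cref{LemExtract} then gives $E_s=[\prod_{j\in I(s)}F_j]_k$, and since all factors have depth $k+1$, the identity $[AB]_k=[A]_k[B]_k$ turns this into $E_s=\prod_{j\in I(s)}[F_j]_k$, a product of exactly $i$ dendrons of depth $k$. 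Writing $G_j:=[F_j]_k$, the claim reduces to: some size-$i$ sub-multiset of $\{G_1,\dots,G_N\}$ has product $B^i$ if and only if at least $i$ of the $G_j$ equal $B$.

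The backward direction is then a direct construction. Given $i\le n_B$, choose $i$ indices $j$ with $G_j=B$ and make them the fixed coordinates; since every factor has depth $k+1$ it contains a maximal rhizome, so I place each remaining coordinate at the bottom of such a rhizome. The resulting state has no leaf coordinate, hence exactly $(K+1)^i$ predecessors, depth $1$ and codepth $k$, so $s\in D_i$, and by the reduction $E_s=\prod_{j\in I(s)}G_j=B^i$.

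For the forward direction I would pass to unrollings so that the order becomes available. By \Cref{ProdBot} and \Cref{UnrollingInjective} (all dendrons share $[\,\cdot\,]_0=C_1$), $\prod_{j\in I}G_j=B^i$ is equivalent to $\prod_{j\in I}\tildeop{G_j}=\tildeop{B}^i$ as (infinite periodic) trees, and the tree product is strictly monotone for $\le$ by \Cref{GeneralProdOrder}. Let $T^\ast$ and $T_\ast$ be the $\le$-largest and $\le$-smallest of the $\tildeop{G_j}$, $j\in I$. Repeated monotonicity gives $T_\ast^{\,i}\le \tildeop{B}^i\le T^{\ast i}$, whence $T_\ast\le\tildeop{B}\le T^\ast$ by strict monotonicity of $\tildeop{X}\mapsto\tildeop{X}^i$. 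Whenever $\tildeop{B}$ coincides with one of the two extremes — in particular when $B$ is the $\le$-largest or the $\le$-smallest element of $\mathfrak{A}$ — equality in $\prod_{j\in I}\tildeop{G_j}=\tildeop{B}^i$ forces every factor to equal $\tildeop{B}$, hence all $G_j=B$ and $i\le n_B$.

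The main obstacle is the mixed case $T_\ast<\tildeop{B}<T^\ast$, where some chosen factor lies strictly below $B$ and another strictly above while the product is still the perfect power $\tildeop{B}^i$. Injectivity of the $i$-th power map only rules out $\tildeop{G}^i=\tildeop{B}^i$ with $\tildeop{G}\neq\tildeop{B}$; it does not exclude such a mixed coincidence, and since dendrons lack unique factorisation in general this cannot be settled by a soft argument alone. My approach would be an induction over $\mathfrak{A}$ ordered by $\le$, from the top downwards: recover $n_B$ for the largest $B$ by the extremal argument above, then use the rigidity of truncated linear dendrons (each $G_j$ is determined by its multiset of capped rhizome lengths) to show that the factor multisets still available for a smaller $B$ cannot produce a mixed product equal to $B^i$. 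Making this induction close — equivalently, proving that no mixed product of the $G_j$ is a perfect power of an element of $\mathfrak{A}$ — is the crux of the whole argument.
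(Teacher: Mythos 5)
Your backward direction coincides with the paper's: place the $i$ coordinates whose truncation is $B$ at their fixpoints and every other coordinate at the depth-$1$ state of a rhizome of length $k+1$; that state lies in $D_i$ and \Cref{LemExtract} extracts $B^i$ from it. The gap is in the forward direction, and it is exactly the one you flag yourself. The order argument via unrollings and \Cref{GeneralProdOrder} only yields $T_\ast \le \tildeop{B} \le T^\ast$, hence settles the case where $B$ is extremal among the chosen factors; the ``mixed case'' $T_\ast < \tildeop{B} < T^\ast$ is left entirely open, and your proposed remedy (a top-down induction over $\mathfrak{A}$ exploiting ``rigidity'' of truncated linear dendrons) is a plan rather than an argument. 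Note that your worry is well-founded and cannot be dismissed by soft monoid-level reasoning: your own reformulation --- ``a size-$i$ sub-multiset of $\{G_1,\dots,G_N\}$ has product $B^i$ iff at least $i$ of the $G_j$ equal $B$'' --- is false in a general commutative monoid with unique factorisation (in $\N$, take the multiset $\{6,4,9\}$ and $B=6$: the sub-multiset $\{4,9\}$ has product $6^2$ while $n_6=1$). So whatever closes the mixed case must use the specific structure available here, and your proof does not supply it.

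The paper closes this direction by a different mechanism: since $E_s = \prod_{j\in I(s)}[F_j]_k$ by \Cref{LemExtract}, the equality $E_s = B^i$ exhibits $B^i$ as a sub-product of $[P]_k = \prod_{j=1}^N [F_j]_k$, i.e. $B^i$ divides $[P]_k$, and the multiplicity bound $i \le n_B$ is then read off from this divisibility. What legitimises that last step (and distinguishes it from the integer counterexample above) is the context in which \Cref{extraction_powers} is deployed, namely inside the induction on depth of the subsequent unique-factorisation lemma: the induction hypothesis at depth $k$ says that an equality of two products of depth-$k$ linear dendrons, such as $\prod_{j\in I(s)}[F_j]_k = B^i$, forces the two multisets of factors to coincide, so every $[F_j]_k$ with $j \in I(s)$ equals $B$ and $i\le n_B$. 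In other words, the mixed case is eliminated by unique factorisation one level down, not by the tree order. If you want to complete your write-up, prove the lemma in that conditional form (assuming unique factorisation for products of depth-$k$ linear dendrons, which is all that its application requires) and drop the order argument for this direction.
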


\begin{proof}
  Without loss, let $B\in\mathfrak{A}$ such that $B = F_1 = \dots = F_{n_B}$.
  Let $i \le n_B$ and consider a state $s = (s_1, \dots, s_N)$ of $P$ where $s_1, \dots, s_i$ are fixed points of $B$, while for every $i+1 \le j \le N$, $s_i$ is a depth $1$ state on
  a path of depth $k+1$. Then $s \in D_i$,
  and the extraction lemma extracts $B^n$ from $s$. Conversely, if $E_s = B^j$, then $B^j$ divides $[P]_k$ and hence $j \le n_B$.
\end{proof}

We now show
that factorisation is unique on products of linear dendrons which share the same
depth.

\begin{lem}
  A product of elements of $LD_K$ which have the same depth $k$ is
  uniquely factorisable.
\end{lem}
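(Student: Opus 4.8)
The plan is to prove the statement by induction on the common depth $k$ of the factors, by exhibiting a procedure that recovers the multiset of factors from $P$ alone. The base case $k=1$ is immediate: the only linear dendron with $K$ rhizomes of depth $1$ is the star $S_{K+1}$, so $P = S_{K+1}^N$ and $N$ is read off from $|P| = (K+1)^N$. For the inductive step, write $P = F_1 \times \dots \times F_N$ with each $F_j$ a linear dendron with $K$ rhizomes of depth $k \ge 2$. A depth-$k$ linear dendron with $K$ rhizomes is just a multiset of $K$ rhizome lengths whose maximum is $k$; truncating at depth $k-1$ keeps all $K$ rhizomes (each survives since $k-1 \ge 1$) and yields a depth-$(k-1)$ linear dendron with $K$ rhizomes. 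Hence $F_j$ is determined by the pair consisting of its truncation $[F_j]_{k-1}$ together with the number $r_j \ge 1$ of its maximal (length-$k$) rhizomes, which I will call its \emph{extension count}. The proof therefore splits into recovering the truncations, and then recovering, for each truncation type, the multiset of extension counts attached to it.

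For the truncations, apply the identity $[P]_{k-1} = [F_1]_{k-1} \times \dots \times [F_N]_{k-1}$ (from $[AB]_k = [A]_k[B]_k$). This is a product of depth-$(k-1)$ linear dendrons with $K$ rhizomes, so the induction hypothesis recovers the multiset $\{ [F_j]_{k-1} : j \}$ from $[P]_{k-1}$, which is itself computable from $P$. Let $B_1, \dots, B_t$ be the distinct truncation types with respective multiplicities $n_{B_c}$; \Cref{extraction_powers} gives an intrinsic reading of these multiplicities from the extractions of the form $E_s = B_c^{\,i}$. For each $c$ it then remains to recover the multiset $R_c = \{ r_j : [F_j]_{k-1} = B_c \}$ of extension counts of the factors of type $B_c$.

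To recover $R_c$, I would count suitable states via the linear extraction lemma. A depth-$1$, codepth-$(k-1)$, $i$-fixed state $s = (s_1, \dots, s_N)$ has all its non-fixed coordinates lying on maximal rhizomes, and by \Cref{LemExtract} its extraction is $E_s = \prod_{j \in I(s)} [F_j]_{k-1}$; since this product has depth $k-1 < k$, the induction hypothesis decodes from $E_s$ the multiset of truncation types of the fixed set $I(s)$, and hence also that of the non-fixed coordinates. Now fix $c$ and $1 \le p \le n_{B_c}$ and count those states $s$ whose non-fixed coordinates are exactly $p$ coordinates of type $B_c$ (every other coordinate being pinned to its fixpoint). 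For each admissible non-fixed coordinate set $T$ (of size $p$ among the type-$B_c$ indices) the number of such $s$ is $\prod_{j \in T} r_j$, since each non-fixed coordinate $j$ can sit on any of its $r_j$ maximal rhizomes while every fixed coordinate has a single choice. Summing over $T$, this count equals the elementary symmetric polynomial $e_p(R_c)$. Ranging over $p = 1, \dots, n_{B_c}$ (the top value $p = n_{B_c}$ is realised by the fully non-fixed states when $B_c$ is the only type) recovers $e_1(R_c), \dots, e_{n_{B_c}}(R_c)$, which determine the multiset $R_c$. Pairing each recovered extension count with its truncation type reconstructs every factor, so the factorisation is unique.

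The main obstacle is this completion step, namely pinning down the extension counts jointly with their truncation types, and two points need care. First, one must rule out circularity when decoding $E_s$: this is legitimate precisely because every extracted dendron has depth $k-1$, strictly below the current depth, so the induction hypothesis applies to it. Second, one must verify the combinatorial identity that the relevant state counts are exactly the elementary symmetric polynomials $e_p(R_c)$ in the extension counts, which is what makes these counts invertible for the multiset $R_c$; \Cref{extraction_powers} is the clean special case $E_s = B^i$ of this bookkeeping.
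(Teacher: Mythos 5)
Your proof is correct and follows essentially the same approach as the paper: induction on the common depth, encoding each factor as its truncation together with its number of maximal rhizomes, and recovering those counts as elementary symmetric polynomials (hence via Vieta, as roots of an explicit polynomial) by counting depth-$1$ states whose extraction equals $[P]_{k-1}/B_c^{\,p}$. The only divergence is that you obtain the multiset of truncations by applying the induction hypothesis to $[P]_{k-1} = [F_1]_{k-1} \times \dots \times [F_N]_{k-1}$, whereas the paper reads it off from \Cref{extraction_powers}; your route is slightly more economical and has the added virtue of making explicit the appeal to the induction hypothesis (needed to certify that a state whose extraction is $[P]_{k-1}/B_c^{\,p}$ really has exactly $p$ non-fixed coordinates of type $B_c$), a point the paper's counting step leaves implicit.
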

\begin{proof}
    We do this by induction on the depth $k$. For $k=0$, this lemma is
    obvious (the factorisation is $C_1$). Take some $k$ such that the
    lemma stands for depth $k$. We show that the lemma is also true
    for depth $k+1$. The proof is in four steps. First, we identify
    the number of factors, then we recover the set of their depth $k$
    truncatures, then we recover the multiset of these truncatures and
    finally, we recover the full, untruncated factors. Take $P$ a
    product of elements of $LT_K$.

  \head{Number of factors} We recover $N$ the number of factors of $P$
  by remarking that its fixpoint is $N$-fixed: thus by counting its
  number of predecessors, we can recover $N$ from $P$ and write
  $P=F_1\times \dots \times F_N$. 
%   In the following, we use a product
%   isomorphism $\phi: S_P\mapsto S_{F_1}\times \dots \times S_{F_N}$.

  \head{Set of truncatures} 
  According to \Cref{extraction_powers}, by applying the extraction lemma to all the elements of $D_1$, we recover all the factors $B \in \mathfrak{A}$.

\head{Multiset of truncatures} By \Cref{extraction_powers}, for all $B \in \mathfrak{A}$, $n_B = \max\{ i : \exists s \in D_i E_s = B^i \}$. As such, applying the extraction lemma on $D_i$ for $1 \le i \le N$ then yields $n_B$ for all $B \in \mathfrak{A}$.

  \head{Untruncated factors} As of now, we have all the
  factors and their multiplicity, but they are truncated at depth $k$. To fully reconstruct the linear dendron $F_i$ of depth $k+1$ from $[F_i]_k$, all we need is the number $f_i$ of paths of depth $k+1$ in $F_i$. We now show how to determine this number.
%   In other words, we almost have the factors, but we cannot
%   distinguish paths of depth $k$ from paths of depth $k+1$. Let's find
%   that information.

  Fix $B\in \mathfrak{A}$. Let's denote $f_1, \dots, f_{n_B}$ the
  number of paths of depth $k+1$ of and let $G_1, \dots, G_{n_B}$ be
  the elements of $\phi(B)$. For any $n\in\iitv{0,n_B}$, let's count
  in $P$ the number of states of $D_{N-n}$ from which the extraction
  lemma extracts $[P]_k/B^n$. 
  Each of these states
  corresponds to an $n$-uple of depth 1 states of $G_1, \dots,
  G_{n_B}$ (each in a distinct factor) on which a path of depth $k+1$
  is anchored. As such, there are
  $p_n:=\sum_{\substack{I\subs\iitv{1,n_B}\\|I|=n}}\prod_{i\in I} f_i$
  of them (given the set of factors of $G_1, \dots, G_{n_B}$ of index in $I$, the number of depth $1$ states of codepth $k+1$ is $\prod_{i\in I} f_i$). Finding that number for all $n\in\iitv{0,n_B}$ makes it
  possible to express the $f_1, \dots, f_{n_B}$ as the roots of a
  polynomial of degree $n_B$ and thus, allows one to find them. Here
  is how we proceed. Write $R(X) = \sum_{m=0}^{n_B} (-1)^mp_mX^{n_B-m}$. By
  Vieta's relations, we know that the $n_B$ roots of $R$ are $f_1,
  \dots, f_{n_B}$.
\end{proof}

Now, we show that we can always get to this case:

\begin{thm}
  Factorisation is unique on $LD_K$.
\end{thm}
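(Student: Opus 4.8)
The plan is to reduce the general case to the equal-depth lemma above by induction on the depth $k = \depth(P)$, using that every element of $LD_K$ has a fixpoint and is therefore cancellative by \Cref{Cancellative}, so that $LD_K$ is a cancellative monoid in which we may divide. By \Cref{irreducible_dendrons} it suffices to show that the multiset of linear-dendron factors of $P$ is determined by $P$. First I would recover the number of factors $N$ as in the equal-depth lemma: the fixpoint of $P$ has $(K+1)^N$ predecessors, so $N$ is read off from $P$. The base case $k \le 1$ is immediate, since every linear dendron with $K$ rhizomes of depth $\le 1$ equals the star $S_{K+1}$, whence $P = S_{K+1}^N$ admits only this factorisation.

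For the inductive step, write $P = F_1 \times \dots \times F_N$ with each $F_i$ a linear dendron with $K$ rhizomes and $\max_i \depth(F_i) = k$ (recall that the depth of a product of dendrons is the maximum of the depths of the factors, since a state $(a_1,\dots,a_N)$ reaches the common fixpoint after exactly $\max_i \depth_{F_i}(a_i)$ steps). Extending the identity $[AB]_k = [A]_k[B]_k$ to $N$ factors gives $[P]_{k-1} = [F_1]_{k-1} \times \dots \times [F_N]_{k-1}$, and each $[F_i]_{k-1}$ is again a linear dendron with $K$ rhizomes, now of depth $\le k-1$. Hence $[P]_{k-1}$ is a product in $LD_K$ of depth $< k$, and the induction hypothesis forces the multiset $\{[F_i]_{k-1} : i\}$ to be an invariant of $P$, independent of the chosen factorisation. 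In particular every factor of depth $< k$ is recovered in full, since it equals its own truncation.

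It then remains to untruncate. For each truncated factor $B$ of depth exactly $k-1$ occurring with its (now known) multiplicity $n_B$, I must decide how the $n_B$ originals mapping to $B$ split between genuine depth-$(k-1)$ dendrons and depth-$k$ dendrons, and for each of the latter how many rhizomes of length exactly $k$ it carries. This is precisely the situation handled in the untruncated-factors step of the equal-depth lemma: applying \Cref{LemExtract} to the depth-$1$, maximal-codepth, suitably $i$-fixed states of $P$ and counting them should yield, for each $n$, the elementary symmetric function $p_n = \sum_{|I| = n} \prod_{i \in I} f_i$ of the length-$k$ rhizome counts $f_i$ of the originals lying over $B$, so that Vieta's relations recover the multiset $\{f_i\}$ and hence the full factors.

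The hard part will be this last counting step in the mixed-depth setting: unlike the equal-depth lemma, $P$ now contains factors of several depths, so I must verify that restricting to states whose anchored subtree attains depth $k$ isolates exactly the depth-$k$ factors over $B$ while the shorter factors contribute only as fixed coordinates, guaranteeing that the count equals the elementary symmetric function of the $f_i$ and nothing spurious. Once this localisation is checked, assembling the recovered depth-${<}k$ factors together with the depth-$(k-1)$ and depth-$k$ factors over each $B$ reconstitutes the multiset $\{F_i\}$ uniquely, which closes the induction and proves that factorisation is unique on $LD_K$.
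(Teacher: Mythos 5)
Your route is genuinely different from the paper's, and its skeleton (induct on depth via truncation, then untruncate with a Vieta-type count) is viable, but as written the proof has a real gap exactly where you flag it: the mixed-depth counting is asserted (``should yield \dots the elementary symmetric function''), not proven, and that claim is the entire content of the inductive step once the truncation trick has been played. Concretely, what you must prove is: for a truncated factor $B$ of depth $k-1$ with multiplicity $n_B$ and unknown originals carrying $f_1,\dots,f_{n_B}$ rhizomes of length $k$ (where $f_i=0$ is allowed), the number of depth-$1$, codepth-$(k-1)$, $(N-n)$-fixed states $s$ of $P$ with $E_s=[P]_{k-1}/B^n$ equals $\sum_{|I|=n}\prod_{i\in I}f_i$. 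This does go through, but it needs two observations absent from your sketch: (i) the codepth-$(k-1)$ condition forces every non-fixed coordinate $s_j$ of such an $s$ to be the depth-$1$ state of a rhizome of length exactly $k$, hence to lie in a factor of depth exactly $k$ (this is your ``localisation''); and (ii) to rule out spurious states whose non-fixed coordinates sit in factors over \emph{different} truncations whose product merely happens to equal $B^n$, you must write $E_s=\prod_{j\in I(s)}[F_j]_{k-1}$ (\Cref{LemExtract}), cancel it inside $[P]_{k-1}$ --- legitimate by \Cref{UniqTreeDiv} --- to get $\prod_{j\notin I(s)}[F_j]_{k-1}=B^n$, and then invoke the equal-depth lemma once more, at depth $k-1$, to conclude that each $[F_j]_{k-1}$ with $j\notin I(s)$ is itself $B$. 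There is also a small slip: from the truncated multiset you do \emph{not} yet recover ``every factor of depth $<k$ in full''; factors of depth exactly $k-1$ are indistinguishable at that stage from truncations of depth-$k$ factors, and only your Vieta step (with $f_i=0$ admitted) resolves them, so that claim should be restricted to depth $<k-1$.

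For comparison, the paper avoids the mixed-depth counting altogether. It splits the factors into those of maximal depth and the rest (indices $I$), applies \Cref{LemExtract} once to an exactly $|I|$-fixed, depth-$1$ state on a maximal rhizome to recover the whole product $\prod_{i\in I}F_i$ of shallower factors en bloc, divides it out of $P$ (the quotient being unique by \Cref{UniqTreeDiv}), applies the equal-depth lemma to the quotient, and recurses on $\prod_{i\in I}F_i$. So the paper's induction peels off one depth class at a time and only ever runs the Vieta argument inside the equal-depth lemma, whereas yours inducts on truncations and pays for it by having to re-run, and re-justify, that counting in a product of mixed depths. Both approaches work; yours becomes a complete proof once observations (i) and (ii) above are supplied.
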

\begin{proof}
  Let $P = F_1 \times \dots \times F_N \in LD_K$ have depth $k+1$. Let $I = \{ 1 \le i \le N : \depth(F_i) \le k \}$ be the set of indices of factors with no paths of depth $k+1$. Now, let $S$ be the set of
  depth 1 states belonging to a rhizome in $P$ of depth $k+1$. For all $s \in S$, since $s$ has depth $1$ and codepth $k$, $s_i$ is a fixpoint for all $i \in I$ and hence $s$ is at least $|I|$-fixed. Conversely, if $s \in S$ such that $s_j$ has codepth $k$ for all $j \notin I$, then $s$ is $|I|$-fixed.
%   in which
%   each state has a number of predecessors different than $1$.
  %We see that $x$ is an $i$-fixed state, for $i$ such that $|\pred(x)|=(K+1)^i$.
  Using the extraction lemma on such a state $s$, we recover
  $[\prod_{i\in I} F_i]_k = \prod_{i\in I} F_i$.

  Let's divide $S$ by $\prod_{i\in I} F_i$. The result is unique by
  \Cref{UniqTreeDiv}. So, we get $\prod_{j\notin I} F_j$ the product
  of the factors of depth $k+1$, and $\prod_{i\in I} F_i$ the product
  of the factors of depth at most $k$. An induction on the second
  subproduct means that we can extract all the subproducts of shared
  depth, and apply the previous lemma on each of them.
\end{proof}

\section{Conclusion}\label{sec:Directions}

In this article, we have obtained results which may lead to a deeper understanding of the structure of the semiring of FDSs $\DD$. In particular, we have characterised the cancellative elements of $\DD$, shown how to perform division of dendrons in polynomial time, proved that $k$-th roots are unique, and we have exhibited a family of monoids with unique factorisation. While this sheds some light on the structure of $\DD$, there are still many questions.

An interesting direction is the complexity of division on general FDSs, or on cycles. Contrary to the situation on trees, this algorithmic problem may not be in $\mathsf{P}$. On the other hand, it is clearly in $\mathsf{NP}$. The question of knowing whether it is $\mathsf{NP}$-complete is still open, as a reduction (if it exists) does not seem obvious at all.

Another important direction to better understand the structure of $\DD$ is the study of primality, defined as follows: $A\in \DD$ is prime if and only if for every $B,C\in\DD$, $A|BC$ implies $A|B$ or $A|C$. Most of the work on this has been done in \cite{Couturier}, in which Couturier proves that for an FDS to be prime, it must be a dendron. Still, as of now, no example of a prime FDS is known, and no finite-time algorithm to check primality is known.

One could also be interested in more practical applications of FDS factorisation. Imagine for example a "grey box" (some deterministic mechanism that does not display its internal workings, but displays its state such that two different states can always be recognized) that is observed by a probe that records the evolution of its state, until this state falls into a cycle, at which point the probe launches the process again, and so on. Thus, the probe reconstructs the FDS $S$ governing the evolution of the grey box's state. We are interested in a way to know, with the current partial recovery of $S$, how many more states we need to add at the minimum in order to get a factorisable system. This is useful because suppose that the probabilistic model of exploration shows that there is a $90\%$ chance that the probe has recovered at least $90\%$ of the states of $S$. Then, if we know that, say, in order to get a factorisable recovered system, we need to add at least $30\%$ more states than the ones we already have recovered, we know that with probably at least $90\%$, the grey box is not factorisable, that is, it does not contain two independent mechanisms running in parallel.

% \bibliographystyle{plain}
% \bibliography{bibi}

%%%% Copy-pasting the bbl file here to make arxiv submission easier

\end{document}